\newcommand{\bbN}{\mathbb{N}}
\newcommand{\bbR}{\mathbb{R}}
\DeclareMathOperator{\aggr}{aggr}
\DeclareMathOperator{\asum}{sum}
\DeclareMathOperator{\obj}{obj}
\newcommand{\vmat}[2]{v\bigl(\begin{smallmatrix}#1\\#2\end{smallmatrix}\bigr)}
\newcommand{\highlight}[1]{\textbf{\boldmath #1}}
\newcommand{\local}{\ensuremath{\mathsf{LOCAL}}\xspace}
\newcommand{\pn}{\ensuremath{\mathsf{PN}}\xspace}
\DeclareMathOperator{\ID}{ID}
\renewcommand{\vec}[1]{\mathbf{#1}}
\newcommand{\A}{\text{\textup{A}}\xspace}
\newcommand{\ONL}{\ensuremath{A}}
\newcommand{\OFF}{\OPT}
\newcommand{\OPT}{\text{\textup{OPT}}\xspace}
\newcommand{\ALG}{\text{\textup{ALG}}\xspace}
\newcommand{\mis}{\text{\textup{mis}}\xspace}
\newcommand{\seg}[1]{\textbf{\footnotesize #1}}
\renewcommand{\P}{\seg{P}}
\newcommand{\seq}{\bm{\sigma}}
\renewcommand{\L}{\seg{L}}
\newcommand{\R}{\seg{R}}
\newcommand{\W}{\seg{W}}
\newcommand{\V}{\seg{V}}
\newcommand{\U}{\seg{U}}
\renewcommand{\S}{\seg{S}}
\newcommand{\F}{\seg{F}}
\newcommand{\ac}{\mathsf{a}}
\newcommand{\bc}{\mathsf{b}}
\newcommand{\cc}{\mathsf{c}}
\newcommand{\LRU}{\textup{\textsf{LRU}}\ensuremath{_T}\xspace}
\newcommand{\costmig}{d}
\newcommand{\additive}{\alpha}
\newcommand{\first}{\textrm{first}}
\newcommand{\last}{\textrm{last}}
\newcommand{\myref}[2]{\hyperref[#2]{\ref*{#1}.\ref*{#2}}}
\title{Temporal Locality in Online Algorithms}
\author{Maciej Pacut}{Technical University of Berlin, Germany}{maciej@inet.tu-berlin.de}{https://orcid.org/0000-0002-6379-1490}{}
\author{Mahmoud Parham}{University of Vienna, Austria}{mahmoud.parham@univie.ac.at}{https://orcid.org/0000-0002-6211-077X}{}
\author{Joel Rybicki}{IST Austria, Austria}{joel.rybicki@ist.ac.at}{https://orcid.org/0000-0002-6432-6646}{}
\author{Stefan Schmid}{Technical University of Berlin, Germany and Fraunhofer SIT, Germany}{stefan_schmid@tu-berlin.de}{https://orcid.org/0000-0002-7798-1711}{}
\author{Jukka Suomela}{Aalto University, Finland}{jukka.suomela@aalto.fi}{https://orcid.org/0000-0001-6117-8089}{}
\author{Aleksandr Tereshchenko}{Aalto University, Finland}{aleksandr.tereshchenko@aalto.fi}{}{}
\authorrunning{M.~Pacut, M.~Parham, J.~Rybicki, S.~Schmid, J.~Suomela, and A.~Tereshchenko}
\keywords{Online algorithms, distributed algorithms}
\begin{document}

\maketitle

\begin{abstract}
  Online algorithms make decisions based on past inputs. In general, the decision may depend on the entire history of inputs. If many computers run the same online algorithm with the same input stream but start at different times, they do not necessarily make consistent decisions.

  In this work, we introduce \emph{time-local online algorithms}. These are online algorithms where the output at a~given time only depends on $T = O(1)$ most recent inputs. The use of (deterministic) time-local algorithms in a~distributed setting automatically leads to globally consistent decisions.
  
  We revisit caching to explore the competitiveness of classic online problems from the perspective of locality, deriving upper and lower bounds.
  The simplicity of time-local algorithms enable an \emph{algorithm synthesis method} for e.g. metrical task systems, that one can use to design optimal time-local online algorithms for small values of $T$. We demonstrate the power of synthesis in the context of a~variant of the \emph{online file migration problem}.

  We consider a~simple addition of a~clock (counting the number of inputs seen so far) to time-local algorithms, which adds significant power.
  A~large class of online algorithms that have access to \emph{all past inputs} can be transformed to competitive \emph{clocked} time-local algorithms, which implies competitive time-local algorithms for e.g., list access and binary search trees.
\end{abstract}

\newpage

\section{Introduction}
\label{sec:intro}

Online algorithms \cite{Borodin1998} make decisions based on past inputs, with the goal of being competitive against an algorithm that sees also future inputs. On the way towards optimal competitiveness, some algorithms, such as work function algorithms for metrical task systems and the $k$-server problem \cite[Ch. 9 and 10]{Borodin1998}, require access to all past inputs to make decisions, which is storage-expensive.

Some simpler but still highly competitive algorithms deliberately look only a bounded number of inputs into the past. Examples include some algorithms operating in phases: for file migration~\cite{AwerbuchBF03} or binary search trees~\cite[Ch. 1]{FiatW96}. Despite solid presence of algorithms that forget the past in the online literature, such algorithms are still not well-understood.

In this work, we introduce \emph{time-local online algorithms}; these are online algorithms in which the output at any given time is a~function of only $T$ latest inputs (instead of the full history of past inputs). By forgetting the past and having limited access to the input, time-local algorithms gain new attractive properties, which are not exhibited by general online algorithms.
Let us give three motivating examples.

\subparagraph*{Fault-Tolerant Distributed Decision.}
Time-local online algorithms lead to fault-tolerant distributed decision-making.
Consider a~setting in which many geographically distributed computers need to make \emph{consistent} decisions. All computers can observe the same input stream, and each day each of them has to announce its own decision.

If all computers are started at the same time, we can take any deterministic online algorithm and let each computer run its own copy of the algorithm. However, this approach does not \emph{tolerate failures}: if a~computer crashes and is restarted, the local state of the algorithm is lost, and as the decisions may depend in general on the full history of inputs, it will no longer make consistent decisions with the others.

Deterministic time-local online algorithms provide automatically the guarantee that all computers will make consistent decisions. The system will tolerate an arbitrary number of failures and ensure that the computers will also recover from transient faults, i.e., it is \emph{self-stabilizing}~\cite{Dolev2000,Dijkstra1974self-stabilizing}: in $T$ steps since the latest failure, all computers will deterministically make consistent decisions, without any communication.

\subparagraph*{Random Access to the Decision History.}

The second benefit of time-local online algorithms is that they make it possible to efficiently access any past decision with zero additional storage beyond the storage of the input stream. To recover a past decision at any time~$i$, it is sufficient to look up the last~$T$ inputs at time~$i$ and apply the deterministic time-local algorithm. With classic online algorithms, one would have to either store the decision, store the local state, or re-run the entire algorithm up to point $i$.

\subparagraph*{Automated Synthesis of Optimal Time-Local Algorthms.}
The simplicity of time-local algorithms allows to use \emph{computational techniques} to automate the design of time-local algorithms.
We describe and implement a~novel \emph{algorithm synthesis method} that allows us to automate the design of optimal regular time-local algorithms for a~class of \emph{local optimization problems}.

\subsection{Model: Online Problems and Time-Local Algorithms}
\label{ssec:model-online}

\subparagraph*{Online Problems as Request-Answer Games.}
Online problems are often formalized in the request-answer game framework~\cite[Ch. 7]{Borodin1998} introduced by Ben-David et al.~\cite{Ben-David1994}. A~request-answer game consists of an input set $X$ (requests), an output set $Y$ (answers), and an infinite sequence $(f_n)_{n \ge 1}$ of cost functions
\[
    f_n \colon X^n \times Y^n \to \mathbb{R} \cup \{\infty\} \textrm{ for each } n \in \mathbb{N}.
\]
The \emph{optimal offline cost} of an input sequence $ \vec x \in X^n$ is
$\OPT(\vec x) = \min \{ f_n(\vec x, \vec y) : \vec y \in Y^n \}$.

\subparagraph*{Classic Online Algorithms.}

An online algorithm $A$ in the classic sense, i.e., an algorithm that has access to all past inputs, can be defined as
a sequence $(A_i)_{i \ge 1}$ of functions $A_i \colon X^{i-1} \to Y$. The output $\vec y = A(\vec x)$ of the algorithm on input $\vec x \in X^n$ is given by
\[
    y_i = A_i(x_1, \ldots, x_{i-1}) \textrm{ for each } 1 \le i \le n.
\]
The quality of an online algorithm is measured by comparing the cost of its output against the optimal offline cost. An algorithm is said to be $c$-competitive (have a~\emph{competitive ratio} $c$) if for any input sequence $\vec x \in X^n$ its output $\vec y = A(\vec x)$ satisfies $f_n(\vec x, \vec y) \le c \cdot \OPT(\vec x) + \additive$ for a~fixed constant~$\alpha$. We say that an algorithm is \emph{strictly $c$-competitive} if additionally $\additive=0$.

\subparagraph*{Time-Local Online Algorithms.}
Fix $T \in \mathbb{N}$. A~time-local algorithm that has access to $T$ latest inputs is given by a~sequence of maps $(A_i)_{i \ge 1}$ of the form $A_i \colon (X \cup \{\bot\})^{T} \to Y$, where $\bot \notin X$. The output of the algorithm is given by
\[
    y_i = A_i(x_{i-T}, \ldots, x_{i-1}) \textrm{ for each } 1 \le i \le n,
\]
where we let $x_j = \bot$ be placeholder values for $j < 1$.

The request/answer counter $i$ is referred to as a~\emph{clock}.
We say that the algorithm is \emph{regular} if all maps $A_i$ are identical and otherwise it is \emph{clocked}. That is, in the latter case the $i$th decision $y_i$ made by the algorithm may depend on the current time step $i$.

We depict the relations between the classes of algorithms in Figure~\ref{fig:overview}, where we also include more general local distributed algorithms (which we discuss in detail in Section~\ref{sec:temporal-vs-spatial-locality}).

\subsection{Contributions}

In this paper, we formalize the new notion of temporal locality in online algorithms. We investigate the power of time-local online algorithms by focusing on two basic models, regular and clocked algorithms.
We give a~series of results and techniques that illustrate different aspects of time-local algorithms.

\subparagraph*{Charting the Landscape: Time-Local Algorithms for Known Online Problems.}
Do competitive time-local online algorithms even exist for classic online problems?
What are the trade-offs between locality and competitiveness ---
how much does the quality of solutions improve if we allow the algorithms to see further into the past? In particular,
  for a~given $T \ge 1$, what is the best achievable competitive ratio for a~time-local online algorithm that makes decisions based on the previous $T$ inputs? 

We find out that despite their restricted access to input, time-local algorithms can provide competitive solutions for many online problems.
We characterize competitiveness of time-local algorithms for classic online problems such as caching~\cite{Sleator1985} and file migration~\cite{Bartal1995}, and study the tradeoffs between locality and competitiveness for these problems.

\subparagraph*{Synthesis of Time-Local Online Algorithms.}
For online problems including metrical task systems~\cite{Borodin1992}, we automate the synthesis of optimal time-local algorithms.
By leveraging the connection to local graph algorithms, we describe and implement a~novel \emph{algorithm synthesis method} that allows us to automate the design of optimal regular time-local algorithms for a~class of \emph{local optimization problems}.
Specifically, the synthesis task can be formulated as a~certain weighted optimization problem in dual de Bruijn graphs.

For our case study problem of online file migration, we synthesize optimal deterministic algorithms for small values of $T$ and a~large range of $\costmig$. For example, we show that for unit costs ($\costmig=1$) there exists a~$3$-competitive time-local algorithm with $T=4$, which is the best competitive ratio achieved by \emph{any} deterministic online algorithm~\cite{Black1989}. Moreover, we describe how to extend our synthesis framework to obtain efficient 
\textit{randomized} algorithms.

\subparagraph*{The Power of Knowing the Time.}
We will see that some problems do not admit competitive \emph{regular} time-local algorithms. Motivated by this, we also investigate the power of \emph{clocked} algorithms, i.e., algorithms that know how many inputs have been processed so far. How much does this additional information help in obtaining competitive algorithms for problems that do not admit regular time-local algorithms?

We demonstrate that \emph{clocked} time-local algorithms can be powerful: for a~large class of online problems, classic (full-history) online algorithms can be automatically translated into \emph{clocked} time-local algorithms with negligible overhead to the competitive~ratio. % (see Section~\ref{sec:clocked-algorithms}).
This implies competitive clocked time-local algorithms for many online problems, including e.g., online list access~\cite{Sleator1985} and binary search trees~\cite{demaine2007dyynamic}.
This generalizes a known result for binary search trees to bounded monotone games: any online algorithm for binary search trees can be forced into a canonical state every $c\cdot n$ operations without affecting the runtime by more than $2
\cdot(2n-6)/c\cdot n$ times the competitive ratio per operation ($2n-6$ is the maximum rotation distance between trees ~\cite[Ch. 1]{FiatW96}). This is why the binary search trees literature only focuses on sequences of length $n$, and with the Theorem~\ref{thm:simulation}, we can consider fixed size sequences, with the size depending on the game delay and diameter.
Further, we find negative results: for some problems clock does not help to achieve high competitiveness.

\subparagraph*{Temporal vs.~Spatial Locality: Online Algorithms and Distributed Computing Meet.}
  We explore the connections between different models studied in \emph{distributed graph algorithms} and different variants of \emph{time-local online algorithms}. 
Distributed algorithms make decisions based on the \emph{local information in the spatial dimension}, while time-local online algorithms make decisions based on the \emph{local information in the temporal dimension}; see Figure~\ref{fig:overview}.
We exploit this connection, and discuss how to lift some results from theory of distributed computing to establish impossibility results for time-local algorithms.

\begin{figure}
\centering
\includegraphics[page=1,scale=0.8]{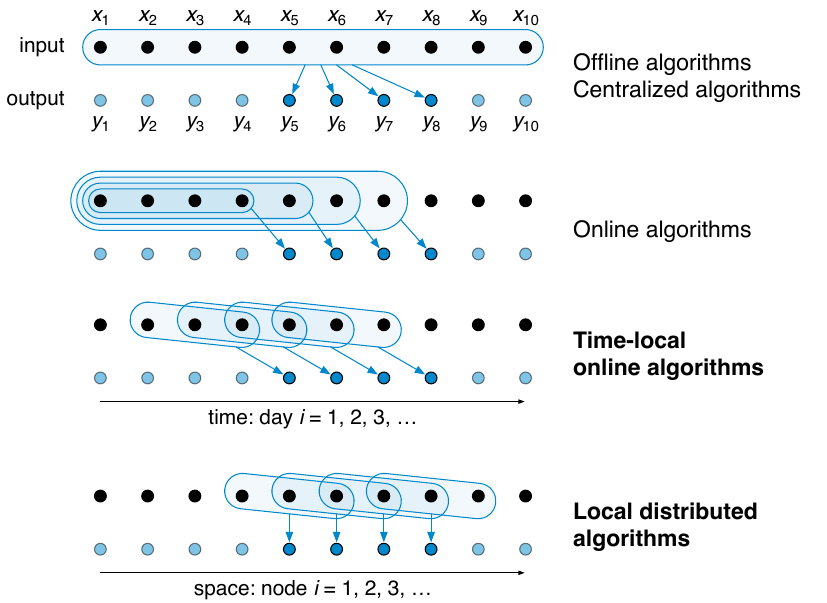}
\caption{Local decision-making in time vs.\ space dimensions.}\label{fig:overview}
\end{figure}

\subsection{Prior Work on Restricted Models of Online Computation}

  To our best knowledge, \emph{temporal locality} of online algorithms has not been systematically studied.
  However, other restricted forms of online algorithms have received some attention.
  For example, Chrobak and Larmore~\cite{Chrobak1991} introduced the notion of \emph{memoryless} online algorithms, defined for online problems with explicit notion of an external configuration of the algorithm, and the costs of transitioning between the configurations (captured by metrical task systems~\cite{Borodin1992}).
  In memoryless algorithms, the answer to the current request can only depend on the current configuration instead of being an arbitrary function of the entire past history as in general online algorithms.
  However, memoryless online algorithms differ from time-local algorithms, as memoryless algorithms have access to the configuration of the algorithm, whereas time-local algorithms are unaware of the configuration, outputted a moment ago while responding to the previous input.
  The lack of access to the current configuration distinguishes time-local algorithms from memoryless algorithms~\cite{Chrobak1991}: memoryless algorithms can store information about the past in the online algorithm's configuration, whereas time-local algorithms make decisions on the $T$ latest inputs.
  Similarly to time-local algorithms, memoryless online algorithms can be synthesized using a~fixed point approach~\cite{Chrobak1991}. 

  Ben-David et al.~\cite{Ben-David1994} investigated local online problems  within the request-answer game framework of online algorithms. 

  However, their notion of locality applies to the \emph{cost functions} defining the online problem instead of the \emph{algorithms} solving them: for these problems,
  the cost of a~solution cannot depend on inputs too far in the past.
In this work, to avoid confusion, we later refer to these games as \emph{bounded delay games}.

\section{Regular Time-Local Algorithms}
\label{sec:regular-algorithms}

Deterministic  regular time-local algorithms are functions of the last $T$ inputs.
A~time-local algorithm is given by a~map $A \colon (X \cup \{\bot\})^{T} \to Y$, where $T \in \mathbb{N}$ and $\bot \notin X$.
At time $i$, the output $y_i$ of the algorithm is given by
\[
    y_i = A(x_{i-T}, \ldots, x_{i-1}) \textrm{ for each } 1 \le i \le n,
\]
where we let $x_j = \bot$ be placeholder values for $j < 1$.

In this section, we explore competitiveness of deterministic regular online time-local algorithms given by the above definition.
We revisit the classic problem of caching with 
the goal to design time-local algorithms performing as closely as possible to an optimal offline algorithm.
Then, we introduce a novel synthesis method, which we use to synthesize optimal algorithms for online file migration problem. We generalize and analyze the synthesized algorithms and derive and lower bounds for the problem.

\subsection{Time-Local Algorithms for Caching}
\label{ssec:caching}

\subparagraph*{Online Caching Problem.}
In the \emph{online caching} problem~\cite{Sleator1985}, we manage a two-level memory hierarchy, consisting of a slow memory, storing the set of all $n$ pages, and a fast memory, called \emph{cache} that can store any size $k$ subset of pages.
We are given a sequence $\sigma$ of requests to the pages.
If a requested page is not in the cache, a \emph{page fault} occurs, and the page must be moved to the cache. As the size of the cache is limited, we must specify which page to evict to make space for the requested page.
The goal is to minimize the number of page faults.

Note that in general, 
there is no unique way to encode an online problem as a request-answer game.
When considering the time-local setting, the encoding of a problem
should avoid complex actions with e.g.\ effects depending on the past. In our encoding of the caching problem, the input set $X = \{ 1, \ldots, n \}$ coincides with the set of all pages, and the set
$Y = \{ A \subseteq X : |A| \le k \}$ of outputs coincides with possible cache configurations (sets of at most $k$ pages).

\subparagraph*{A Lower Bound.}
We start with a~simple lower bound for deterministic regular time-local algorithms,
showing that in the worst-case, the cost of any time-local algorithm for the caching problem inevitably grows with the input sequence.

\begin{theorem}
  Every deterministic time-local online algorithm with a fixed visible horizon $T$ for caching incurs the cost at least $\min\{k-1, n-k\}\cdot |\sigma| / (T+k-1)$, where $k$ is the size of the cache and $n$ is the number of pages, and $\sigma$ is the input sequence.
  This cost can be incurred even if the cost of an optimal offline algorithm is bounded by a constant.
  \label{thm:additive-caching}
\end{theorem}

\begin{proof}
  Deterministic time-local algorithms are simply functions of the last $T$ requests, hence the algorithm's output is fixed when its visible horizon is $\ac^T$, for any page $\ac$. Fix any deterministic time-local online algorithm and a page $\ac$, and let us denote the output of the algorithm on~$\ac^T$ as the~\emph{default} configuration $D$.
  We select a set $B$ of at most $k-1$ pages not present in $D$.
  At least $n-k$ pages are not present in $D$, hence the size of $B$ is $\min\{ k-1, n-k\}$.
  Let $\vec{b} = \langle b_1, b_2, b_3, \ldots, b_{\min\{ k-1, n-k\}} \rangle$ be a sequence of pages from $B$ ordered in an arbitrary fashion.

If $D$ does not contain $\ac$, then for any input $\sigma$ consisting of requests to $\ac$ only the algorithm incurs cost $|\sigma|$, and the claim follows.
Otherwise, consider an input sequence $\sigma := (\ac^T \cdot \vec{b})^L$ for some $L\in\mathbb{N}$.
We partition $\sigma$ into $L$ \emph{phases} of form $\ac^T \cdot \vec{b}$.
Fix any phase. 
After serving the subsequence $\ac^T$, the online algorithm resides in the default configuration $D$.
Since $D$ does not contain any page from $B$, the subsequent $|B|$ requests incur the cost $1$ each.
Hence, in each phase the online algorithm pays at least $|B| = \min\{k-1,n-k\}$.
Summing over all $L = |\sigma|/(T+k-1)$ phases, the total cost of the algorithm is at least $\min\{k-1,n-k\}\cdot |\sigma| / (T+k-1)$.

A feasible offline solution for $\sigma$ is to move to the configuration $\{\ac\} \cup B$ at the beginning, incurring the cost at most $k$ for reaching it. In this configuration, all requests from $\sigma$ are free, hence the cost of an optimal offline solution is at most $k$.
\end{proof}

Consequently, no deterministic time-local algorithm for caching can be competitive in the classic sense. 
Despite this negative result, can we still design time-local algorithms that have performance close to the offline optimum?
Observe that we may lower the cost from Theorem~\ref{thm:additive-caching} by increasing $T$. The actual value of $T$ is often under control of the system designer, who may supply more storage to diminish the cost incurred by the online algorithm. Next, we study how closely time-local algorithm can perform to an offline optimum for a~given~$T$.

\subparagraph*{Competitive Ratio with a Periodic Additive.}
To characterize caching in the time-local setting, we extend the notion of competitiveness to incorporate an additive cost that grows with the input length.
We say an algorithm is \emph{$c_1$-competitive with a periodic additive cost of $c_2$} if there exists a~constant $c_3$ such that for each input $\sigma$ we have
\[
\ALG(\sigma)\le c_1 \cdot \OPT + c_2\cdot |\sigma| / T + c_3.
\]

\subparagraph*{Time-Local Variant of Least Recently Used.}
Next, we introduce \LRU, a~natural time-local algorithm for caching, which simulates the classic Least Recently Used~\cite{Sleator1985} algorithm for the last $T$ requests and outputs its configuration (the pages leaving the visible horizon are evicted from the cache). We now analyze its performance in terms of competitiveness with a periodic additive.

\begin{theorem}
  For any input sequence $\sigma$, the algorithm $\LRU$ incurs the cost at most $k\cdot \OPT(\sigma) +  k\cdot |\sigma| / T + k$, where $k$ is the cache size.
  \label{thm:lrut}
\end{theorem}

\begin{proof}
We consider the $k$-phase partition of the input sequence $\sigma$, following the notation from Borodin and El-Yaniv~\cite{Borodin1998}: phase 0 is the empty sequence, and every phase $i>0$ is the maximal sequence following the phase $i-1$ that contains at most $k$ distinct page requests since the start of the $i$th phase.
The analysis of the cost of an offline optimal algorithm \OPT repeats the arguments from the classic analysis of the LRU~\cite{Sleator1985,Borodin1998}: \OPT pays at least $1$ in each phase $i>0$ but the last one.
Next, we bound the cost of \LRU.

Consider any phase $\sigma_p$.
Pages leaving the visible horizon are evicted from the cache, hence unlike the full-history LRU, for the \LRU some of these pages may incur the request cost multiple times in a phase.
Fix a~page $x$ requested in the phase, and consider two consecutive requests to $x$, at $t_1$ and $t_2$.
We claim that if \LRU incurs a page fault at $t_2$, and the requests to $x$ are closer than~$T$ apart, then $k+1$ different pages were requested since $t_1$.
After serving the request at $t_1$, $x$ is the most recently used, and \LRU has $k$ distinct pages in the cache.
For $x$ to leave the cache, \LRU must incur a page fault while $x$ is the least recently used page.
However, if this is the case, at least $k+1$ different pages were requested between $t_1$ and $t_2$: the~$k$ pages including $x$ after serving the request at $t_1$, and the page that swapped $x$ out.

Consequently, requests to $x$ that are closer than~$T$ requests apart in $\sigma$, but are contained within a single phase, do not cause a~page fault.
Throughout the phase, \LRU incurs the cost at most $\lceil m_p/T \rceil$ for page faults of the page $x$, where $m_p$ is the length of the phase $\sigma_p$.
The number of pages that can cause page faults in this phase is at most~$k$, thus in total \LRU incurs the cost at most $k\cdot \lceil m_p / T \rceil$.
Comparing to the cost of \OPT in each phase~$\sigma_p$, we have \[
  \LRU(\sigma_p) \le k\cdot \lceil m_p / T \rceil \le k\cdot(1+m_p/T) \le k \cdot \OPT(\sigma_p) + k\cdot m_p/T,
  \]
  where the last inequality holds for all phases but the last one.
We sum these bounds over all phases; for the last phase we use $\LRU(\sigma_p) \le  k\cdot(1+m_p/T)$. As the sum of the phase lengths is $|\sigma|$, we conclude the proof.
\end{proof}

The used our notion of competitiveness with a periodic additive is justified: it is impossible to bound the absolute cost of \LRU in terms of $k\cdot |\sigma|/T$. If the length of each phase (as defined in the above proof) is $k$, \LRU incurs the absolute cost $|\sigma|$. However, such an input sequence is also costly for an optimal offline algorithm. Comparing the cost of \LRU with the cost of an optimal offline algorithm (with the notion of competitive ratio with periodic additive) allows to bound the portion of the cost growing with the input sequence as a~function of $T$.

Despite the negative result from Theorem~\ref{thm:additive-caching}, with large enough $T$, time-local algorithms for caching may still perform close to an optimal offline algorithm.
In Theorem~\ref{thm:lrut}, we established that \LRU is \emph{competitive with a~periodic additive}, and next we put this definition in context of known measures of quality of online algorithms.
In particular, if an algorithm is competitive with periodic additive, it is also \emph{loosely competitive}~\cite{Young02} (a definition coined by Young to characterize caching).
An online algorithm is $\epsilon$-loosely $c$-competitive if for a substantial fraction of inputs, the algorithm is either $c$-competitive or it incurs a~small absolute cost: $\ALG(\sigma) \leq \max\{c\cdot \OPT(\sigma), \epsilon \cdot |\sigma| \}$.
If an online algorithm is $c_1$-competitive with a~periodic additive $c_2\cdot |\sigma| / T$ and $c_3=0$, then it is also $(2\cdot c_2)$-loosely $(2\cdot c_1)$-competitive for all sequences $\sigma$:
\[
  \ALG(\sigma) \le c_1 \cdot \OPT(\sigma) + \frac{c_2\cdot |\sigma|}{T}  = \frac{2 c_1 \cdot \OPT(\sigma) + \frac{2 c_2\cdot |\sigma|}{T}}{2} \le \max \{ 2 c_1 \cdot \OPT(\sigma), \frac{2 c_2}{T} \cdot |\sigma|\},
\]
where the last step follows by the relation of the average and the maximum.

The relation between the relaxed definitions of competitiveness suggests future directions of research.
Considering \LRU with resource augmentation may lead to improved loose competitiveness of time-local algorithm for caching with techniques introduced by Young~\cite{Young02}, but we leave these studies to future work.

The caching problem is a difficult problem for time-local algorithms. 
Later in this paper we will see that problems such as online file migration~\cite{bienkowski2012migrating}, online list access~\cite{Sleator1985} or binary search trees~\cite{demaine2007dyynamic} admit competitive time-local algorithms in the traditional sense.

\subsection{Synthesis of Time-Local Algorithms for Metrical Task Systems}
\label{ssec:synthesis-main-body}

\begin{figure}
\centering
\includegraphics[width=0.96\textwidth]{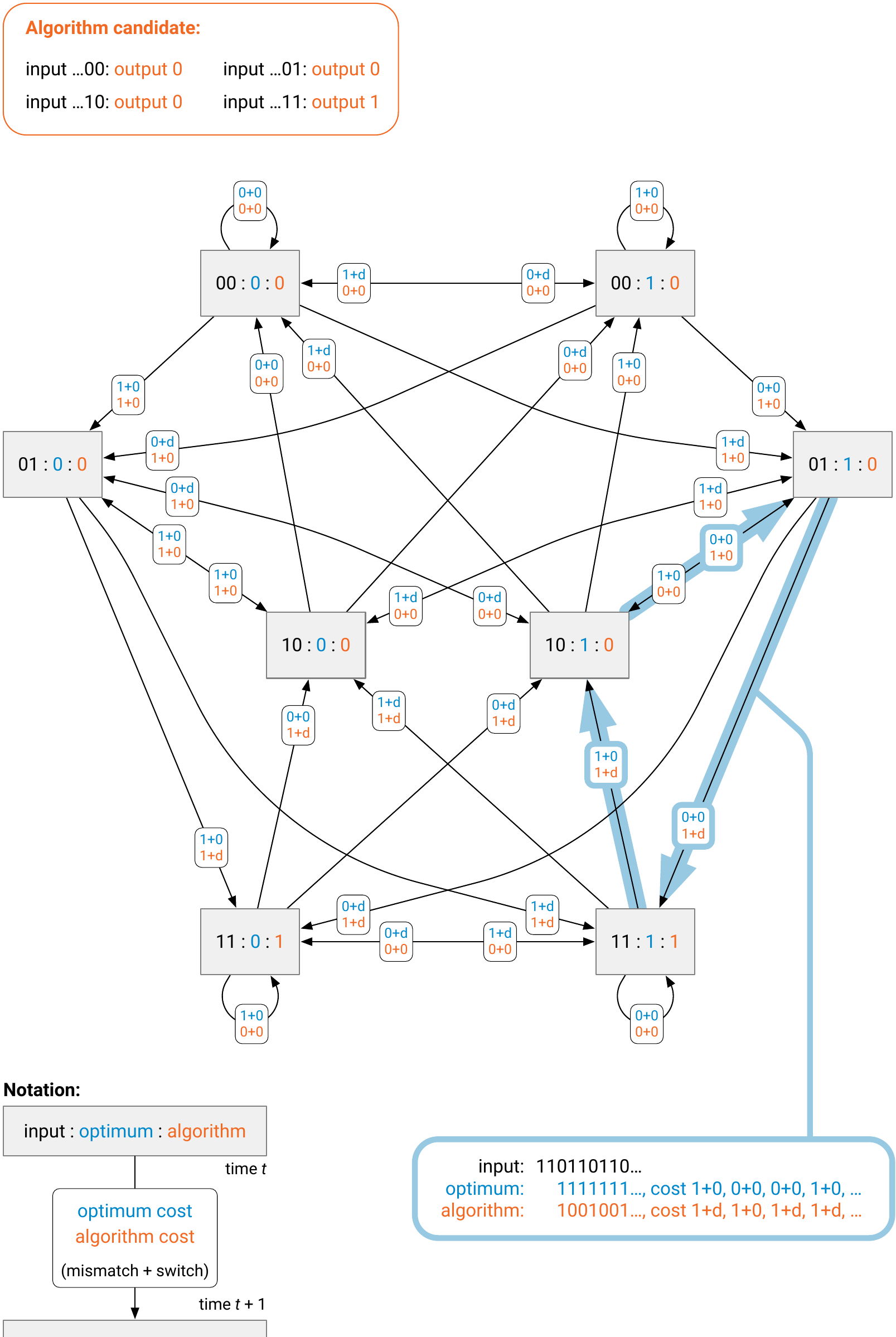}
\caption{Dual de Bruijn graph for online file migration in a 2-node network, with cost of migration~$\costmig$, with time horizon $T = 2$. The highlighted cycle shows how an adversary can force the candidate algorithm to pay $3+2\costmig$ when optimum pays only $1$; hence this specific time-local algorithm cannot be better than $(3+2\costmig)$-competitive.}
\label{fig:synthesis-graph}
\end{figure}

We show how to use \emph{computational techniques} to automate the design of time-local algorithms, by synthesizing optimal time-local algorithms.
This technique allows us to automatically obtain \emph{tight} upper and lower bounds for time-local online algorithms for any given $T$. Our synthesis method applies to a~class of \emph{local optimization problems} (defined formally in Section~\ref{sec:local-problems}), including distributed local problems on paths, but for simplicity of presentation, we consider a less general setting in this section.

In context of online algorithms, our synthesis method applies to e.g.\ \emph{metrical task systems}~\cite{Borodin1992}, and some of its generalizations.
A metrical task system is characterized by the set of states and (metric) costs of transitions between them. The cost of serving a~request depends on the current state of the algorithm. The algorithm may change state before serving each request. The objective is to minimize the total cost.
Moreover, local optimization problems model metrical tasks system variants, where the algorithm is allowed to change the state only after serving the request (as in e.g.\ list access and file migration problems).

In Section~\ref{sec:synthesis},
we show how to construct a~(finite) weighted, directed graph $G(\Pi, A)$ that captures the costs of output sequences as walks in $G(\Pi,A)$: We prove that the for a~large class of local optimization problems, the competitive ratio of any regular time-local algorithm $A$ for $\Pi$ corresponds to the \emph{heaviest directed cycle} in this graph.
\begin{theorem}[informal; see Theorem~\ref{thm:synthesis}]
  Let $\Pi$ be a~local optimization problem and let $A$ be a~regular time-local algorithm with horizon $T$. Then there is a~finite, dual-weighted graph $G=G(\Pi,A)$ such that the competitive ratio of $A$ is determined by the cycle with the heaviest weight ratio in $G$.
\end{theorem}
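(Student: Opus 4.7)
My plan is to encode every sufficiently long input sequence as a walk in a finite de~Bruijn-type digraph $G(\Pi,A)$ whose edges carry two non-negative weights (one accounting for the algorithm's per-step cost, one for the offline optimum's per-step cost), so that the competitive ratio of $A$ reduces to a ratio-maximization problem over walks. First I would construct $G$ as follows. Since $A$ is unclocked with horizon $T$, each of its output symbols depends only on a window $(x_{i-T},\ldots,x_{i-1}) \in (X\cup\{\bot\})^T$. Take the vertex set $V(G) = (X\cup\{\bot\})^{T-1}$ and, for every pair $(u,w)$ whose suffix/prefix overlap agrees, place a directed edge determined by the unique window of length $T$ they span. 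An input $\vec x \in X^n$ (padded with $\bot$'s at the start) then corresponds bijectively to a directed walk of length $n$ in~$G$.

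Second, I would define the two edge weights. For the edge $e$ labelled by window $\vec w$, let $\alpha(e)$ be the one-step cost incurred by $A$ when it reads $\vec w$ and outputs $A(\vec w)$; this is well defined because $A$ is unclocked. Defining $\beta(e)$ is where the hypothesis that $\Pi$ is a \emph{local optimization problem} is used: by the definition of that class (introduced in Section~\ref{sec:synthesis}), an offline optimum can be decomposed, up to a bounded boundary correction, into a sum of per-position contributions each of which depends on a constant-size window of inputs and the locally chosen output. Possibly enlarging $T$ so that both $A$'s window and the optimum's window fit into a single edge label, this lets me define $\beta(e)$ so that $\OPT(\vec x) = \sum_{e \in \text{walk}(\vec x)} \beta(e) \pm O(1)$.

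Third, with these weight assignments the competitive-ratio condition $\cost_A(\vec x) \le c \cdot \OPT(\vec x) + d$ becomes, up to additive constants, the inequality $\sum_e \alpha(e) \le c \sum_e \beta(e)$ over every walk in $G$. The supremum of $\sum_e \alpha(e) / \sum_e \beta(e)$ over walks in a finite digraph with non-negative edge weights is a classical object: by the ratio-cycle theorem (equivalently, Karp's minimum-mean-cycle argument applied to the difference weights $\alpha-c\beta$), the supremum is attained in the limit of long walks by a simple directed cycle maximizing $\sum_e \alpha(e) / \sum_e \beta(e)$. The additive constant $d$ is absorbed as walk length grows, so the strict competitive ratio of $A$ is exactly the weight of the heaviest such cycle in $G$.

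The main obstacle, I expect, is the second step: making precise the claim that $\OPT(\vec x)$ decomposes along edges of $G$. The offline optimum is globally defined and, a~priori, need not split into local contributions. The definition of \emph{local optimization problem} will have to be tailored so that an exchange argument produces a $T'$-local optimal solution for some $T'$ depending only on $\Pi$; I would either prove such a locality-of-optimum lemma directly, or define $\beta(e)$ as the cost of a locally-optimal response to the window and then show, via a swap/uncrossing argument, that the resulting sum matches $\OPT(\vec x)$ up to $O(1)$ boundary error. Once this decomposition is in hand, the remaining reduction from bi-weighted walks to ratio cycles is standard, and the finiteness of $G$ turns cycle enumeration into a decidable procedure that computes the exact competitive ratio of~$A$.
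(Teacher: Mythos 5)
You correctly identify the high-level shape of the argument (encode inputs as walks in a finite de~Bruijn-type digraph, attach two weights per edge, reduce to a maximum ratio cycle), and you also correctly flag the crux: how to make $\OPT(\vec x)$ decompose along edges. But you do not resolve that crux, and your proposed vertex set $(X\cup\{\bot\})^{T-1}$ cannot resolve it, because an edge in that graph only records a local window of \emph{inputs}, while what the optimum does at position~$i$ genuinely depends on the whole input (and on tie-breaking that can propagate arbitrarily far). Neither of your fallback proposals---a ``locality-of-optimum'' exchange lemma, or defining $\beta(e)$ as a locally-optimal response---is established, and neither is what the paper does. In particular, ``possibly enlarging $T$'' is not available: the horizon $T$ is part of the algorithm being analyzed.

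The missing idea in the paper is to enlarge the \emph{state}, not the horizon: the vertex set of $G(\Pi,A)$ is $X^T\times Y$, so a vertex carries both the current input window $\vec a$ and a \emph{candidate output symbol} $y$. A walk in $G$ then simultaneously encodes an input sequence $\vec x(\rho)$ \emph{and} an arbitrary candidate output sequence $\vec y^*(\rho)$ for the adversary. With this, the adversary weight $w(e)=v(a_T,x,b,b')$ is always well-defined locally (it is just the per-step cost of that candidate output), and the algorithm weight $q(e)=v(a_T,x,b,A(\vec a))$ is defined as you intended. No locality-of-optimum lemma is needed: the optimum is captured because for a fixed input $\vec x$ one ranges over all walks $\rho$ with $\vec x(\rho)=\vec x$, and the walk corresponding to an optimal $\vec y^*$ has $w(\rho)=\OPT(\vec x)+O(1)$. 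The paper then closes the argument with a subwalk-decomposition lemma (any decomposition of a walk into subwalks contains one whose ratio is at least the walk's ratio), plus the observation that long walks can be completed to closed walks at $O(1)$ extra cost and closed walks decompose into directed cycles, giving both the lower bound (any cycle yields an input forcing ratio $\ge r(\rho)$) and the upper bound (competitive ratio $>c+\varepsilon$ yields a cycle of ratio $>c$). Your ratio-cycle step is fine, but without doubling the state with $Y$ your $\beta$ is not definable and the reduction does not go through.
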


Recall that each regular time-local algorithm with horizon $T$ is given by some map $A \colon X^T \to Y$. For local optimization problems with finite input set $X$ and output set $Y$, we can iterate through all the $|Y|^{|X|^T}$ maps to find an optimal algorithm for any given $T$.

\subparagraph*{Synthesis Case Study: Online File Migration}
\label{ssec:file-migration-main-body}

We illustrate the usefulness of our synthesis technique by synthesizing several optimal deterministic time-local algorithms for \emph{online file migration} problem~\cite{bienkowski2012migrating}.
Figure~\ref{fig:synthesis-graph} provides an illustration of how the synthesis proceeds in the specific case of the online file migration problem in a $2$-node network, with time horizon $T = 2$. In this case there are $2^T = 4$ possible input sequences that the algorithm may see within its $T$-element window, and for each input sequence the algorithm outputs either $0$ or $1$; and hence there are $2^4 = 16$ possible algorithms. In the figure, we have fixed one possible algorithm candidate $A$.

Then we construct the graph $G(\Pi,A)$, where each node is labeled with a triple ``input : optimum : algorithm.'' For example, the node $11 : 0 : 1$ represents the case that we have input $\ldots 11$, and in this case the optimal output might be $0$, but algorithm $A$ outputs $1$. The number of nodes in this case is $4\cdot 2 = 8$, as there are $4$ possible inputs, and two possible outputs that the optimum might use, but only one output that this specific algorithm will pick.
Then we add edges that represent all possible transitions when we consider how the input may evolve, how the optimum might change, and how the algorithm will respond. For example, we have got an edge from $11 : 0 : 1$ to $10 : 0 : 0$ to indicate that after seeing the input $\ldots 11$, our next input element might be $0$ and hence we have in the next time slot the input sequence $\ldots 10$; in the next time slot the optimum might still keep the value $0$, but our algorithm will switch to output $1$. In this case the optimum will pay $0+0$ units, as there is no mismatch and no need to migrate the file, while the algorithm will pay $1+d$ units, as we needed to serve a file that is at the other node and also we migrated the file to the new node. Therefore, the edge from $11 : 0 : 1$ to $10 : 0 : 0$ is labeled with the weight pair $(0+0, 1+d)$.

Now consider the cycle highlighted in Figure~\ref{fig:synthesis-graph}. This cycle represents a possible input sequence and a possible behavior of the optimum such that the optimum pays only $1+0+0+0+0+0 = 1$ unit and the algorithm pays $1+d+1+0+1+d = 3+2d$ units. Hence, our adversary can generate an unbounded sequence of inputs that follows this cycle, and the algorithm will pay \emph{at least} $3+2d$ times the cost of the offline optimum; hence this algorithm cannot be better than $(3+2d)$-competitive.

Inspired by the synthesized algorithm, we identify an algorithm candidate for general cost of migration $d$, and analytically analyze its competitiveness.
The synthesis helped in this process: for small values of $d$, we generalize them to arbitrary $d$, analyze their competitiveness, and derive asymptotically tight upper and lower bounds.
All technical details and further results are provided in the later sections.
\begin{theorem}[informal; see Theorem~\ref{thm:file-migration-lb}]
  For any $\costmig \ge T$, no randomized regular time-local algorithm achieves a~competitive ratio better than $2\costmig/T$.
\end{theorem}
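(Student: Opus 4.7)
The plan is to invoke Yao's minimax principle and reduce the claim to exhibiting a single input distribution $D$ on which every \emph{deterministic} clocked $T$-time-local algorithm incurs expected cost at least $(2\alpha/T) \cdot E_D[\OPT]$. The governing intuition is that, since $T\le\alpha$, the algorithm can never accumulate within its $T$-step window enough evidence to amortize a single migration of cost $\alpha$ against observed access costs, and a suitably randomized input neutralizes the extra information the clock would otherwise provide.

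Concretely, I would let $D$ consist of a sequence of $K$ bursts of requests at a fixed ``alert'' node, separated by silences of length at least $T$ so that the algorithm's window clears between bursts. The length $L_i$ of the $i$th burst is drawn independently from a two-point distribution supported on $\{T,L^\star\}$: short length $T$ with probability $q$ close to $1$, long length $L^\star$ with probability $1-q$, tuned so that $E[L_i]=\Theta(\alpha)$ (i.e., $L^\star=\Theta(\alpha/(1-q))$). An explicit offline strategy that absorbs short bursts and migrates in–and–out for the long ones gives $E_D[\OPT]=KT+o(K\alpha)$. For the lower bound on algorithm cost I would isolate the first time step $t^\star_i$ inside each burst at which the algorithm's window is fully occupied by burst requests: at $t^\star_i$ the window is the same regardless of whether $L_i=T$ or $L_i=L^\star$, so the algorithm's commitment $y_{t^\star_i}$ is a function of the clock and the uniform window only, independent of $L_i$. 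A case analysis on the commitment --- migrate (costing $\alpha$ up front), stay (costing $E[L_i]=\Theta(\alpha)$ in expected access), or any clock-dependent mixture (which cannot be correlated with $L_i$ since $L_i$ is drawn independently of the clock) --- shows that the expected per-burst cost is $\Theta(\alpha)$. Summing over $K$ bursts and dividing by $E_D[\OPT]=\Theta(KT)$ gives the ratio $\Theta(\alpha/T)$, with the constant $2$ coming from the round-trip nature of responding to a misplaced burst (access during the window-filling prefix plus a paired migrate-in and migrate-back, or equivalently paying the full burst length in the stay case).

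The main obstacle is ruling out \emph{wait-and-see} strategies in which the algorithm defers its commitment past $t^\star_i$ and uses subsequent window entries to learn whether the burst has actually ended. To handle this, I would choose $L^\star$ with a sufficiently heavy tail so that any delay of $\Delta$ steps beyond $t^\star_i$ pays $\Omega(\Delta)$ in access cost in the long-burst case \emph{before} the disambiguating information enters the window, while still leaving the algorithm facing an $\Omega(\alpha)$ residual migrate-or-stay decision afterwards; formalizing this by an amortization argument --- e.g.\ a potential function tracking the uncommitted ``debt'' of the algorithm inside the burst, or by induction on the elapsed within-burst time --- is the technically delicate step, and I would expect the final formal theorem to use a two-layer randomization (over burst lengths and over burst nodes/positions) to remove every residual clock-exploitable structure.
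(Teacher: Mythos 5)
Your proposal takes a genuinely different route from the paper, but it has a real gap that you yourself flag and do not close. The paper's argument is a direct adversarial one, not a Yao-style argument over randomized inputs. It observes a dichotomy for any deterministic clocked time-local algorithm: either the algorithm is \emph{resisting} (it changes its output infinitely often on a constant stream $0^*$ or $1^*$), in which case it is already non-competitive on constant inputs, or it eventually stabilizes, i.e.\ there is a time $\tau_{\mathrm{det}}$ after which it always outputs $b$ when the visible window is $b^T$. For a randomized algorithm viewed as a distribution over deterministic clocked algorithms, the same dichotomy applies to the support. In the non-resisting case, the paper feeds in $(1^T 0^T)^L$; past $\tau_{\mathrm{det}}$, the algorithm is \emph{forced} to flip every $T$ steps, paying $2\alpha$ per $2T$-block, while \OPT pays exactly $T$ per block (since $T \le \alpha$, not migrating costs $T$ and migrating costs $\alpha \ge T$). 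This yields the constant $2\alpha/T$ cleanly, without burst-length randomization and without Yao.

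The gap in your sketch is the wait-and-see issue you acknowledge. The snapshot claim at $t^\star_i$ --- that the commitment is independent of $L_i$ --- is correct as far as it goes, but a clocked algorithm re-decides at every step afterward, and your heavy-tail/potential-function suggestion is not an argument; it is a placeholder for one. Until that is closed, you have no lower bound. Moreover, even if closed, it is unclear the sharp constant $2$ survives: the ``stay'' branch only costs $E[L_i] - T$ in excess access over \OPT (since \OPT also pays something for the window-filling prefix or a migration), so the per-burst lower bound you get is $\min\{\alpha, E[L_i]-T\}$, and the round-trip ``2'' must be extracted from the return migration during the silence, which needs an explicit accounting that is absent. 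Finally, Yao's principle for competitive ratios needs care (the ratio of expectations versus the expectation of the ratio); the paper's dichotomy argument simply avoids this. Your approach is plausible as an $\Omega(\alpha/T)$ bound but does not currently deliver the stated $2\alpha/T$, and the $(1^T 0^T)^L$ construction is a strictly simpler route to the sharp constant.
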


\begin{theorem}[informal; see Corollary \ref{cor:6-competitive}]
There is a $6$-competitive algorithm for $T \ge 6\costmig$ for any $\costmig \ge 1$. Moreover, for $1 \le T < 6\costmig$ the algorithm is $(4+12\costmig/T)$-competitive.
\end{theorem}

\section{Clocked Time-Local Algorithms}\label{sec:clocked-algorithms}

In this section, we examine the power and limitations of clocked time-local algorithms (defined in Section~\ref{ssec:model-online}).
Adding a clock can simplify the design process of time-local algorithms due to increased expressiveness in comparison to regular time-local algorithms, as demonstrated with the simple example below.

\subparagraph*{Example: Clocked Algorithm Move-To-Min for File Migration.}
Consider the algorithm Move-To-Min~\cite{AwerbuchBF03} for online file migration, which is $7$-competitive for arbitrary networks.
The algorithm Move-To-Min operates in phases of length $d$, and at the end of each phase it moves the file to the node that minimizes the cost of serving all requests from this phase.
The availability of a clock enables us to mimic this strategy: we use the clock to determine the start and the end points of the penultimate phase in comparison to the current input index, and with $T \ge 2d$, the requests from the penultimate phase are still in the visible history.

\medskip 

The remainder of this section is organized as follows.
First, we show that clocked time-local algorithms can be powerful.
For problems that are \emph{bounded monotone} (precisely defined below),
competitive \emph{classic online} algorithms that have access to the full input history can be converted into competitive \emph{clocked time-local} algorithms. Second, we look into implications of this theorem for classic online problems.
Finally, we derive lower bounds for clocked algorithms to demonstrate that the presence of a clock cannot improve competitiveness of time-local algorithms for all online problems.

\subsection{Clocked Time-Local Algorithms from Full-History Algorithms}
\label{ssec:clocked-from-full-history}

We now show that for a~large class of online problems the following result holds: if the problem admits a~\emph{deterministic} classic online algorithm with competitive ratio $c$, then for any given constant $\varepsilon > 0$, there exists a~deterministic clocked time-local algorithm with a~competitive ratio of at most $(1+\varepsilon)c$ for some constant horizon $T$.

The proof follows a~similar structure as the constructive derandomization proof of Ben-David et al.~\cite[Section 4]{Ben-David1994} for classic online algorithms: we chop the input sequence into short segments and show that under certain assumptions, both the offline and competitive online algorithms pay roughly the same cost. However, some care is needed to adapt the proof strategy, as in the case of time-local algorithms, we can only use constant-size segments.

We now define the class of request-answer games for which we prove our result.
A~(minimization) game is
\emph{monotone} if for all~$n \in \mathbb{N}$
\[
f_{n+1}(\vec x \cdot x, \vec y \cdot y) \ge f_n(\vec x, \vec y) \textrm{ for all } \vec x \in X^n, \vec y \in Y^n, x \in X, y \in Y.
\]
That is, the cost cannot decrease when extending the input-output sequence.
We say that a~monotone game has \emph{bounded delay}
if for every $h \in \mathbb{R}$ the set
\[
L(h) = \Bigl\{ \vec x \in \bigcup_{n=1} X^n : \OPT(\vec x) \le h \Bigr\}
\]
is finite (sometimes this property is called locality~\cite{Ben-David1994}). That is, there cannot be arbitrarily long sequences of a~fixed cost: eventually the cost of any sequence must increase.
Finally, the diameter of the game is
\[
D = \sup \Bigl\{ \bigl| f(\vec x \cdot \vec x', \vec y \cdot \vec y') - f(\vec x, \vec y) - f(\vec x', \vec y') \bigr|
: (\vec x, \vec y), (\vec x', \vec y') \in \bigcup_{n > 0} X^n \times Y^n \Bigr \}.
\]
We define that a~\emph{bounded monotone minimization game} is a~monotone minimization game that has bounded delay, finite diameter, and finite input set $X$.
The following result holds for deterministic algorithms:

\begin{theorem}
  \label{thm:simulation}
  Let $\mathcal{F}$ be a~bounded monotone minimization game. If there exists an online algorithm $A$ with competitive ratio $c \geq 1$ for $\mathcal{F}$, then for any constant $\varepsilon > 0$ there exists some constant $T$ and a~clocked $T$-time-local algorithm $B$ with competitive ratio $(1+\varepsilon)c$ for $\mathcal{F}$.
\end{theorem}
\begin{proof}
  Since $A$ has competitive ratio $c$, then there exists some constant $d$ such that for every input $\vec x$ the output $\vec y = A(\vec x)$ satisfies $f(\vec x, \vec y) \le c \cdot \OPT(\vec x) + \additive$.
  Let $D$ be the diameter of the game and fix
$\delta = 2\varepsilon/3$ and $H = (2+\delta)/\delta \cdot \max \{ d, D \}$.
  Since the game has bounded delay, we have that
  $L(H) = \{ \vec x : \OPT(\vec x) \le H \}$ and $T = \max \{ k + 1: (x_1, \ldots, x_k) \in L(H) \}$
  are finite.
  Note that $T$ is independent of $n$, as it only depends on $H$.
  Observe that since the cost functions are monotone, for all $n \ge T$ any input sequence $\vec x \in X^n$ satisfies $\OPT(\vec x) \ge H$.

  We can now construct the \emph{clocked} time-local algorithm that only sees the $T$ latest inputs and the total number of requests served so far. Let $A = (A_i)_{i \ge 1}$ be the classic online algorithm.
  The clocked time-local algorithm $B$ is given by sequence $(B_j)_{j \ge 1}$, where
  \[
B_{Tk+i}(x_{T(k-1)+i}, \ldots, x_{Tk} ,z_1 \ldots, z_{i}) = A_i(z_1, \ldots, z_i) \textrm{ for } 1 \le i \le T \textrm{ and } k \ge 0.
\]
That is, the clocked time local algorithm $B$ simulates the classic online algorithm $A$ and
 resets it every time $T$ inputs have been served since the last reset.

We now analyze the clocked time local algorithm $B$.
For any $n \in \mathbb{N}$, let $\vec x \in X^n$ be some input sequence and $\vec y \in Y^n$ be the output of $B$ on the input sequence~$\vec x$. Let $\vec x(1), \ldots, \vec x(k)$ be the subsequences of $\vec x$, where $\vec x(1)$ denote the first $T$ inputs, $\vec x(2)$, denote the next $T$ inputs, and so on.
Define the shorthand $C(i) = \OPT( \vec x(i))$ for each $1 \le i \le k$. Note that $C(i) \ge H$ for each $1 \le i < k$. The last subsequence $\vec x(k)$ may consist of fewer than $T$ inputs, so we have no lower bound for $C(k)$. For $1 \le i < k$, we get that
\[
C(i) - D \ge \left(\frac{2}{2+\delta}\right) \cdot C(i) \quad \textrm{ and } \quad
D + \additive \le \left(\frac{2\delta}{2+\delta}\right) \cdot C(i)
\]
by applying the fact that $C(i) \ge H$ and the definition of $H$.

By repeatedly applying the definition of diameter, we get that
the optimum offline solution is lower bounded by
\begin{align*}
  \OPT(\vec x) &\ge C(1) + \sum_{i=2}^k \left( C(i) - D \right)
  \ge \sum_{i=1}^{k}( C(i) - D)
  \ge \left( \frac{2}{2+\delta} \right) \sum_{i=1}^k C(i).
\end{align*}
Since $A$ has competitive ratio $c$, the output of $B$ has cost
\begin{align*}
  f(\vec x, B(\vec x)) &\le c \cdot C(1) + \additive + \sum_{i=2}^k \left( c \cdot (C(i) + D) + \additive \right)
  \le \left( c + \frac{2c\delta}{2+\delta} \right) \sum_{i=1}^{k}  C(i) + \additive.
\end{align*}
Now using the lower bound on $\OPT(\vec x)$ and the definition of $\delta$, we get that the output of $B$ has cost bounded by
\begin{align*}
	f(\vec x, B(\vec x)) &\le \left( c + \frac{2c\delta}{2+\delta} \right) \sum_{i=1}^{k}  C(i) + \additive 
	\le \left( c + \frac{2c\delta}{2+\delta} \right) \cdot \left( \frac{2+\delta}{2} \right) \OPT(\vec x) + \additive \\
	&= c \cdot \left(1+ \frac{3\delta}{2} \right) \OPT(\vec x) + \additive =
	c \cdot (1+\varepsilon) \OPT(\vec x) + \additive. \qedhere
\end{align*}
\end{proof}

With the Theorem~\ref{thm:simulation} we have near-optimally competitive time-local online problems for bounded monotone online problems.
For example, we get a $(2+\varepsilon)$-competitive clocked time-local algorithms for the list access problem~\cite{Sleator1985} from Move-to-Front, and $O(\log \log n)$-competitive clocked time-local algorithm for the binary search tree problem~\cite[Ch. 1]{FiatW96}) from the tango trees~\cite{demaine2007dyynamic}.
More broadly, for class of bounded monotone requests answer games includes all metrical task systems~\cite{Borodin1992} that have a property that serving a~request from any configuration always incurs a~positive cost.

\subsection{The Limitations of Clocked Time-Local Algorithms}
\label{ssec:clocked-limitations}

Although access to a clock is a~powerful asset, it does not remedy all problems of time-local computation.
For caching, clocked time-local algorithms cannot be competitive in the classic sense. The additive term defined in Section~\ref{ssec:caching} increasing with the length of the input sequence is unavoidable, even for clocked algorithms.

\begin{theorem}
  The cost of any deterministic clocked time-local algorithm for online caching with cache size $k = 2$ cannot be bounded by $c_1 \cdot \OPT + c_2$, for any constants $c_1, c_2$. 
\end{theorem}

\begin{proof}
  Consider the caching problem with the cache size $k=2$ and a~universe of $n=3$ pages $X = \{\ac, \bc, \cc\}$.
  Let $A$ be any clocked time-local algorithm with horizon $T = O(1)$.
  We say that $A$ is \emph{decisive} if on the infinite sequence $\ac^*$ exists $t$ such that $y_{t'} = y_t$ for all $t' > t$.
  Otherwise, $A$ is \emph{indecisive}; note that any indecisive deterministic algorithm $A$ must be a~clocked algorithm.
  We claim that in both cases (decisive or indecisive), there exists an input sequence for which the online algorithm incurs unbounded cost while an offline algorithm's cost is bounded.

  If $A$ is decisive, we apply the construction from Theorem~\ref{thm:additive-caching}: a~decisive algorithm behaves as a~regular time-local algorithm after $t'$, thus it incurs a~cost strictly growing with the length of the input sequence, while an offline optimum cost is bounded.

  Thus, suppose that $A$ is indecisive and consider an input sequence family $\mathcal{I} := \{ \ac^L : L \in \mathbb{N} \}$.
  An indecisive algorithm changes its output infinitely many times on the infinite sequence consisting only of requests to $\ac$, incurring a~cost growing with the length of the input sequence.
  An optimal offline solution on any input $\vec x \in \mathcal{I}$ incurs constant cost: it may move to the configuration with $\ac$ in the cache at the beginning of the input sequence, and serve the entire sequence without further cost.
 \end{proof}

\section{Local Optimization Problems}	\label{sec:local-problems}

  We define a broad class of distributed and online problems, called local optimization problems. For these problems, the optimal distributed algorithms (wrt. approximation) and the optimal strictly competitive time-local online algorithms are obtainable automatically, see Section~\ref{sec:synthesis}.  
  Moreover, we use local optimization problems to derive a link between distributed computing and local optimization problems (Section~\ref{sec:temporal-vs-spatial-locality}).

  The definition of the class is somewhat technical, but the basic idea is simple:
  at each time step~$i$, the cost (or utility) of our decision $y_i$ is defined to be some function of the current input $x_i$ and up to $r = O(1)$ previous inputs and outputs.
  We apply the formal definition in Section \ref{sec:synthesis} for an algorithmic synthesis of upper and lower bounds.

  This formalism has several attractive features. First, it is flexible enough to define e.g.\ online problems in which we reward correct decisions (e.g.\ whenever we predict correctly $y_i = x_i$, we get some profit), we penalize costly moves (e.g.\ whenever we change our mind and switch to a new output $y_i \ne y_{i-1}$, we get some penalty), and we prevent invalid choices (e.g.\ by defining infinite penalties for decisions that are not compatible with the previous inputs and/or previous decisions). Second, this formalism can  capture problems that are relevant in distributed graph algorithms (e.g.\ $x_i$ represents the weight of node $i$ along a path, $y_i$ indicates which nodes are selected, and we pay $x_i$ whenever we select a node). Finally, this family of problems is amenable to automated algorithm synthesis, as we will later see.

	We will now present the formal definition and then give several examples of different kinds of problems, both from the areas of online and distributed graph algorithms.

	\subparagraph{Formalism.} A \emph{local optimization problem} is a tuple $\Pi = (X,Y,r,v,\allowbreak \aggr,\obj)$, where
	\begin{itemize}
		\item $X$ is the \emph{set of inputs},
		\item $Y$ is the \emph{set of outputs},
		\item $r \in \bbN$ is the \emph{horizon},
		\item $v\colon X^{r+1} \times Y^{r+1} \to \bbR \cup \{-\infty, +\infty\}$ is the \emph{local cost function},
		\item $\aggr \in \{ \asum, \min , \max \} $ is the \emph{aggregation function},
		\item $\obj \in \{ \min, \max \} $ is the \emph{objective}.
	\end{itemize}
	The input for the problem $\Pi$ is a sequence $\vec{x} = (x_1, x_2,\allowbreak \dotsc, x_n) \in X^n$ and a solution is a sequence $\vec{y} = (y_1, y_2,\allowbreak \dotsc, y_n) \in Y^n$.
  For convenience, we will use placeholder values $x_i = y_i = \bot$ for $i < 1$ and $i > n$. With each index, we associate a \emph{value}  $u_i(\vec{x}, \vec{y})$ defined as
	\[
	u_i(\vec{x}, \vec{y}) = v(x_{i-r}, \dotsc, x_{i}, y_{i-r}, \dotsc, y_{i}).
	\]
	Finally, we apply the aggregation function $\aggr$ to values $u_i$ to determine the value $u(\vec{x}, \vec{y})$ of the solution.
  That is, if the aggregation function is $\asum$,
	the cost function is given by
	\[
	   f_n(\vec{x}, \vec{y}) = \sum_{i=1}^{n} u_i(\vec{x}, \vec{y}).
	\]
  For example, if the objective is $\min$, the task in $\Pi$ is to find a solution $\vec{y}$ that minimizes $u(\vec{x}, \vec{y})$ for a given input $\vec{x}$, and so on. Note that $X$, $Y$ and $(f_n)_{n \ge 1}$ define a request-answer game.

  Note that bounded monotone minimization games, defined in Section~\ref{sec:clocked-algorithms} are not necessarily local   optimization problems.
  The latter are monotone games with finite diameter, but they do not necessarily have bounded delay.
  We emphasize that local optimization problems include all metrical task systems~\cite{Borodin1992}.

	\subparagraph{Shorthand Notation.}
  In general, the local cost function $v$ is a function with ${2(r+1)}$ arguments. However, it is often more convenient to represent $v$
  as a function that takes one matrix with two rows and $r+1$ columns and use ``$\cdot$'' to denote irrelevant parameters, e.g.\
	\[
	\vmat{\cdot&\cdot&c}{d&e&\cdot} = \gamma
	\]
	is equivalent to saying that $v(a,b,c,d,e,f) = \gamma$ for all $a,b \in X$ and $f \in Y$.

	\subsection{Encoding Examples of Online Problems}
	Let us first see how to encode typical online problems in our formalism. We start with a~highly simplified version of the \emph{online file migration problem}, a.k.a.\ \emph{online page migration} \cite{Black1989}.

	\begin{example}[online file migration]\label{ex:file-migration}
	We are given a network consisting of two nodes, and an indivisible shared resource, a~\emph{file}, initially stored at one of the nodes. Requests to access the file arrive from nodes of the network over time, and the serving cost of a request is the distance from the requesting node to the file, i.e., 0 if the file is co-located with the request, and 1 otherwise. After serving a~request, we may decide to \emph{migrate} the file to a different node of the network, paying $\costmig$ units of migration cost for some parameter $\costmig \ge 0$.

	Let us express the online file migration problem introduced earlier using the above formalism.
	The problem is modeled so that input $x_i \in X = \{ 0, 1\}$ represents access to the file at time $i$ from the node $x_i$ of the network, and output $y_i \in Y = \{0, 1\}$ represents the location of the file at time $i$.
	 We choose the horizon $r = 1$, aggregation function ``$\asum$'', and objective ``$\min$'', and define the local cost function as
		\begin{alignat*}{4}
			\vmat{\cdot&0}{0&0} &= 0, \quad &
			\vmat{\cdot&0}{1&1} &= 1, \quad &
			\vmat{\cdot&0}{1&0} &= \costmig, \quad &
			\vmat{\cdot&0}{0&1} &= 1+\costmig, \\
			\vmat{\cdot&1}{1&1} &= 0, \quad &
			\vmat{\cdot&1}{0&0} &= 1, \quad &
			\vmat{\cdot&1}{0&1} &= \costmig, \quad &
			\vmat{\cdot&1}{1&0} &= 1+\costmig.
		\end{alignat*}
		Recall that $\costmig > 0$ is the cost of migrating the file. Intuitively, the four columns represent local access, remote access, local and remote access after reconfiguration.
	\end{example}

	Let us now look at a problem of a different flavor, a variant of load balancing~\cite{Azar1994}.

	\begin{example}[online load balancing]\label{ex:load-balancing}
		Each day $i$ a job arrives; the job has a duration $x_i \in X = \{1,2,\dotsc,\ell\}$. We need to choose a machine $y_i \in Y$ that will process the job. If, e.g., $x_i = 3$, then machine $y_i$ will process job $i$ during days $i$, $i+1$, and $i+2$. The \emph{load} of a~machine is the number of concurrent jobs that it is processing at a given day, and our task is to minimize the maximum load of any machine at any point of time.

		In this case we can choose the horizon $r = \ell-1$, aggregation function ``$\max$'', and objective ``$\min$'', and define the local cost function as follows:
		\[
		v(x_{i-r},\dotsc,x_{i},y_{i-r},\dotsc,y_{i}) = \max_{y \in Y} {\bigl| \bigl\{ j \in X : x_{i-j+1} \ge j \text{ and } y_{i-j+1} = y \bigr\} \bigr|}.
		\]
		That is, we count the number of jobs that were assigned to each machine $y \in Y$ on days ${i-r}, \dotsc, i$ and that are long enough so that they are still being processed during day $i$. For example, if $X = Y = \{1,2\}$, this is equivalent to
		\[
		\vmat{1&\cdot}{\cdot&\cdot} = 1, \quad
		\vmat{2&\cdot}{2&2} = 2, \quad
		\vmat{2&\cdot}{1&1} = 2, \quad
		\vmat{2&\cdot}{1&2} = 1, \quad
		\vmat{2&\cdot}{2&1} = 1.
		\]
	\end{example}

	\subsection{Encoding Examples of Graph Problems on Paths}
	In this section, we uncover and exploit connections between time-local online algorithms and distributed graph algorithms on paths. We have seen that the formalism that we use is expressive enough to capture typical online problems; we now express some classic graph optimization problems studied in distributed computing.
  Let us now see how to express some classic graph optimization problems that have been studied in the theory of distributed computing.

  We interpret each index $i$ as a node in a path, where nodes $i$ and $i+1$ are connected by an edge. Input $x_i$ is the \emph{weight} of node $i$, and output $y_i$ encodes a subset of nodes $S \subseteq \{1,2,\dotsc,n\}$, with the interpretation that $i \in S$ whenever $y_i = 1$. Hence $X = \bbR_{\ge 0}$ and $Y = \{0,1\}$.

	\begin{example}[maximum-weight independent set]\label{ex:ind-set}
		We can capture a problem equivalent to the classic maximum-weight independent set as follows: we choose the horizon $r = 1$, aggregation function ``$\asum$'', and objective ``$\max$'', and define the local cost function as follows:
		\[
		\vmat{\cdot&\cdot}{\cdot&0} = 0, \quad
		\vmat{\cdot&\alpha}{0&1} = \alpha, \quad
		\vmat{\cdot&\cdot}{1&1} = -\infty.
		\]
		That is, a node of weight $\alpha$ is worth $\alpha$ units if we select it. The last case ensures that the solution represents a valid independent set (no two nodes selected next to each other).
	\end{example}

	\begin{example}[minimum-weight dominating set]\label{ex:dom-set}
		To represent minimum-weight dominating sets, we choose $r = 2$, $\aggr = \asum$, and $\obj = \min$. We define the local cost function as follows:
		\[
		\vmat{\cdot&\alpha&\cdot}{\cdot&1&\cdot} = \alpha, \quad
		\vmat{\cdot&\cdot&\cdot}{\cdot&0&1} = 0, \quad
		\vmat{\cdot&\cdot&\cdot}{1&0&\cdot} = 0, \quad
		\vmat{\cdot&\cdot&\cdot}{0&0&0} = +\infty.
		\]
		Here if we select a node of cost $\alpha$, we pay $\alpha$ units. Nodes that are not selected, but that are correctly dominated by a neighbor are free. We ensure correct domination by assigning an infinite cost to unhappy nodes.

		Technically, when we select a node $i$, we will pay for it at time $i+1$, not at time $i$, but this is fine, as we will in any case sum over all nodes (and ignore constantly many nodes near the boundaries).
	\end{example}

\section{Time-Local Online Algorithms vs. Local Graph Algorithms}\label{sec:temporal-vs-spatial-locality}

In this section, we discuss the connection between time-local online algorithms and local distributed graph algorithms on paths. Although the former deal with locality in the \emph{temporal} dimension and the latter in \emph{spatial} dimension, we will see that these two worlds are closely connected. In particular, we show how to transfer results from distributed computing to the time-local online setting.

We focus on two standard models with very different computational power: the anonymous port-numbering model (a weak model) and the supported \local model (a strong model).  In the deterministic setting, the correspondence between these models and time-local online algorithms is summarized in Table~\ref{table:tl-da}.

First, we now extend our study of local algorithms to cover locality in space as well.

\subsection{Local Algorithms in Time and Space}
\label{sec:formalism}

  For convenience, we will extend the definition of inputs to include a~placeholder value $\bot$ and let $x_i = \bot$ for $i < 1$ and for $i > n$. The key models of computing that we study are all captured by the following definition:
	\begin{definition}[local algorithm]\label{def:local-algorithm}
		An $[a, b]$-local algorithm is a~sequence $(A_i)_{i \ge 1}$ of functions of the form $A_i \colon X^{a+1+b} \to Y$. The output $\vec{y}$ of an algorithm $A$ for input $\vec{x} \in X^n$, in notation $\vec{y} = A(\vec{x})$, is defined as follows:
		\[
		y_i = A_i(x_{i-a}, \dotsc, x_{i+b}) \text{ for each } i = 1, \dotsc, n.
		\]
    If $A_i = A_j$ for all $i,j \in \mathbb{N}$, then the algorithm $A$ is regular. Otherwise, it is clocked.
	\end{definition}

  Note that \emph{regular} time-local algorithms as defined above are unaware of the current time step $i$; they make the same deterministic decision every time for the same (local) input pattern. We can
  quantify the cost of not being aware of the current time step, by comparing regular algorithms against the stronger model of \emph{clocked} algorithms, which can make different decisions based on the current time step $i$.

	\subparagraph{Classic Models of Online and Distributed Algorithms.}
Using the notion of \emph{regular} time-local algorithms,
we can characterize algorithms studied in prior work as follows; see also Figure~\ref{fig:overview}.
  In what follows, $T$ is a~constant independent of the length $n$ of input:
	\begin{itemize}
		\item \highlight{$[\infty,\infty]$-local:} These are algorithms with access to the full input. In the context of online algorithms, these are usually known as \emph{offline algorithms}, while in the context of distributed computing, these are usually known as \emph{centralized algorithms}.
		\item \highlight{$[\infty,-1]$-local:} These are \emph{online algorithms} in the usual sense. The output for a~time step~$i$ is chosen based on inputs for all previous time steps up to the time step $i-1$. This is an appropriate definition for the online file migration problem (Section~\ref{sec:file-migration}):
		we need to decide where to move the file before we see the next request.
		\item \highlight{$[\infty,0]$-local:} These are online algorithms with one unit of \emph{lookahead}. The output for a~time step $i$ is chosen based on inputs up to the time step $i$. This is an appropriate definition for the online load balancing problem (Example~\ref{ex:load-balancing}): we can choose the machine once we see the parameters of the new job.
		\item \highlight{$[T,T]$-local:} These can be interpreted as \emph{$T$-round distributed algorithms} in directed paths in the port-numbering model. In the port-numbering model, in $T$ synchronous communication rounds, each node can gather full information about the inputs of all nodes within distance $T$ from it, and nothing else. This is a~setting in which it is interesting to study graph problems such as the maximum-weight independent set (Example~\ref{ex:ind-set}) and the minimum-weight dominating set (Example~\ref{ex:dom-set}).
	\end{itemize}

  \subparagraph{New Models: Time-Local Online Algorithms.}
	Now we are ready to introduce the main objects of study for the present work:
	\begin{itemize}
		\item \highlight{regular $[T,-1]$-local:} These are \emph{time-local algorithms} with horizon $T$, i.e., online algorithms that make decisions based on only $T$ latest inputs.
		\item \highlight{regular $[T,0]$-local:} These are time-local algorithms with one unit of \emph{lookahead}.
	\item \highlight{clocked $[T,T]$-local:} As we will see later, these algorithms are equivalent to $T$-round distributed algorithms a~restricted variant of the \emph{supported} \local model~\cite{schmid2013exploiting,foerster2019power}.
		\item \highlight{clocked $[T,-1]$-local:} These are \emph{clocked time-local algorithms} that make decisions based on only $T$ latest inputs, but the decision may depend on the current time step $i$.
	\end{itemize}
  We note that there is nothing fundamental about the constants $-1$ and $0$ that appear above; they are merely constants that usually make most sense in applications. One can perfectly well study, e.g., $[10, 7]$-local algorithms, and interpret them either as (a) distributed algorithms that make decisions based on an asymmetric local neighborhood or (b) as time-local algorithms that can postpone decisions and choose $y_i$ only after seeing inputs~up~to~$i+7$.

\begin{table}
  \begin{center}
    \caption{Correspondence between time-local online algorithms and distributed graph algorithms.\label{table:tl-da}}
\begin{tabular}{@{}l@{\qquad}l@{\qquad}l@{}}
\toprule
& Time-local online algorithms & Local distributed graph algorithms \\
& & on directed paths \\
\midrule
Weakest & regular $[T,T]$-local & $T$ rounds in the $\pn$ model \cite{angluin80local,attiya88computing,yamashita96computing} \\
& N/A & $T$ rounds in the $\local$ model \cite{Peleg2000,linial92locality} \\
& clocked $[T,T]$-local &  $T$ rounds in the numbered $\local$ model \\
Strongest & N/A &  $T$ rounds in the supported $\local$ model \cite{schmid2013exploiting,foerster2019power} \\
\bottomrule
\end{tabular}
\end{center}
\end{table}

\subsection{Distributed Graph Algorithms}
Let $G = (V,E)$ be a graph that represents the communication topology of a distributed system consisting of $n$ nodes $V = \{v_1, \ldots, v_n\}$. Each node $v_i \in V$ corresponds to a processor and the edges denote direct communication links between processors, i.e., any pair of nodes connected by an edge can directly communicate with each other. In this work, $G$ will always be a path of length $n$ with the set of edges given by $E = \{ \{ v_i, v_{i+1} \} : 1 \le i < n \}$.

\subparagraph{Synchronous Distributed Computation.}
We start with the basic synchronous message-passing model of computation.
Let $X$ and $Y$ be the set of input and output labels, respectively, The input is the vector $\vec x = (x_1, \ldots, x_n) \in X^n$, where $x_i$ is the local input of node $v_i$. Initially, each node $v_i$ only knows its local input $x_i \in X$.

The computation proceeds in synchronous rounds, where in each round $t = 1, 2, \ldots$, all nodes in parallel perform the following in lock-step:
\begin{enumerate}
\item send messages to their neighbors,
\item receive messages from their neighbors, and
\item update their local state.
\end{enumerate}
An algorithm has running time $T$ if at the end of round $T$, each node $v_i$ halts and declares its own local output value $y_i$. The output of the algorithm is the vector $\vec y = (y_1, \ldots, y_n) \in Y^n$.

Note that---since there is no restriction on message sizes---every $T$-round algorithm can be represented as a simple full-information algorithm: In every round, each node broadcasts all the information it currently has, i.e., its own local input and inputs it has received from others, to all of its neighbors. After executing this algorithm for $T$ rounds, this algorithm has obtained all the information \emph{any} $T$-round algorithm can. Thus, every $T$-round algorithm can be represented as map from radius-$T$ neighborhoods to output values.

\subsection{Distributed Algorithms vs.\ Time-Local Online Algorithms}

The distributed computing literature has extensively studied the computational power of different variants of the above basic model of graph algorithms.
The variants are obtained by considering different types of \emph{symmetry-breaking information}: in addition to the problem specific local input $x_i \in X$, each node $v_i$ also receives some input $z_i$ that encodes additional model-dependent symmetry-breaking information.

We will now discuss four such models in increasing order of computational power. The correspondence between these models and time-local online algorithms is summarized by Table~\ref{table:tl-da}.

\subparagraph{\boldmath The port-numbering model $\pn$ on directed paths.} In the $\pn$ model \cite{angluin80local,attiya88computing,yamashita96computing} all nodes are anonymous, but the edges of $G$ are consistently oriented from $v_i$ towards $v_{i+1}$ for all $1 \le i < n$. The nodes know their degree and can distinguish between the incoming and outgoing edges.
The orientation only serves as symmetry-breaking information; the communication links are bidirectional.

Any deterministic algorithm in this model corresponds to a map $A \colon X^{2T+1} \to Y$ such that the output of node $v_i$ for $1 \le i \le n$ is
\[
y_i = A(x_{i-T}, \ldots,x_i,\ldots, x_{i+T}),
\]
where we let $x_j = \bot$ for any $j < 0$ or $j > n$ (the $\bot$ values are used in the scenarios where nodes near the endpoints of the path observe these endpoints).
Note that this is exactly the definition of a regular $[T,T]$-local algorithms (Definition~\ref{def:local-algorithm}).

\subparagraph{\boldmath The $\local$ model on directed paths.} In the $\local$ model \cite{linial92locality,Peleg2000} each node receives the same information as in the port-numbering model $\pn$, but in addition, each node $v_i$ is also given a unique identifier $\ID(v_i)$ from the set $\{ 1_, \ldots, n^c \}$ for some constant $c \ge 1$; the nodes do not know $n$. Lower bounds for this model also hold in the weaker $\pn$ model.

\subparagraph{\boldmath The numbered $\local$ model on directed paths.} The numbered $\local$ model further assumes that the unique identifiers have a specific, ordered structure: node $v_i$ is given the identifier $\ID(v_i) = i$ as local input in addition to the problem specific input $x_i \in X$. That is, each node knows its distance from the start of the path.
Any deterministic algorithm in this model corresponds to a map $A \colon X^{2T+1} \times \mathbb{N} \to Y$ such that the output of node $v_i$ for $1 \le i \le n$ is
\[
y_i = A(x_{i-T}, \ldots,x_i,\ldots, x_{i+T}, i) = A_i(x_{i-T}, \ldots,x_i,\ldots, x_{i+T}).
\]
Observe that this coincides with \emph{clocked} $[T,T]$-local algorithms (Definition~\ref{def:local-algorithm}).

This model is \emph{not} something that to our knowledge has been studied in the distributed computing literature; the name ``numbered $\local$ model'' is introduced here. However, it is very close to another model, so-called supported $\local$ model, which has been studied in the literature.

\subparagraph{\boldmath The supported $\local$ model on directed paths.} The supported $\local$ model \cite{schmid2013exploiting,foerster2019power} is the same as the numbered $\local$ model, but each node is also given the length $n$ of the path as local input. This would correspond to clocked $[T,T]$-local algorithms that also know the length of the input in advance but do not see the full input. This is the most powerful model, and hence, all impossibility results in this model also hold for all the previous models.

\subsection{Transferring Results From Distributed Computing}

\subsubsection{Symmetry-Breaking Tasks in Distributed Computing}

One of the key challenges in distributed graph algorithms is local symmetry breaking: two adjacent nodes in a graph (here: two consecutive nodes along the path) have got isomorphic local neighborhoods but are expected to produce different outputs.

In distributed computing, a canonical example is the \emph{vertex coloring problem}. Consider, for example, the task of finding a proper coloring with $k$ colors. This is trivial in the supported and numbered models (node number $i$ can simply output e.g.\ $i \bmod 2$ to produce a proper $2$-coloring). However, the case of the $\pn$ model and the $\local$ model is a lot more interesting.

One can use simple arguments based on \emph{local indistinguishability} \cite{boldi01effective,yamashita96computing} to argue that such tasks are not solvable in $o(n)$ rounds in the $\pn$ model. In brief, if two nodes have identical radius-$T$ neighborhoods, then they will produce the same output in any deterministic $\pn$-algorithm that runs in $T$ rounds. For example, it immediately follows $k$-coloring for any $k$ requires $\Omega(n)$ rounds in the deterministic $\pn$ model.

Yet another idea one can exploit in the analysis of symmetry-breaking tasks is \emph{rigidity} (or, put otherwise, the lack of \emph{flexibility}); see e.g.\ \cite{chang2020distributed,brandt2017lcl}. For example, $2$-coloring is a rigid problem: once the output of one node is fixed, all other nodes have fixed their outputs. Informally, two nodes arbitrarily far from each other need to be able to coordinate their decisions---or otherwise there is at least one node between them that produces the wrong output. This idea can be used to quickly show that e.g.\ $2$-coloring in the $\local$ model requires also $\Omega(n)$ rounds, and this holds even if we consider \emph{randomized} algorithms (say, Monte Carlo algorithms that are supposed to work w.h.p.).

This leaves us with the case of symmetry-breaking tasks that \emph{are} flexible. A canonical example is the $3$-coloring problem. Informally, one can fix the colors of any two nodes (sufficiently far from each other), and it is always possible to complete the coloring between them. While the $3$-coloring problem requires $\Omega(n)$ rounds in the deterministic $\pn$ model, it is a problem that can be solved much faster in the deterministic $\local$ model and also in the randomized $\pn$ model: the Cole--Vishkin technique \cite{cole86deterministic} can be used to do it in only $O(\log^* n)$ rounds. However, what is important for us in this work is that this is also known to be tight \cite{linial92locality,naor91lower}: $3$-coloring is not possible in $o(\log^* n)$ rounds, not even if we use both unique identifiers and randomness.

Moreover, the same holds for \emph{all} problems in which the task is to label a path with some labels from a constant-sized set $Y$, and arbitrarily long sequences of the same label are forbidden: no such problem can be solved in constant time in the $\pn$ or $\local$ model, not even if one has got access to randomness \cite{linial92locality,naor91lower,naor1995can,chang2019time,suomela13survey}.

We will soon see what all of this implies for us, but let us discuss one technicality first: symmetric vs.\ asymmetric horizons.

\subsubsection{Symmetric vs.\ Asymmetric Horizons}

While the standard models in distributed computing correspond to \emph{symmetric} horizons ($[T,T]$-local algorithms) and the study of online algorithms is typically interested in \emph{asymmetric} horizons (e.g.\ $[T,-1]$-local algorithms), in many cases this distinction is inconsequential when one considers symmetry-breaking tasks.

Consider, for example, the vertex coloring problem $\Pi$. Assume one is given an $[a,b]$-local algorithm $A$ for solving $\Pi$. Now for any constant $c$ one can construct an $[a+c,b-c]$-local algorithm $A'$ that solves the same problem. In essence, node $x_i$ in algorithm $A'$ simply outputs $A(x_{i-a-c},\dotsc,x_{i+b-c})$. Now if one compares the outputs of $A'$ and $A$, we produce the same sequence of colors but shifted by $c$ steps. This is the standard trick one uses to convert algorithms for directed paths into algorithms for rooted trees and vice versa; see e.g.\ \cite{rybicki15exact,chang2020distributed}. The only caveat is that we need to worry about what to do near the boundaries, but for our purposes the very first and the very last outputs are usually inconsequential (can be handled by an ad hoc rule, or simply ignored thanks to the additive constant in the definition of the competitive ratio).

Hence, in essence everything that we know about symmetry-breaking tasks in the context of $[T,T]$-local algorithms can be easily translated into equivalent results for $[T',-1]$-local algorithms for $T' = 2T+1$, and vice versa.

\subsubsection{Distributed Optimization and Approximation}

So far we have discussed distributed graph problems in which the task is to find \emph{any} feasible solution subject to some local constraints. However, especially in the context of online algorithms, we are usually interested in finding \emph{good} solutions. Typical examples are problems such as the task of finding the minimum dominating set problem and the maximum independent set problem.

These are not, strictly speaking, symmetry-breaking tasks. Nevertheless, it turns out to be useful to look at also such tasks through the lens of symmetry breaking. In brief, the following picture emerges \cite{suomela13survey,naor1995can,goos13local-approximation,czygrinow08fast}:
\begin{itemize}
    \item Deterministic $O(1)$-round $\local$-model algorithms are not any more powerful than deterministic $O(1)$-round $\pn$-model algorithms.
    \item Randomized $O(1)$-round algorithms are strictly stronger than deterministic $O(1)$-round algorithms.
\end{itemize}
For example, if we look at the \emph{minimum dominating set problem} in unweighted paths, the only possible deterministic $O(1)$-round $\pn$-algorithm produces a constant output: all nodes (except possibly some nodes near the boundaries) are part of the solution. Deterministic $O(1)$-round $\local$-algorithm can \emph{try} to do something much more clever, with the help of unique identifiers, but a Ramsey-type argument \cite{naor1995can,goos13local-approximation,czygrinow08fast} shows that it is futile: there always exists an adversarial assignment of unique identifiers such that the algorithm produces a near-constant output for all but $\epsilon n$ many nodes, for an arbitrarily small $\epsilon > 0$. However, randomized algorithms can do much better (at least on average); to give a simple example, consider an algorithm that first takes each node with some fixed probability $0 < p < 1$, and then adds the nodes that were not yet dominated. Finally, in the numbered and supported models one can obviously do much better, even deterministically (simply pick every third node).

This is now enough background on the most relevant results related to $T$-round algorithms in deterministic and randomized $\pn$ and $\local$ models.

\subsubsection{Consequences: Time-Local Solvability}

It turns out to be highly beneficial to try to classify online problems in the above terms: whether there is a component that is equivalent to a symmetry-breaking task or to a nontrivial distributed optimization problem. This is easiest to explain through examples:

\subparagraph{Online file migration (Example~\ref{ex:file-migration}).}

This problem is trivial to solve for a constant input; the same also holds for any input sequence that is strictly periodic. Indeed, if the adversary gives a long sequence of constant inputs (or follows a fixed periodic pattern), it only helps us. Hence none of the above obstacles are in our way; interesting inputs are sequences that already break symmetry locally. Furthermore, as we also know that this is a well-known online problem solvable with the full history, we would expect that there is also a regular time-local algorithm for solving the task, with a nontrivial competitive ratio. While this is a \emph{heuristic} argument (based on the \emph{lack} of specific obstacles), we will see in Section~\ref{sec:file-migration} that the argument works very well in this case.

\subparagraph{Online load balancing (Example~\ref{ex:load-balancing}).}

This problem is fundamentally different from the file migration problem. Let us assume that the algorithm needs to output the action (on which machine to schedule the current job). Consider an input sequence that consists of the constant value~$2$. In such a case, there is an optimal solution that alternately assigns the $2$-unit jobs to the two machines, ensuring that the load of any machine at any time is exactly $1$. But this means that an optimal algorithm has to turn the constant input $2,2,2,2,\dotsc$ into a strictly alternating sequence like $1,2,1,2,\dotsc$. Any deviation from it will result at least momentarily in a load of $2$. Hence in an \emph{optimal} solution we need to \emph{at least} solve the $2$-coloring problem within each segment of such constant inputs. As we discussed, this is not possible in the $\pn$ or $\local$ model in $O(1)$ rounds, not even with the help of randomness; it follows that there certainly is no optimal regular time-local online algorithm, with any constant horizon $T$. Optimal solutions have to resort to the clock.

However, this does not prevent us from solving the problem with a finite competitive ratio. Indeed, even the trivial solution that outputs always $1$ will result in a maximum load that is at most $2$ times as high as optimal.

Furthermore, if we were not interested in the \emph{maximum} load but the \emph{average} load, we arrive at a task that is, in essence, a distributed optimization problem. Regular \emph{randomized} time-local algorithms may then have an advantage over regular \emph{deterministic} time-local algorithms, and indeed this turns out to be the case here: simply choosing the machine at random is already better on average than assigning all tasks to the same machine.

\subsubsection{Consequences: Time-Local Models}

On a more general level, the above discussion also leads to the following observation: the definition of \emph{regular} time-local algorithms is \emph{robust}. Now it coincides with the $\pn$ model, but even if one tried to strengthen it so that its expressive power was closer to the $\local$ model, very little would change in terms of the results.

Conversely, if one weakened the \emph{clocked} model so that e.g.\ the clock values are not increasing by one but they are only a sequence of monotone, polynomially-bounded time stamps, we would arrive at a model very similar to the $\local$ model, and as we have seen above, time-local algorithms in such a model cannot solve symmetry-breaking tasks any better than in the regular model. Hence in order to capture the idea of a model that is strictly more powerful than the regular model, it is not sufficient to have a definition in which the clock values are merely monotone and polynomially bounded, but one has to further require e.g.\ that the clock values increase at each step at most by a constant. (Such a model with constant-bounded clock increments would indeed be a meaningful alternative, and it would fall in its expressive power strictly between our regular and clocked models. It would be strong enough to solve $3$-coloring but not strong enough to solve $2$-coloring in a time-local fashion. We do not explore this variant further, but it may be an interesting topic for further research, especially when comparing its power with randomized $\pn$ algorithms.)

\section{Randomized Local Algorithms}
\label{sec:random}

In this section, we define randomized time-local online algorithms.
We use these definitions in our studies of the online file migration problem: in Section~\ref{sec:synthesis} we synthesize behavioral time-local algorithms, and in Section~\ref{sec:file-migration} we establish a lower bound (Theorem~\ref{thm:file-migration-lb}) for randomized \emph{clocked} time-local algorithms, which studies degradation of the competitive ratio with decreasing $T$.

In the classic online setting, there are two equivalent ways of describing randomized algorithms:
\begin{itemize}
\item at the start, randomly sample an algorithm from a~set of deterministic algorithms, or
\item at each step, make a random decision based on coin flips.
\end{itemize}
The former corresponds to \emph{mixed strategies}, where we sample all random bits used by the algorithm before seeing any of the input, whereas the latter corresponds to \emph{behavioral strategies}, where the algorithm generates random bits along the way as it needs them.

\subparagraph*{Mixed vs.\ Behavioral Strategies in Time-Local Algorithms.}
The above two characterizations are equivalent in classic online algorithms~\cite{Borodin1998}: to simulate a~behavioral strategy with a~mixed strategy, we can generate an infinite sequence $(r_i)_{i \ge 1}$ of random bit strings in advance and use the random bits given by $r_i$ in step $i$. Conversely, we can choose to flip coins only at the beginning and store the outcomes in memory and refer to them consistently at later steps.

In contrast, for time-local algorithms, the behavioral and mixed strategies differ in a~way we can exploit randomness, and each type of strategy brings distinct advantages. If we use a behavioral strategy, at each step the algorithm can make coin flips that are independent of the previous coin flips. This enables algorithmic strategies that can e.g.\ break ties in an independent manner in successive steps.
If we use a mixed strategy, we commit to a~randomly chosen (consistent) strategy: the initial random choice influences all outputs.
Interpreting the differences between the two types of randomness in terms of distributed models~\cite{NewmanS96},  behavioral time-local strategies correspond to \emph{private randomness} available at each step~$i$, whereas
mixed time-local strategies correspond to \emph{shared randomness} across the whole sequence. Interestingly, in the time-local setting, it is also natural to consider a~\emph{combination of both}: we choose a behavioral time-local strategy at random.

With this in mind, we arrive at three natural definitions of randomized time-local algorithms:
\begin{enumerate}
\item \emph{Behavioral strategy} time-local algorithms,
\item \emph{Mixed strategy} time-local algorithms,
\item \emph{General strategy} time-local algorithms that use a combination of both.
\end{enumerate}
We now give formal definitions for each class of randomized time-local algorithms.

\begin{definition}[behavioral local algorithms]
  A behavioral $[a,b]$-local algorithm is given by the sequence of maps $(A_i)_{i \ge 1}$ of the form
 $A_i \colon X^{a+b} \times [0,1) \to Y$, where the output is given by
\[
y_i = A_i(x_{i-a}, \ldots, x_{i+b}, r_i),
\]
where $(r_i)_{i \ge 1}$ is a sequence of i.i.d.\ real values sampled uniformly from the unit range.
If $A_i=A_j$ for all $i,j$, then the algorithm is regular. Otherwise, it is clocked.
\end{definition}

\begin{definition}[mixed local algorithms]
  \label{def:rand-with-clock}
  Let $\mathcal{D}$ be a nonempty set of (deterministic) \mbox{$[a,b]$-local} algorithms.
  A mixed $[a,b]$-local algorithm over $\mathcal{D}$ is a probability measure $A \colon \mathcal{D} \to [0,1]$ over $\mathcal{D}$. The output of $A$ on input $\vec x$ is the random vector $\vec y = P(\vec x)$, where $P$ is a~deterministic time-local algorithm sampled from $\mathcal{D}$ according to $A$.
  If $\mathcal{D}$ is a subset of all regular $[a,b]$-local algorithms, then $A$ is regular. If $\mathcal{D}$ is a subset of all clocked $[a,b]$-local algorithms, then $A$ is clocked.
\end{definition}

\begin{definition}[general randomized local algorithms]
  A general randomized regular $[a,b]$-local algorithm is a mixed $[a,b]$-local algorithm over the set of regular behavioral $[a,b]$-local algorithms.
  A general randomized clocked $[a,b]$-local algorithm is a mixed $[a,b]$-local algorithm over the set of clocked behavioral $[a,b]$-local algorithms.
\end{definition}

\begin{theorem}
The class of general clocked randomized time-local algorithms is equivalent to mixed clocked time-local algorithms.
\label{thm:random-clock}
\end{theorem}
\begin{proof}
The general randomized time-local algorithms can be simulated by the mixed clocked algorithms: we can generate an infinite sequence $(r_i)_{i \ge 1}$ of random bit strings in advance and store them in functions $A_i$ of the deterministic clocked $[a,b]$-local algorithms, and use the random bits given by $r_i$ in step $i$. On the other hand, the mixed clocked time-local algorithms are contained in general randomized time-local algorithms, which concludes our claim.
\end{proof}

However, note that Theorem~\ref{thm:random-clock} does not hold for regular algorithms:
as time-local algorithms do not have memory to store past random outcomes, it is impossible to directly simulate mixed time-local algorithms by a behavioral time-local algorithm that flips coins only at the beginning.

The role of randomization was merely scratched in this work. We established that with clock, mixed time-local algorithms are at least as powerful as behavioral time-local algorithms.
As behavioral time-local algorithms cannot store past random coins, their power seems limited. Determining the relations between types of randomness is left to future work.

\subparagraph*{Adversaries and the Expected Competitive Ratio.}
We naturally extend the notion of competitiveness of time-local algorithms to randomized algorithms. For randomized algorithms, the answer sequence and the cost of an algorithm is a random variable. We will abuse the notation slightly to let $\vec y = A(\vec x)$ denote the random output generated by a randomized algorithm $A$ on input $\vec x$.

We say that a randomized online algorithm $A$ for a game defined with cost functions $(f_n)_{n \ge 1}$ is \emph{$c$-competitive} if
\[
\mathbb{E}[f_n(\vec x, A(\vec x))] \le c \cdot \OPT(\vec x)+d
\]
for any input sequence $\vec x$ and a fixed constant $d$.
The input sequence and the benchmark solution $\OPT$ is generated by an adversary.
We distinguish between the notion of competitiveness against various adversaries, having different knowledge about $A$ and different knowledge while producing the solution $\OPT$.
Competitive ratios for a~given problem may vary depending on the power of the adversary.
The adversary model used in this paper is an~\emph{oblivious offline adversary}, who must produce an input sequence in advance, merely knowing the description of the algorithm it competes against (in particular, it may have access to probability distributions that the algorithm uses, but not the random outcomes), and pays an optimal offline cost for the sequence.
For a~comprehensive overview of adversary types, see \cite{Borodin1998}.

We raise a question regarding the adaptive offline adversary in the time-local setting.
A well-known result in classic online algorithms states that if there exists a \mbox{$c$-competitive} randomized algorithm against it, then there exists a deterministic \mbox{$c$-competitive} algorithm, for any $c$~\cite{Ben-David1994}.
Does the existence~of a~competitive randomized time-local algorithm against the adaptive offline adversary imply the existence of any competitive deterministic \mbox{\emph{time-local}} algorithm?

\section{Automated Algorithm Synthesis}\label{sec:synthesis}

In this section, we describe a technique for automated design of time-local algorithms for local optimization problems, defined in Section~\ref{sec:local-problems}. This technique allows us to automatically obtain both upper and lower bounds for regular time-local algorithms. In particular, for deterministic algorithms, we can synthesize \emph{optimal} algorithms. We also discuss how to extend our approach to randomized algorithms.
As our case study problem, we use the simplified variant of online file migration.

\subsection{Overview of the Approach}

We now assume that the input and output sets $X$ and $Y$ are finite. Recall that a regular time-local algorithm that has access to last $T$ inputs is given by a map $A \colon X^T \to Y$. The synthesis task is as follows: given the length $T \in \mathbb{N}$ of the input horizon, find a map $A$ that minimizes the competitive ratio.
For simplicity of presentation, we will ignore short instances of length $n < T$, as short input sequences do not influence the competitive ratio.

\subparagraph*{The Synthesis Method.}
The high-level idea of our synthesis approach is simple:
\begin{enumerate}
\item Iterate through all of the algorithm candidates in the set $\mathcal{A} = \{ X^T \to Y \}$.

\item Compute the competitive ratio $c(A)$ for each algorithm $A \in \mathcal{A}$.

\item Choose the algorithm $A$ that minimizes the competitive ratio.
\end{enumerate}
Given that the input and output sets $X$ and $Y$ are finite, the set $\mathcal{A}$ of algorithms is also finite: there are exactly $|Y|^{|X|^T}$ algorithms we need to check.

\subparagraph*{Evaluating the Competitive Ratio.}
Obviously, the challenging part is implementing the second step, i.e., computing the competitive ratio of a given algorithm $A$. A priori it may seem that we would need to consider infinitely many input strings in order to determine the competitive ratio of the algorithm. However, for any local optimization problem $\Pi$ with finite input and output sets, it turns out that we can capture the competitive ratio by analyzing a finite combinatorial object.

We show that for any time-local algorithm $A$, we can construct a (finite) weighted, directed graph $G(\Pi, A)$ that captures the costs of output sequences as walks in $G(\Pi,A)$. The cost of any regular time-local algorithm on adversarial input sequences can be obtained by evaluating the weight of all cycles defined in this graph $G(\Pi, A)$.

\subsection{Evaluating the Competitive Ratio of an Algorithm}

We will now describe how to construct the graph $G(\Pi,A)$ for a given local optimization problem $\Pi$ and a regular local algorithm $A$. For the sake of simplicity, we only consider the sum aggregation function; the construction for $\min$ and $\max$ aggregation is defined analogously.

Let $r \in \mathbb{N}$ be the horizon of the local optimization problem $\Pi$, $v$ the local cost function of $\Pi$, and $A \colon X^T \to Y$ be the regular time-local algorithm. To avoid unnecessary notational clutter, we describe the construction for $r=1$; however, the construction is straightforward to generalize.

\subparagraph*{The Dual de Bruijn Graph.}

We construct a directed graph $G = (V,E)$ on the set of vertices $V = X^T \times Y$. For any $\vec x = (x_1, \ldots, x_k)$, we define $s(\vec x, a) = (x_2, \ldots, x_k, a)$ to be the successor of $\vec x$ on~$a$. For each vertex $(\vec a, y) \in V$, there is a directed edge towards the vertex $(\vec a', y') \in V$, where for all $y' \in Y$, $\vec a' = s(\vec a, x)$ and $x \in X$. Note that there are self-loops in this graph.

The idea is that for any sufficiently long input $n \ge T$, an input sequence $\vec x \in X^n$ and an output sequence $\vec y \in Y^n$ define a walk $\rho(\vec x, \vec y)$ in the graph~$G$. After the time step $i \ge T$, we are at vertex $( x_{i-T+1}, \ldots, x_{i},  y_{i} ) \in V$ and the next vertex is given by $(x_{i-T+2}, \ldots, x_{i+1}, y_{i+1}) \in V$. In particular,
from any walk $\rho$ we can obtain the following sequences:
\begin{itemize}
\item an input sequence $\vec x(\rho) = (x_1, \ldots, x_n) \in X^n$,
\item some (possibly optimal) solution $\vec y^*(\rho) = (y_{1}, \ldots, y_{n})$ for $\vec x(\rho)$, and
\item the output $y(\rho) = A(\vec x(\rho))$ given by the algorithm on $\vec x(\rho)$.
\end{itemize}
Vice versa, any pair of input $\vec x$ and output $\vec y^*$ sequences defines a walk $\rho(\vec x, \vec y^*)$ in $G$.

\subparagraph*{Assigning the Costs.}
For each edge $e \in E$ in the graph, we assign \emph{two} costs for the edge: the first describes the cost paid by some (possibly optimal) output, and the second, the cost paid by the algorithm $A$.
Recall that for a local optimization problem $\Pi$, the costs are given by the local cost function $v \colon X^{r+1} \times Y^{r+1} \to \mathbb{R} \cup \{\infty\}$. For the case $r=1$, the function $v$ takes $4$ parameters.

Consider an edge $e =((\vec a, b), (\vec a',  b')) \in E$, where $\vec a' = (a_2, \ldots, a_{T}, x)$ for some $x \in X$.
We now define the \emph{adversary cost} $w(e)$ and \emph{algorithm cost} $q(e)$ of the edge $e$.
We define
\begin{align*}
  w(e) &= v( a_{T}, x, b, b') \textrm{ is the cost paid output } b' \textrm{ on input } x, \\
  q(e) &=  v( a_{T}, x, b, A(\vec a)) \textrm{ is the cost paid by the output of the algorithm on input } x,
\end{align*}
where $v \colon X \times Y \to \mathbb{R} \cup \{ \infty\}$ is the local cost function of the problem $\Pi$.

We note that the costs generalize to arbitrary $r>1$ by applying the definitions of local cost functions given in Section~\ref{sec:local-problems} and extending the set of vertices to be $V = X^{T+r} \times Y^r$ to accommodate the larger horizon used for the local cost function.

\subparagraph{The Cost Ratio of a Walk.}
Finally, for any walk $\rho = (v_1, \ldots, v_k)$ in $G$, we define
\[
w(\rho) = \sum_{i=1}^{k-1} w(v_i, v_{i+1}), \qquad q(\rho) = \sum_{i=1}^{k-1} q(v_i, v_{i+1}).
\]
Here $w(\rho)$ and $q(\rho)$ define the \emph{total} adversary and algorithm costs for the walk $\rho$. The \emph{cost ratio} of a walk $\rho$ is defined as
\[
r(\rho) = \begin{cases}
  q(\rho) / w(\rho) & \textrm{if } w(\rho) > 0 \\
  1 & \textrm{if } q(\rho)=w(\rho)=0 \\
  \infty & \textrm{otherwise.}
  \end{cases}
\]
That is, on input $\vec x(\rho)$ the algorithm $A$ will pay a cost of $q(\rho) + O(1)$, whereas the optimum solution has cost at most $w(\rho) + O(1)$; there is a constant overhead on the costs since we ignore the costs incurred during the first $T-1 + r = O(1)$ inputs.

\subparagraph*{Bounding the Competitive Ratio.} We now show that we can compute the competitive ratio of the algorithm $A$ using the graph $G = G(\Pi, A)$. We say a walk $\rho = (v_1, \ldots, v_k) \in V^k$ is \emph{closed} if its starts and ends in the same vertex $v_1=v_k$. A \emph{directed cycle} is a closed walk that is non-repeating, i.e., $v_i \neq v_j$ for all $1 \le i \le j < k$.

\begin{theorem}\label{thm:synthesis}
  The competitive ratio of the algorithm $A$ is 
  $$\max \{ r(\rho) : \rho \textrm{ is a directed cycle of } G \}.$$
\end{theorem}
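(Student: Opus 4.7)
My plan is to establish matching lower and upper bounds, where $c^\ast := \max\{ r(\rho) : \rho \text{ a directed cycle of } G(\Pi, A)\}$.

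For the lower bound ($\geq c^\ast$), I would take a cycle $\rho^\ast$ achieving $c^\ast$; it encodes a periodic input pattern of length $|\rho^\ast|$ together with a reference output whose per-loop cost is $w(\rho^\ast)$. For each $k$, consider the input that first uses an $O(1)$-length prefix to reach the start vertex of $\rho^\ast$, then traverses $\rho^\ast$ a total of $k$ times. Because $A$ is unclocked, repeating the same length-$|\rho^\ast|$ window yields identical per-step decisions, so the algorithm pays $k \cdot q(\rho^\ast) + O(1)$, while $\OPT$ is at most $k \cdot w(\rho^\ast) + O(1)$ (using the reference output encoded in the cycle). Sending $k \to \infty$ forces the competitive ratio to be at least $c^\ast$, regardless of the additive constant allowed in the definition of competitiveness. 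The degenerate case $c^\ast = \infty$ (some cycle with $w = 0 < q$) is analogous: $\OPT$ stays bounded while the algorithm's cost grows without bound.

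For the upper bound ($\leq c^\ast$), assume $c^\ast < \infty$. The key tool is the standard walk-to-cycle decomposition: any walk $\rho$ in a directed graph equals, as a multiset of edges, the disjoint union of simple cycles $C_1, \ldots, C_m$ and a simple path $P$ of length at most $|V|$ (iteratively extract a simple cycle whenever a vertex is revisited; the residue is simple). Given an input $\vec x$ with optimum output $\vec y^\ast$, let $\rho = \rho(\vec x, \vec y^\ast)$. Since $w$ and $q$ are edge-additive, $w(\rho) = w(P) + \sum_i w(C_i)$ and similarly for $q$. Each $C_i$ satisfies $q(C_i) \leq c^\ast \cdot w(C_i)$, whence
\[
q(\rho) \;\leq\; c^\ast \, w(\rho) + \bigl(q(P) - c^\ast w(P)\bigr).
\]
Since $|P| \leq |V| = |X|^T |Y|$ and the edge weights on traversed edges are finite, the bracketed correction is an absolute constant depending only on $\Pi$, $A$, and $T$. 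Combining this with the identifications $w(\rho) = \OPT(\vec x) + O(1)$ and $q(\rho) = f_n(\vec x, A(\vec x)) + O(1)$ yields the claimed inequality $f_n(\vec x, A(\vec x)) \leq c^\ast \cdot \OPT(\vec x) + O(1)$.

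The main obstacle I foresee is the bookkeeping that links $w(\rho)$ and $q(\rho)$ to the optimum and algorithm costs up to $O(1)$ boundary terms. Concretely, one must verify that the edge weights $w(e)$ and $q(e)$ correctly aggregate the per-step local cost $v(\cdot)$ at every position of the walk, that the $\bot$-padded initial window (and any terminal remainder) contributes only $O(1)$ to both totals, and, most delicately, that $q(\rho(\vec x, \vec y^\ast))$ genuinely captures the algorithm's cost on $\vec x$ independently of which optimal solution $\vec y^\ast$ is chosen to define the walk's second coordinates. Once these accounting identities are pinned down, the cycle-decomposition bound above is routine.
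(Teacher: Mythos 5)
Your proof is correct in outline, and it takes a genuinely different route from the paper's. Where you differ is in how you pass from cycles to arbitrary walks. The paper proves a mediant-style inequality (Lemma~\ref{lemma:walk-decomposition}: for any decomposition of a walk into pieces, some piece has cost ratio at least that of the whole walk), then in Lemma~\ref{lemma:cycle-ub} introduces ``closed extensions'': given the walk $\rho(\vec x, \vec y^*)$ traced out by a near-worst-case input, it extends this to a closed walk $\hat\rho$ at bounded extra adversary cost $\delta$, argues $r(\hat\rho) \ge r(\rho) - o(1)$, decomposes $\hat\rho$ into directed cycles, and invokes the mediant lemma to exhibit a cycle of ratio greater than $c$. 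You instead bypass the closed-extension step entirely: you extract all simple cycles from $\rho$ directly, leaving a residual simple path $P$ of bounded length, and use edge-additivity of $w$ and $q$ plus the per-cycle bound $q(C_i) \le c^*\, w(C_i)$ to get $q(\rho) \le c^*\, w(\rho) + O(1)$ outright. This is arguably more elementary and gives the explicit additive constant in one step, at the cost of needing to argue that the residual path's weight is finite (which holds since $P$ is part of the walk defined by a finite-cost optimal solution). The paper's route defers the boundary accounting to the constant $\delta = \max_\rho w(\hat\rho \setminus \rho)$, which plays the same role as your $q(P) - c^*w(P)$. Both establish the same theorem; the mediant inequality in the paper is a nice reusable fact but is not strictly necessary given your decomposition. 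Your cautionary remarks about verifying the bookkeeping identities $q(\rho) = f_n(\vec x, A(\vec x)) + O(1)$ and $w(\rho) = \OPT(\vec x) + O(1)$ (including the $\bot$-padded boundary and the independence from the particular optimal $\vec y^*$ used to define the walk) are well-placed; these are exactly the details that the paper also handles only implicitly.
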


Figure~\ref{fig:synthesis-graph} gives an example of the dual de Bruijn graph for the online file migration problem (Example~\ref{ex:file-migration}) and an algorithm with local horizon $T=2$.

To prove the above theorem, we introduce three lemmas and the following definitions. A \emph{closed extension} of a walk $\rho$ is a closed walk $\rho'$ that contains $\rho$ as a prefix. A \emph{subwalk} $\rho'$ of a walk $\rho = (v_1, \ldots, v_k)$ is a subsequence $(v_i, \ldots, v_j)$ for some $1 \le i \le j \le k$. A \emph{decomposition of $\rho$} into $L$ subwalks is a sequence of subwalks $\rho_1, \ldots, \rho_L$ of $\rho$ such that their concatenation $\rho = \rho_1 \cdots \rho_L$.

\begin{lemma}\label{lemma:walk-decomposition}
  Let $\rho$ be a walk in $G$. For any decomposition of $\rho$ into $L$ subwalks $\rho_1 \cdots \rho_L$, there exists some $1 \le i \le L$ such that $r(\rho_i) \ge r(\rho)$.
\end{lemma}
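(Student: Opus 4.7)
The plan is to exploit that both $w$ and $q$ are additive under concatenation of walks: since a decomposition $\rho=\rho_1\cdots\rho_L$ partitions the edges of $\rho$ into $L$ groups (each consecutive pair of subwalks shares only an endpoint, not an edge), we have
\[
w(\rho)=\sum_{i=1}^{L} w(\rho_i), \qquad q(\rho)=\sum_{i=1}^{L} q(\rho_i).
\]
From there, the claim reduces to a weighted averaging argument on these two sums, sometimes called the mediant inequality: for nonnegative numbers $q_i$ and $w_i$ with at least one $w_i>0$, one has $\max_i q_i/w_i \ge (\sum_i q_i)/(\sum_i w_i)$.

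Concretely, I would split on the value of $r(\rho)$. If $r(\rho)=1$ because $w(\rho)=q(\rho)=0$, then by nonnegativity every $\rho_i$ also has $w(\rho_i)=q(\rho_i)=0$, so $r(\rho_i)=1$ for every $i$ and the claim is immediate. If $r(\rho)=\infty$, meaning $w(\rho)=0$ but $q(\rho)>0$, then again by nonnegativity $w(\rho_i)=0$ for all $i$, while $q(\rho_k)>0$ for at least one index $k$, so $r(\rho_k)=\infty$ as required.

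The interesting case is $w(\rho)>0$, where $r(\rho)=q(\rho)/w(\rho)$. Here I would set $I=\{i : w(\rho_i)>0\}$ and first observe that if any $j\notin I$ has $q(\rho_j)>0$, then $r(\rho_j)=\infty\ge r(\rho)$ and we are done. Otherwise every $j\notin I$ contributes $0$ to both sums, giving $\sum_{i\in I} w(\rho_i)=w(\rho)$ and $\sum_{i\in I} q(\rho_i)=q(\rho)$. Applying the mediant inequality to this restricted family then yields some $i^*\in I$ with $r(\rho_{i^*})=q(\rho_{i^*})/w(\rho_{i^*})\ge q(\rho)/w(\rho)=r(\rho)$.

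The main obstacle, and the reason the proof needs a case split rather than a one-line averaging argument, is the three-way definition of $r(\cdot)$: zero-weight subwalks have to be handled separately because the mediant inequality only speaks about positive denominators, and the $\infty$ case must not be conflated with the trivially normalized case $r=1$. The nonnegativity of $w$ and $q$ on individual edges, which follows from how $v$ is used in the local optimization problems we consider (e.g.\ the file migration costs in Example~\ref{ex:file-migration}), is what makes the case analysis go through cleanly; if $v$ could take negative finite values on some edges, a more delicate argument (or an explicit hypothesis of nonnegativity) would be needed.
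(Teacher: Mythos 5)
Your proposal is correct and rests on the same underlying idea as the paper's proof: the claim is exactly a (weighted) mediant inequality applied to the additive decomposition of $w(\rho)$ and $q(\rho)$ across the subwalks. The paper does not invoke the mediant inequality by name; instead it sorts the subwalks by cost ratio, defines partial sums $Q(i), W(i)$, and proves $r(\tau_i)\ge Q(i)/W(i)$ by induction, which amounts to re-deriving the same fact. So the core argument is identical, only packaged differently.

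Where you diverge usefully is in the case analysis. The paper's inductive step silently divides by $w_{i+1}$ and $W(i)$, so it implicitly assumes all subwalk denominators are strictly positive; it never addresses what happens when some $w(\rho_i)=0$ or when $r(\rho)\in\{1,\infty\}$ by the two exceptional clauses of the definition. You handle these explicitly: the $w(\rho)=q(\rho)=0$ case, the $w(\rho)=0,\ q(\rho)>0$ case, and zero-denominator subwalks inside a walk with $w(\rho)>0$. This is a genuine gap in the paper's write-up that your version fills. Your caveat about nonnegativity is also well taken: the formal definition of $v$ allows finite negative values, in which case the additivity still holds but the mediant inequality (and hence both your argument and the paper's) would fail; the lemma as stated implicitly relies on the edge costs $w$ and $q$ being nonnegative, which is true for the case-study problems but should really be an explicit hypothesis.

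One small point to tidy if you flesh this into a full proof: make sure the mediant step is applied to the restricted index set $I$ only after you have discharged the possibility that some $j\notin I$ has $q(\rho_j)>0$, exactly as you sketch — otherwise the restricted sums would not equal $w(\rho)$ and $q(\rho)$. As written your plan already does this in the right order.
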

\begin{proof}
  Let $\pi$ be a permutation on $\{1, \ldots, L\}$ and $\tau_i = \rho_{\pi(i)}$ such that
  \[
   r(\tau_1) \le r(\tau_2) \le \cdots \le r(\tau_L).
   \]
   Moreover, for $1 \le i \le L$ we define $r(\tau_i) = q_i/w_i$, where $q_i = q(\tau_i)$ and $w_i = w(\tau_i)$.
   We use the shorthands $Q(i) = \sum_{j=1}^i q_j$ and $W(i) = \sum_{j=1}^i w_j$. Note that for the aggregate cost ratio for a local optimization problem using the sum as its aggregation function gives that
   \[
r(\rho) = \frac{Q(L)}{W(L)} = \frac{\sum_{j=1}^i q_j}{\sum_{j=1}^i w_j}.
   \]
   We now show by induction that for all $1 \le i \le L$ we have that
   \[
   r(\tau_i) = \frac{q_i}{w_i} \ge \frac{Q(i)}{W(i)}.
   \]
   Observe that this implies that $r(\rho_{\pi(L)}) = r(\tau_L) \ge Q(L)/W(L) = r(\rho)$.

   The base case $i=1$ is vacuous. For the inductive step, assume that the claim holds for some $1 \le i < L$.
   For the sake of contradiction, assume that claim does not hold for $i+1$, i.e.,
   \[
   r(\tau_{i+1}) = \frac{q_{i+1}}{w_{i+1}} < \frac{Q(i+1)}{W(i+1)}.
   \]
   By rearranging the terms, we get
   \[
   \frac{Q(i+1) w_{i+1} - W(i+1) q_{i+1}}{W(i+1)} > 0,
   \]
   which in turn implies that $Q(i+1) \cdot w_{i+1} > W(i+1) \cdot q_{i+1}$ holds. Now observing that
   \[
   Q(i)w_{i+1} + q_{i+1} w_{i+1} = Q(i+1) w_{i+1} > W(i+1) \cdot q_{i+1} = W(i) q_{i+1} + w_{i+1} q_{i+1},
   \]
   we get that
   \[
   r(\tau_{i+1}) = \frac{q_{i+1}}{w_{i+1}} < \frac{Q(i)}{W(i)} \le \frac{q_i}{w_i} = r(\tau_i),
   \]
   where the second inequality follows from the induction assumption. However, this contradicts the fact that $\tau_1, \ldots, \tau_L$ were ordered according to increasing cost ratio.
\end{proof}

\begin{lemma}\label{lemma:cycle-lb}
  Let $\rho$ be a directed cycle in $G$. The competitive ratio of $A$ is at least $r(\rho)$.
\end{lemma}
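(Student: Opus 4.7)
The plan is to construct, from the cycle $\rho$, an adversarial family of inputs on which the algorithm's cost-to-adversary-cost ratio approaches $r(\rho)$, so any additive slack in the definition of competitive ratio vanishes in the limit.

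First, I would unpack what a directed cycle gives us. Let $\rho = (v_1, v_2, \ldots, v_k, v_1)$ with $v_i = (\vec a_i, b_i) \in X^T \times Y$, and recall that each transition $v_i \to v_{i+1}$ is determined by a single new input symbol $x_i \in X$ (since the $X^T$-component is updated by the shift $s(\cdot, x_i)$) and by the next output $b_{i+1}$. Writing $\vec a_1 = (\xi_1, \ldots, \xi_T)$, the edges of $\rho$ specify a block of inputs $\vec \xi \cdot x_1 x_2 \cdots x_{k}$ (of length $T+k$) together with an output block $b_1 b_2 \cdots b_k$. Because $\rho$ is closed, concatenating the $x$-block with itself produces a periodic input sequence of period $k$, and the corresponding $b$-block does the same on the output side.

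Next, for each $m \ge 1$ I would consider the input sequence $\vec x^{(m)}$ obtained by taking the initializing prefix $\vec \xi$ followed by $m$ copies of $x_1 \cdots x_k$. Two candidate output sequences matter: the witness sequence $\vec y^{*}$ given by $m$ repetitions of $b_1 \cdots b_k$ (after an arbitrary prefix on $\vec \xi$), and the actual output $A(\vec x^{(m)})$. By definition of $w(e)$, the total cost of $\vec y^{*}$ on $\vec x^{(m)}$ is $m \cdot w(\rho) + O(1)$, where the $O(1)$ term absorbs the contribution of the first $T+r$ positions and is independent of $m$. By definition of $q(e)$, and the crucial fact that $A$ is \emph{unclocked} (so it reacts identically to identical length-$T$ windows), once the algorithm has scanned past $\vec \xi$ its output on each subsequent window of the periodic input coincides with the algorithm-column used to define $q$; hence $f(\vec x^{(m)}, A(\vec x^{(m)})) = m \cdot q(\rho) + O(1)$, again with an $m$-independent additive error. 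In particular, $\OPT(\vec x^{(m)}) \le m \cdot w(\rho) + O(1)$.

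Now I would conclude by two cases. If $w(\rho) > 0$, the assumption that $A$ is $c$-competitive yields
\[
m \cdot q(\rho) + O(1) \;\le\; c \cdot \bigl(m \cdot w(\rho) + O(1)\bigr) + d,
\]
and dividing by $m \cdot w(\rho)$ and letting $m \to \infty$ gives $c \ge q(\rho)/w(\rho) = r(\rho)$. If $w(\rho) = 0$ but $q(\rho) > 0$, the same inequality forces $c = \infty$, matching $r(\rho) = \infty$. If $w(\rho) = q(\rho) = 0$, then $r(\rho) = 1$ and the bound is trivial since competitive ratios are at least $1$.

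The only mildly delicate point is arguing that the algorithm really is forced into the periodic regime: this is where unclockedness is essential, since the map $A \colon X^T \to Y$ depends only on the last $T$ inputs, and these windows cycle through the very sequence encoded by $\rho$. Making the $O(1)$ bookkeeping precise (the contribution of the initial prefix and of any partial boundary window) is routine once one fixes the convention that out-of-range inputs are $\bot$ and that the additive constant in the definition of competitiveness absorbs problem-specific start-up costs.
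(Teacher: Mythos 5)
Your proof is correct and follows essentially the same strategy as the paper: extract an adversarial input from the cycle and compare the algorithm's cost against the witness output encoded by the cycle. The difference is in rigor, not in substance. The paper's own proof is a single short paragraph that treats a single traversal of $\rho$ and asserts $q(\rho) \ge r(\rho)(w(\rho)+d) \ge r(\rho)\cdot\OPT(\vec x) + O(1)$ — an inequality that, read literally, points the wrong way when $d>0$ (since $q(\rho)=r(\rho)\,w(\rho)$ is \emph{less} than $r(\rho)(w(\rho)+d)$). What the paper evidently intends, and what your proof makes explicit, is the amortized argument: repeat the cycle $m$ times, note that the additive constant (from the boundary and from $d$ in the competitiveness definition) is $m$-independent, and let $m\to\infty$ so that the ratio converges to $q(\rho)/w(\rho)$. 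Your version also carefully uses unclockedness to pin down the algorithm's behavior once the length-$T$ window has entered the periodic regime, and handles the $w(\rho)=0$ and degenerate cases cleanly. In short, you have written out the argument the paper is gesturing at, patching the step that the paper states imprecisely.
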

\begin{proof}
Recall that the cycle defines an input sequence $\vec x = \vec x(\rho)$. By definition, the algorithm has cost at least $q(\rho)$ on this input sequence, whereas the optimum solution has cost at most $w(\rho) + d$ for some constant $d$.  Thus, the algorithm has a cost of at least $q(\rho) \ge r(\rho) (w(\rho) + d) \ge r(\rho) \cdot \OPT(\vec x) + O(1)$.
\end{proof}

\begin{lemma}\label{lemma:cycle-ub}
If the competitive ratio of $A$ is greater than $c + \varepsilon$ for some $\varepsilon > 0$, then there exists a directed cycle $\rho$ in $G$ with cost ratio $r(\rho) > c$.
\end{lemma}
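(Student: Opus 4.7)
The plan is to turn a failure of $c$-competitiveness into a walk in $G$ with cost ratio exceeding $c$, then peel off a directed cycle that inherits the high ratio. Since $A$ has competitive ratio strictly greater than $c+\varepsilon$, for every constant $d$ there exists an input $\vec{x}$ with $f_n(\vec{x},A(\vec{x}))>(c+\varepsilon)\OPT(\vec{x})+d$. Pairing $\vec{x}$ with an optimal output $\vec{y}^{*}$ yields a walk $\rho=\rho(\vec{x},\vec{y}^{*})$ in $G$ satisfying $w(\rho)=\OPT(\vec{x})+O(1)$ and $q(\rho)=f_n(\vec{x},A(\vec{x}))+O(1)$, so choosing $d$ large enough to dominate the boundary $O(1)$ terms makes $r(\rho)>c+\varepsilon/2$.

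The next step is greedy cycle peeling on $\rho$. I repeatedly find the smallest index $j$ such that $v_j=v_i$ for some $i<j$; by minimality, $v_i,\dotsc,v_{j-1}$ are pairwise distinct, so $(v_i,\dotsc,v_j)$ is a genuine directed cycle $C$. Excising $C$ replaces the current walk by $(v_0,\dotsc,v_i,v_{j+1},\dotsc)$. Crucially, because $v_i=v_j$, the shortcut edge out of $v_i$ in the new walk is literally the edge $(v_j,v_{j+1})$ of the old walk, so the edges of the old walk decompose exactly into those of the new walk plus those of $C$. Iterating to exhaustion produces cycles $C_1,\dotsc,C_s$ and a simple-path residual $\pi$ with $|\pi|\le|V|$, yielding $w(\rho)=w(\pi)+\sum_k w(C_k)$ and $q(\rho)=q(\pi)+\sum_k q(C_k)$, with $w(\pi),q(\pi)=O(1)$.

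A one-line averaging then concludes. Rearranging $q(\rho)>(c+\varepsilon/2)w(\rho)$ using the decomposition gives
\[
\sum_k q(C_k) - c\sum_k w(C_k) \;>\; (\varepsilon/2)\,w(\rho)-O(1),
\]
which is strictly positive once $w(\rho)$ is large enough, and this is achieved by picking $\vec{x}$ with sufficiently large $\OPT(\vec{x})$. Hence $\sum_k q(C_k)>c\sum_k w(C_k)$, and a single averaging step forces some individual $C_k$ to satisfy $r(C_k)>c$. The only subtlety is the degenerate case in which $\OPT(\vec{x})$ stays bounded across all bad inputs: then $\sum_k w(C_k)$ remains bounded while $\sum_k q(C_k)\to\infty$, so pigeonhole on the finitely many simple directed cycles in $G$ forces some $C^{*}$ to appear with unbounded multiplicity, which together with the $w$-bound gives $w(C^{*})=0$ and $q(C^{*})>0$, hence $r(C^{*})=+\infty>c$.

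The main obstacle is verifying that the peeling step really preserves the additive edge-cost decomposition; this rests entirely on the identity $v_i=v_j$ at the shortcut step, but it needs to be checked carefully because the walk's edge multiset must be shown to split exactly into that of the residual path and those of the extracted cycles. The degenerate bounded-$\OPT$ subcase is a secondary technical point, requiring a separate pigeonhole because the averaging argument only gains traction once $w(\rho)$ can be driven to infinity.
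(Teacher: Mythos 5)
Your proof is correct and follows the same overall strategy as the paper's: convert a bad input into a walk in $G$ with cost ratio exceeding $c$, decompose that walk into directed cycles, and argue by averaging that one of the cycles inherits a ratio above $c$. The mechanics differ slightly. The paper first extends the walk $\rho$ to a \emph{closed} walk $\hat\rho$ by appending a minimal-cost completion (whose extra cost is bounded by a universal constant $\delta$ because $G$ is finite), decomposes the closed walk into cycles, and then invokes Lemma~\ref{lemma:walk-decomposition} to conclude. You instead peel cycles greedily from the non-closed walk, leaving a simple-path residual $\pi$ of length at most $|V|$, and then carry out the averaging inline; this is in effect re-proving the special case of Lemma~\ref{lemma:walk-decomposition} you need, and trades the closed-extension bookkeeping for an explicit bound on the residual path. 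Your treatment of the degenerate regime where $\OPT(\vec x)$ stays bounded while the algorithm's cost diverges is handled via a separate pigeonhole on the finitely many simple cycles, yielding a zero-weight positive-cost cycle; the paper subsumes this case implicitly in the asymptotic bound $r(\hat\rho)\ge q(\rho)/(w(\rho)+\delta)$, which already blows up when $w(\rho)$ stays bounded and $q(\rho)\to\infty$. Both routes are sound; yours is somewhat more self-contained, while the paper's reuses Lemma~\ref{lemma:walk-decomposition} as a black box.
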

\begin{proof}
  For any given walk $\rho$ in $G$, let $\hat{\rho}$ be the shortest closed extension of $\rho$ that minimizes the cost of $\vec y^*(\hat{\rho})$. Note there may be multiple shortest closed extensions, so we pick one with the cheapest adversarial cost. We let $\hat{\rho} \setminus \rho$ denote the suffix of $\hat{\rho}$ that satisfies $\hat{\rho} = \rho \cdot (\hat{\rho} \setminus \rho)$.
  Define
  \[
  \delta = \max \{ w(\hat{\rho} \setminus \rho) : \rho \textrm{ is a walk in } G \}.
\]
Note that $\delta$ is a constant, since $\hat{\rho}$ is a minimal closed extension of $\rho$ and $G$ is finite.

Let $\vec x$ be an input sequence and $\vec y^*$ an \emph{optimal} output sequence. For the walk $\rho = \rho(\vec x, \vec y^*)$, we have that
\[
r(\hat{\rho}) = \frac{q(\hat{\rho})}{w(\hat{\rho})} \ge \frac{q(\rho) + q(\hat{\rho} \setminus \rho)}{w(\rho) + w(\hat{\rho} \setminus \rho)} \ge
\frac{q(\rho) + \delta}{w(\rho) + \delta},
\]
since $w(\rho) \le q(\rho)$, as the cost of the algorithm is never less than the cost of the optimal solution $\vec y^*$.
Asymptotically, as the length of the walk goes to infinity, we have that $r(\hat{\rho}) = r(\rho)  - o(1)$. In particular, for any constant $\varepsilon_0 > 0$ we can find $n_0$ such that all input sequences $\vec x$ of length $n \ge n_0$, the walk $\rho = \rho(\vec x, \vec y^*)$ given by $\vec x$ and the optimal output sequence $\vec y^*$, satisfies
\[
r(\hat{\rho}) \ge r(\rho) - \varepsilon_0.
\]
By assumption $A$ had a competitive ratio of at least $c + \varepsilon$.
We can pick a sufficiently long input sequence $\vec x$ and an optimal solution $\vec y^*$ such that $\rho = \rho(\vec x, \vec y^*)$ satisfies
\begin{align*}
  r(\hat{\rho}) &\ge r(\rho) - \varepsilon_0  \ge \frac{f_n(\vec x, A(\vec x))}{\OPT(\vec x)} - \varepsilon' - \varepsilon_0 \ge c - \varepsilon' - \varepsilon_0 + \varepsilon > c,
\end{align*}
where $f_n(\vec x, A(\vec x))$ denotes the cost of the algorithm $A$ on input $\vec x$ and $\varepsilon_0$ and $\varepsilon'$ are appropriately chosen constants.
Thus, we have now obtained a closed walk $\hat{\rho}$ with $r(\hat{\rho}) \ge c$. Since we can decompose $\hat{\rho}$ into a sequence $\hat{\rho}_1, \ldots, \hat{\rho}_K$ of directed cycles, by applying Lemma~\ref{lemma:walk-decomposition} we get that some directed cycle $\hat{\rho}_i$ satisfies $r(\hat{\rho}) > c$, as claimed.
\end{proof}

\subparagraph*{Proof of Theorem~\ref{thm:synthesis}.}
The above two lemmas yield that the competitive ratio of $A$ is
\begin{itemize}
\item at least as large as the cost-ratio of some directed cycle in $G$ (Lemma~\ref{lemma:cycle-lb}), and
\item at most as large as the cost-ratio of some directed cycle in $G$ (Lemma~\ref{lemma:cycle-ub}).
\end{itemize}
Thus, the directed cycle with the highest cost-ratio determines the competitive ratio of the algorithm $A$.
Since the graph $G$ is finite, it suffices to check all directed cycles of $G$ to determine the competitive ratio of $A$.

\subsection{Synthesis Case Study: Online File Migration}
\label{ssec:synthesis-case-study}

We now consider the case study problem of online file migration with $X = Y = \{0,1\}$ and $\costmig > 0$. Recall that Figure~\ref{fig:synthesis-graph} gives an example of graph $G$ for this problem for $T=2$. First, we discuss some optimizations and extensions to the synthesis of randomized algorithms. Finally, we overview results obtained using the synthesis framework, including optimal synthesized algorithms (cf. Figure~\ref{fig:result-overview}).

\subsubsection{Optimizations}

We discuss a few techniques for optimizing the synthesis for our case study problem of online file migration.
We can reduce the amount of computation needed to find the best algorithm $A$ for a fixed $T$ and $\costmig$, by eliminating some algorithms. For example, we can often quickly identify some simple property of $G$ that immediately disqualifies an algorithm candidate.

\subparagraph*{The Role of Self-Loops.}
If the competitive ratio of $A$ is $K$, then the cost-ratio of any directed cycle has to be at most $K$.
In particular, the cost-ratio of any directed cycle has to be finite. So we can directly eliminate all
cases in which there is a cycle $\rho$ with adversary-cost $w(\rho)=0$ and a positive algorithm-cost $q(\rho)>0$.
For example, we can apply this reasoning to self-loops in the graph $G$. If the adversary-cost of a self-loop is zero, then the algorithm-cost of the same loop has to be also zero. It follows that e.g.~we must have $A(0,\ldots,0) = 0$ and $A(1,\ldots,1) = 1$ for any algorithm $A$. In the case of $T=3$, this reduces the number of algorithms that need to be checked from $2^8=256$ to only $2^{2^3 - 2} = 2^{6} = 64$ instead.

\subparagraph*{Detecting Heavy Cycles.}
When searching for algorithms with best competitive ratio, it is useful to keep track of the best cost-ratio found so far: when checking a new algorithm candidate $A$ and its corresponding graph $G$, we can first check small cycles of length at most $L$ to see if any such cycle has cost-ratio larger than the best found cost-ratio for any other algorithm so far.
If we encounter a cycle $\rho$ with
cost-ratio $r(\rho)$ that is larger or equal than the competitive ratio of some previously
considered algorithm $A'$, then we know that the competitive ratio of $A$ is larger or equal than that
of $A'$. Thus, we can immediately disregard $A$ and move on to check the next possible algorithm candidate.

Indeed, it turns out that in many cases, cycles with large cost-ratio are already found when examining only short cycles.
However, if high-cost short cycles are not found, we can always fall back to an exhaustive search that checks all cycles.

\subsubsection{On the Synthesis of Randomized Algorithms}

We note that we can extend our approach to the synthesis of randomized algorithms (see Section~\ref{sec:random}). The synthesis bounds the \emph{expected} competitive ratio of the algorithm against an oblivious randomized adversary.
Following the distinction discussed in Section~\ref{sec:random}, we consider the synthesis for randomized \emph{behavioral} algorithms (cf.~Section~\ref{sec:random}).
Synthesis of \emph{mixed} algorithms would correspond to finding a good probability distribution over the finite set of algorithms, but we restrict our attention to the behavioral algorithms.

In the case of deterministic algorithms, we considered maps $A \colon \{0,1\}^T \to \{0,1\}$. Now we consider maps $A \colon \{0,1\}^T \to [0,1]$, where $A(\vec a)$ gives the \emph{probability} that $A$ outputs~1 upon seeing the sequence $\vec a \in X^T$ of last $T$ inputs. Thus,
\begin{align*}
   A(\vec a) &= \Pr[A \textrm{ outputs } 1 \textrm{ on input } \vec a \in X^T] \\
  1-A(\vec a) &= \Pr[A \textrm{ outputs } 0 \textrm{ on input } \vec a \in X^T].
\end{align*}
We assign the algorithm cost $q(e)$ for any edge $(\vec a, y)$ to $(s(\vec a, x), y')$ as follows:
\begin{itemize}

\item On a mismatch, the algorithm pays the cost
  \[
  q_{\textrm{mismatch}}(e) = \begin{cases}
    1- A(\vec a) & \textrm{if } x = 1 \\
    A(\vec a) & \textrm{otherwise.}
  \end{cases}
  \]

\item The switching cost is given by
  \[
  q_{\textrm{switch}}(e) = \costmig \cdot [A(\vec a) \cdot (1-A(s(\vec a, a))) + (1-A(\vec a))\cdot A(s(\vec a, x))].
  \]

  \item The total cost is $q(e) = q_{\textrm{mismatch}}(e) + q_{\textrm{switch}}(e)$.
\end{itemize}
We calculate the adversary-cost in the same manner as we do in the deterministic model.
That is our adversary always outputs 0 or 1 (but not, e.g.~$0.6$).
Thus, the graph $G$ will have the same structure as in the deterministic case.

Since there are uncountably many possible randomized algorithms $A$ for any $T$, we instead discretize the probability space into finitely many segments. Thus, we cannot guarantee that we find optimal randomized algorithms. Nevertheless, this method can be used to obtain synthesized algorithms that beat the deterministic algorithms.

\subsection{Synthesis Results}

We now give some results for the online file migration problem obtained using the synthesis approach.
We showcase time-local algorithms with small values of $T=1,2,3$ and then provide observations for $T=4$ and $T=5$.

\subsubsection{Synthesized Algorithms for \texorpdfstring{\boldmath $T = 1, 2, 3$}{T = 1, 2, 3}}
Table~\ref{table:synthesis-results} summarizes results for $T = 1, 2, 3$ and $0.1 \leq \costmig \leq 1.6$. For deterministic algorithms, we list the competitive ratios of the \emph{optimal} deterministic algorithms for the given values of parameters $T$ and $\costmig$.

For randomized algorithms, we list the best competitive ratios found by the synthesis method for the given values of $T$ and $\costmig$.  As discussed, the search for randomized algorithms was conducted in a discretized search space, so it is possible that some randomized algorithms with better competitive ratios may have been missed by the search method.

\subparagraph*{The Power of Randomness.}
Note that already with $T = 2$ we can obtain algorithms with strictly better competitive ratios when randomness is used.
Moreover, with only $T = 3$, we are able to obtain randomized algorithms with competitive ratio $< 3$ (e.g., when $\costmig = 1.0$). This is strictly better than \emph{any} (non-time-local) deterministic algorithm for $\costmig=1$.
Table~\ref{table:rando-algo} gives an example of such an algorithm that achieve competitive ratio of roughly 2.67 for $T=3$ and $\costmig = 1$.

After checking all cycles in the constructed dual de Bruijn graph, the cycle with
  the maximum cost-ratio (cost-ratio of about 2.67) happens to be the following:
\begin{itemize}
\item Last $T$ inputs: 000, adversary output: 0.
\item Last $T$ inputs: 001, adversary output: 0.
\item Last $T$ inputs: 011, adversary output: 0.
\item Last $T$ inputs: 110, adversary output: 0.
\item Last $T$ inputs: 100, adversary output: 0.
\end{itemize}

\subsubsection{The Case of \texorpdfstring{\boldmath $T = 4$}{T = 4}}

For $T=4$ we can obtain better deterministic algorithms than with $T=3$. Interestingly, we can find several optimal algorithms for the case $\costmig=1$: even a full-history deterministic online algorithm cannot achieve a better competitive ratio. Table~\ref{table:3competitive-algos} lists all the 3-competitive algorithms that exist for parameter values of $T=4$, $\costmig=1$. This shows that even very simple time-local algorithms can perform well compared to classic online algorithms. Table~\ref{table:synthesis-results} contains some of the results for $T = 4$ and $0.1 \leq \costmig \leq 1.5$.

\subsubsection{Negative Results for \texorpdfstring{\boldmath $T = 5$}{T = 5}}
Since the number of cycles to be checked increases exponentially in $T$,
we were not able to obtain any positive results for the case of $T = 5$. However,
negative results could still be obtained, since verifying for a certain lower bound
does not require to check all the cycles for all the algorithms. Instead, it is
sufficient to find at least one cycle with a large enough cost-ratio to
disregard a certain algorithm and move on to the next one. We get the following results:

\begin{observation}
  With parameter values of $\costmig = 1$ and $T = 5$, the best competitive
  ratio remains $3$. That is, for each deterministic algorithm, after
  the dual de Bruijn graph has been constructed, there is a cycle with a
  cost-ratio of at least $3$.
\end{observation}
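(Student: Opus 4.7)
The plan is to establish this observation by an exhaustive computer search, exploiting the same dual de Bruijn graph framework used earlier in Section~\ref{sec:synthesis}, together with aggressive pruning. By Theorem~\ref{thm:synthesis}, it suffices to show that for every unclocked deterministic $[5,-1]$-local algorithm $A \colon \{0,1\}^5 \to \{0,1\}$ there is a directed cycle $\rho$ in $G(\Pi,A)$ with $r(\rho) \ge 3$; equivalently, no algorithm can beat the classic deterministic lower bound of $3$ at horizon $T=5$.

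First, I would trim the space of $2^{32}$ candidate maps. The self-loop constraints already noted in the paper force $A(00000)=0$ and $A(11111)=1$, otherwise $G(\Pi,A)$ contains a self-loop with $w(\rho)=0$ and $q(\rho)>0$, giving infinite competitive ratio; this leaves $2^{30}$ candidates. The $0 \leftrightarrow 1$ symmetry of the file migration problem identifies each $A$ with its bit-flipped twin $\bar A(\vec x)=1-A(\bar{\vec x})$, halving the remaining space. Further elementary reductions (e.g.\ excluding algorithms that output $1$ on a long run of $0$s followed by a single $0$, which produce a short bad cycle by the same argument used to justify $A(00000)=0$) can be layered on top to prune additional candidates before ever building the graph.

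Second, for each surviving candidate I would construct $G(\Pi,A)$ on $|X|^T \cdot |Y| = 64$ vertices and search for a cycle with $r(\rho) \ge 3$ using the branch-and-bound strategy from the paper: enumerate cycles in increasing length up to some threshold $L=O(T)$ and terminate as soon as a witness of ratio at least $3$ is found, only falling back to a full simple-cycle enumeration (via e.g.\ Johnson's algorithm) in the rare case that no short witness exists. Empirically (and this is precisely what the observation asserts), short cycles already suffice for every surviving algorithm, so the overall computation remains tractable.

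The main obstacle is the sheer combinatorial size: even after symmetry reductions, on the order of $2^{29}$ graphs must be inspected, each potentially containing many cycles. Any analytic shortcut would have to exploit a uniform adversarial gadget that forces ratio $\ge 3$ against \emph{every} algorithm at $T=5$, but we do not know such a gadget; the statement is asserted as a computational observation. I would therefore implement the search with tight pruning, cache partial results across algorithms that agree on most of $\{0,1\}^5$, and verify the outcome by running two independent implementations that share no code to guard against bugs in so large a machine-checked claim.
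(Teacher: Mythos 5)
Your proposal is essentially the paper's own approach: the observation is a computational result obtained from the synthesis framework of Section~\ref{sec:synthesis}, and the paper's implicit ``proof'' is exactly what you describe --- for each candidate map $A\colon\{0,1\}^5\to\{0,1\}$, build $G(\Pi,A)$ and exhibit one directed cycle with cost-ratio at least $3$, invoking Theorem~\ref{thm:synthesis} and relying on the fact that a single witness cycle per algorithm suffices (the paper explicitly notes this when explaining why negative results at $T=5$ remained tractable even though full cycle enumeration did not). The self-loop constraint $A(0^T)=0$, $A(1^T)=1$ and the short-cycle branch-and-bound pruning are likewise the optimizations the paper already uses. Your extra reductions ($0\leftrightarrow1$ symmetry, caching, running two independent implementations) are sensible engineering additions not mentioned in the paper but do not change the underlying argument; one minor imprecision is your suggested ``further elementary reduction'' for inputs like a long $0$-run with a leading $1$, which does not fall out of the zero-weight self-loop argument and would need a separate short-cycle justification, but since you treat it as a heuristic prune rather than a correctness requirement, the overall proposal stands.
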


\begin{observation}
  There is no algorithm with ratio $< 3.1$ for $T = 5$ and $\costmig = 1.1$.
\end{observation}

\begin{observation}
  There is no algorithm with ratio $< 3.2$ for $T = 5$ and $\costmig = 1.2$.
\end{observation}

\begin{table}
  \caption{The best competitive ratios for some values of $\costmig$ and $T$; see also Figure~\ref{fig:result-overview}.}
  \label{table:synthesis-results}
  \centering
  \begin{tabular}{@{}lllllll@{}}
    \toprule
    $\costmig$ & $T=1$ & $T=2$ && $T=3$ && $T=4$ \\
    \cmidrule(lr){2-2}
    \cmidrule(lr){3-4}
    \cmidrule(l){5-6}
    \cmidrule(l){7-7}
    & deterministic & deterministic & randomized & randomized & randomized & deterministic \\
    \midrule
    0.1            & 11           & 11           & 11                  & 11           & 11             & 11          \\
    0.2            & 6            & 6            & 6                   & 6            & 6              & 6           \\
    0.3            & 4.333        & 4.333        & 4.333               & 4.333        & 4.333          & 4.333       \\
    0.4            & 3.5          & 3.5          & 3.5                 & 3.5          & 3.5            & 3.5         \\
    0.5            & 3            & 3            & 3                   & 3            & 3              & 3           \\
    0.6            & 3.2          & 3.2          & 3.006               & 3.2          & 2.934          & 3.2         \\
    0.7            & 3.4          & 3.4          & 3.055               & 3.4          & 2.864          & 3.4         \\
    0.8            & 3.6          & 3.6          & 3.2                 & 3.6          & 2.797          &             \\
    0.9            & 3.8          & 3.8          & 3.35                & 3.8          & 2.734          & 3.222       \\
    \midrule
    1.0            & 4            & 4            & 3.5                 & 4            & 2.672          & 3           \\
    1.1            & 4.2          & 4.2          & 3.65                & 4.2          & 2.772          & 3.1         \\
    1.2            & 4.4          & 4.4          & 3.8                 & 4.4          & 2.872                        \\
    1.3            & 4.6          & 4.6          & 3.95                & 4.6          & 2.986          & 3.3         \\
    1.4            & 4.8          & 4.8          & 4.1                 & 4.8          & 3.088                        \\
    1.5            & 5            & 5            & 4.25                & 5            & 3.188          & 3.5         \\
    1.6            & 5.2          & 5.2          & 4.4                 & 5.2          & 3.288                        \\
    \bottomrule
  \end{tabular}
\end{table}

\begin{table}
  \caption{A randomized algorithm for $T=3$ and $\costmig=1$ with expected competitive ratio $\approx 2.67$.}
  \label{table:rando-algo}
  \centering
  \begin{tabular}{@{}l@{\qquad}lll@{}}
    \toprule
    Last $T$ inputs & The probability to output 1\\
    \midrule
    \ldots 000 & 0 \\
    \ldots 001 & 0.3309 \\
    \ldots 010 & 0.2711 \\
    \ldots 011 & 1 \\
    \ldots 100 & 0 \\
    \ldots 101 & 0.7289 \\
    \ldots 110 & 0.6691 \\
    \ldots 111 & 1 \\
    \bottomrule
  \end{tabular}
\end{table}

\begin{table}
  \caption{Three 3-competitive algorithms for $T = 4$, $\costmig = 1$.}
  \label{table:3competitive-algos}
  \centering
  \begin{tabular}{@{}l@{\qquad}lll@{}}
    \toprule
    Last $T$ inputs & \multicolumn{3}{@{}l@{}}{Output} \\
    \cmidrule{2-4}
    & $A_1$ & $A_2$ & $A_3$ \\
    \midrule
    \ldots 0000                   & 0           & 0           & 0           \\
    \ldots 0001                   & 0           & 0           & 0           \\
    \ldots 0010                   & 0           & 0           & 0           \\
    \ldots 0011                   & 1           & 1           & 1           \\
    \ldots 0100                   & 0           & 0           & 0           \\
    \ldots 0101                   & 0           & 0           & 1           \\
    \ldots 0110                   & 1           & 1           & 1           \\
    \ldots 0111                   & 1           & 1           & 1           \\
    \ldots 1000                   & 0           & 0           & 0           \\
    \ldots 1001                   & 0           & 0           & 0           \\
    \ldots 1010                   & 1           & 0           & 1           \\
    \ldots 1011                   & 1           & 1           & 1           \\
    \ldots 1100                   & 0           & 0           & 0           \\
    \ldots 1101                   & 1           & 1           & 1           \\
    \ldots 1110                   & 1           & 1           & 1           \\
    \ldots 1111                   & 1           & 1           & 1           \\
    \bottomrule
  \end{tabular}
\end{table}

\subsection{Prior Work on Synthesis}
\label{ssec:synthesis-related}

  \subparagraph*{Synthesis of Online Algorithms.}
  As mentioned, already the early work on online algorithms considered synthesis in the context of memoryless algorithms~\cite{Chrobak1991}.
  Computer-aided design techniques have also been used to design optimal online algorithms for specific problems. Coppersmith et al.~\cite{Coppersmith1993} studied the design and analysis of randomized online algorithms for $k$-server problems, metrical task systems and a class of graph games; they show that algorithm synthesis is equivalent to the synthesis of random walks on graphs.
  For a variant of the \emph{online knapsack} problem~\cite{Iwama2002}, Horiyama, Iwama and Kawahara~\cite{Takashi2006} obtained an optimal algorithm by using a problem-specific finite automaton and solving a set of inequalities for each of its states. More recently, the synthesis of optimal algorithms for preemptive variants of \emph{online scheduling}~\cite{Baruah1992} was reduced to a two-player graph game~\cite{Chatterjee2018,Pavlogiannis2020}.

  \subparagraph*{Synthesis of Local Algorithms.}
  Synthesis of distributed graph algorithms has a long history, mostly focusing on so-called locally checkable labeling (LCL) problems in the $\local$ model of distributed computing~\cite{balliu2019distributed,balliu2020classification,chang2020distributed,brandt2017lcl,ejc17,rybicki15exact}. In their foundational work, Naor and Stockmeyer~\cite{naor1995can} showed that it is undecidable to determine whether an LCL problem admits a local algorithm in general, but it is decidable for unlabeled directed paths and cycles.
  Balliu et al.~\cite{balliu2019distributed} showed that determining the distributed round complexity of LCL problems on paths and cycles with inputs is decidable, but PSPACE-hard. Moreover, when restricted to LCL problems with binary inputs, there is a simple synthesis procedure~\cite{balliu2020classification}. Recently, Chang et al.~\cite{chang2020distributed} showed that synthesis in unlabeled paths, cycles and rooted trees can be done efficiently.

  Beyond decidability results, synthesis has also been applied in practice to obtain optimal local algorithms.
  Rybicki and Suomela~\cite{rybicki15exact} showed how to synthesize optimal distributed coloring algorithms on directed paths and cycles. Brandt et al.~\cite{brandt2017lcl} gave a technique for synthesizing efficient distributed algorithms in 2-dimensional toroidal grids, but showed that in general determining the complexity of an LCL problem is undecidable in grids. Similarly to our work, Hirvonen et al.~\cite{ejc17} considered the synthesis of \emph{optimization problems}. They gave a method for synthesizing randomized algorithms for the max cut problem in triangle-free \emph{regular graphs}.

  Our work identifies the connection between temporal locality online algorithms and spatial locality in distributed algorithms. As we show in this work, time-local online algorithms can be seen as local graph algorithms on directed paths. However, formally the computational power of the $\local$ model resides between regular and clocked time-local models we study in this work. Thus, decidability results and synthesis techniques do not directly carry over to the time-local online algorithms setting.

\section{Time-Local Algorithms for Online File Migration: Analytical Case Study}\label{sec:file-migration}

We study a variant of online file migration (defined in Section~\ref{sec:regular-algorithms}) in the time-local setting.
Our goals include
\begin{itemize}
  \item deriving analytical lower bounds for competitiveness under limited visible horizon,
  \item showcasing a problem that admits competitive time-local algorithms,
  \item proposing techniques for algorithm design and analysis of time-local algorithms.
\end{itemize}

Note that online file migration problem is not bounded monotone, then we cannot use Theorem~\ref{thm:simulation} to translate classic algorithms to time-local algorithms.
Despite that, we may be able to design competitive algorithms.
The challenge in designing such algorithms lies in the limits of time-local algorithms: not only the past input is unknown, but additionally the algorithm is unaware of its own configuration.

\paragraph*{Known results.}
In the classic full-history online setting, an algorithm Dynamic-Local-Min~\cite{Bienkowski2019} is $4$-competitive.
	The two nodes setting is easier for full-history online algorithms: for networks with 2 nodes, a 3-competitive work function algorithm~\cite{Borodin1992} for metrical task system exists.
In the deterministic online setting, a~lower bound of 3 exists~\cite{Black1989}, by averaging costs of multiple offline algorithms.
In classic online algorithm, randomized algorithms can beat the bound of 3 for deterministic algorithms: for $\costmig \geq 1$, there exists a~$(1+\phi)$-competitive randomized algorithm against the oblivious adversary \cite{Westbrook1994}, where $\phi\approx 1.62$ is the golden ratio.
	For the case of two node networks, the threshold work function algorithm obtains the competitive ratio that approaches $\frac{2e-1}{e-1}\approx 2.581$ as the length of the input sequence grows \cite{Irani1998}.

\subsection{Lower Bounds}
  \label{ssec:file-migration-lb}

In our lower bounds, we leverage the fact that the algorithm is a \emph{function} of the last $T$ requests.
This allows to reason about the performance of a time-local algorithm on different inputs sharing identical subsequences of $T$ requests.

\subparagraph*{A Lower Bound for Time-Local Algorithms.}
We present a lower bound for time-local algorithms for online file migration that shows an inevitable degradation of the competitive ratio when the visible horizon is limited.
The following lower bound assumes the length of the visible horizon is given.
Hardness for simpler settings is implied, in particular for non-clocked time-local algorithms.
Next, we present deterministic algorithms that match this lower bound asymptotically.

\begin{theorem}
  \label{thm:file-migration-lb}
  Fix any randomized $T$-time-local clocked algorithm $\ONL$ for online file migration for networks with at least $2$ nodes.
  Assume that the file size is $\costmig$, and $\costmig \geq T$.
  If~$\ONL$ is $c$-competitive against an oblivious offline adversary, then $c \geq 2\costmig / T$.
\end{theorem}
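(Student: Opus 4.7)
The plan is to reduce the randomized competitiveness bound to an asymptotic statement about the algorithm's responses to the two extreme horizons $0^T$ and $1^T$, and then exhibit an explicit family of single-burst inputs on which any $c$-competitive algorithm is forced to incur round-trip migration cost. I focus on two nodes; the bound carries over to larger networks by restriction. For a $T$-time-local clocked randomized algorithm $\ONL$, let me define the marginal probabilities
\[
q_t = P[\ONL\text{ outputs }1\text{ at clock }t \mid \text{horizon at }t = 1^T], \qquad
r_t = P[\ONL\text{ outputs }1\text{ at clock }t \mid \text{horizon at }t = 0^T].
\]
First, I would pin down the asymptotics of $q_t$ and $r_t$. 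On the reference input $\sigma_N = 0^N 1^M$, the expected serving cost on the 1-tail is at least $\sum_{t=N+T}^{N+M}(1-q_t)$, while $\OPT(\sigma_N)=\alpha$, so competitiveness forces $\sum_{t\ge T+1}(1-q_t)\le c\alpha+d$ and hence $q_t\to 1$. A symmetric argument on the all-zero input (where $\OPT = 0$ so the total expected cost is bounded by $d$) yields $\sum_t r_t \le d$ and thus $r_t \to 0$.

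Second, for each sufficiently large $t^*$ I would consider the single-burst input $\sigma_{t^*} = 0^{t^*-T}\, 1^T\, 0^L$ with $L$ large, for which $\OPT(\sigma_{t^*}) = T$. Writing $A = \{y_{t^*-T}=0\}$, $B = \{y_{t^*}=1\}$, $C = \{y_{t^*+T}=0\}$, the horizons at those three clocks are $0^T$, $1^T$, and $0^T$ respectively, so $P[A]\ge 1-r_{t^*-T}$, $P[B]=q_{t^*}$, and $P[C]\ge 1-r_{t^*+T}$. Whenever $A\cap B\cap C$ occurs, the sample path must migrate at least twice in the window $[t^*-T+1,\, t^*+T]$ (once from $0$ to $1$ and once back), so by inclusion--exclusion
\[
\mathbb{E}\bigl[\text{migration cost on }\sigma_{t^*}\bigr] \;\ge\; 2\alpha\bigl(q_{t^*}-r_{t^*-T}-r_{t^*+T}\bigr),
\]
which tends to $2\alpha$ as $t^*\to\infty$. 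The serving cost collected inside the burst is at most $T\le\alpha$ by assumption, so the migration cost dominates.

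Finally, I would amplify to absorb the additive constant by concatenating $k$ well-separated bursts $\sigma = 0^{N_0}(1^T\, 0^N)^k$ with $N\ge T+\alpha$ (so the $0$-gaps are long enough for the ``file at $0$'' event to re-establish itself between bursts) and with each $t^*_j$ chosen in the regime where $q_{t^*_j} - r_{t^*_j-T} - r_{t^*_j+T}\ge 1-\varepsilon$. Then $\OPT(\sigma) = kT$ and $\mathbb{E}[\ONL(\sigma)] \ge 2k\alpha(1-\varepsilon)$, so the competitive inequality rearranges to $c\ge (2\alpha/T)(1-\varepsilon) - d/(kT)$; letting $k\to\infty$ and $\varepsilon\to 0$ gives the claimed $c\ge 2\alpha/T$. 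The main obstacle is the cost-accounting in the second step: because $\ONL$ may be a \emph{mixed} rather than behavioral randomized algorithm, the events $A$, $B$, $C$ can be arbitrarily correlated, so I must argue via inclusion--exclusion on sample paths rather than by multiplying marginals, and I must also ensure that no migration is double-credited across bursts (guaranteed by the $N\ge T+\alpha$ separation, which makes the between-burst $0$-gap long enough that the ``reset'' to node $0$ is forced up to additive $o(1)$ probability). The hypothesis $\alpha\ge T$ is precisely what prevents the per-burst serving slack of $O(T)$ from swamping the migration lower bound of $\Omega(\alpha)$.
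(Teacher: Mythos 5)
Your argument is correct and reaches the same conclusion, but it replaces a key step of the paper's proof with a cleaner, more quantitative one. The paper proceeds by a case split on whether $\ONL$'s support contains a ``resisting'' deterministic strategy; in the non-resisting case it posits a single threshold time $\tau_{det}$ after which \emph{every} strategy in the support outputs $b$ on the constant horizon $b^T$, and then runs the periodic sequence $(1^T 0^T)^L$ past $\tau_{det}$. As written, that step has a gap: $\tau_{det}$ is a supremum over all strategies in the support, and when the support is infinite (which a general mixed strategy allows) this supremum need not be finite. Your proof avoids the dichotomy entirely by defining the marginal probabilities $q_t, r_t$ and deriving $q_t \to 1$, $r_t \to 0$ directly from the $c$-competitiveness hypothesis applied to the reference inputs $0^*$ and $0^N 1^M$; the union/inclusion--exclusion bound $\Pr[A\cap B\cap C] \ge q-r-r'$ then handles arbitrary correlations in the mixed strategy without needing any deterministic threshold. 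This is arguably the right way to make the paper's argument airtight. Two small points: your horizon indices are off by one under the paper's $[T,-1]$ convention --- on $\sigma_{t^*}=0^{t^*-T}1^T0^L$ the horizon at clock $t^*$ is $01^{T-1}$, not $1^T$, and the clocks with horizons $0^T,1^T,0^T$ are $t^*-T$, $t^*+1$, $t^*+T+1$; this changes nothing substantive. Also, $N \ge T$ already makes the per-burst windows disjoint, so the stronger $N \ge T+\alpha$ separation and the ``reset to node $0$'' remark are unnecessary: the events $A_j,B_j,C_j$ are determined by the horizon and the clock alone, so no between-burst coupling argument is needed.
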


\begin{proof}
  By Theorem~\ref{thm:random-clock}, mixed clocked time-local algorithms can simulate behavioral clocked time-local algorithms, hence it suffices to consider $\ONL$ as a mixed strategy.

To state $\ONL$'s properties, we consider two infinite sequences of requests, $0^*$ and $1^*$. Later, we will reason about $\ONL$'s performance on finite sequences.
We say that a deterministic algorithm is \emph{resisting} if for each time~$t$ there exists a time~$t' > t$ when the algorithm either outputs $1$ faced with $0^*$, or it outputs $0$ at $t'$ when faced with $1^*$.
A time-local algorithm may be resisting if it has access to a~global clock.

Recall that $\ONL$ is a distribution over deterministic clocked algorithms.
First, assume that $\ONL$ has a resisting strategy in its support.
Consider two input sequence families, $\mathcal{I}_0 := \{0^L : L\in\mathbb{N}\}$ and $\mathcal{I}_1 := \{1^L : L\in\mathbb{N}\}$.
Note that
$\ONL$ may incur an arbitrarily large cost on requests from either of these families of inputs, say $\mathcal{I}_0$, due to an arbitrary number of $0$-requests  served in configuration $1$.
The cost of an optimal offline solution on such sequences is constant:
before serving the sequence, an offline algorithm moves the file to the only node that requests the file.
We conclude that the competitive ratio of $\ONL$ can be arbitrarily large on inputs from $\mathcal{I}_0$ or $\mathcal{I}_1$.

For the rest of this proof, we assume that $\ONL$ does not have any resisting strategy in its support.
A crucial observation is that for a fixed deterministic time-local algorithm, its output on $0^*$ (resp. $1^*$) for any time $\tau$ determines its output on other sequences that contain a sequence of $T$ requests to $0$ (resp. $1$) at time $\tau$.
As $\ONL$ does not have a resisting strategy in its support, after a time~$\tau_{det}$ all strategies in $\ONL$'s support always output $b$ when faced with $T$ many requests to $b$, for $b \in \{0, 1\}$.

Consider an input $\seq = (1^T 0^T)^{L'}$ for some $L'$ to be determined.
Let $\seq'$ be the subsequence of $\seq$ starting from the first $0$-request that comes after $\tau_{det}$.
Fix any optimal offline algorithm $\OFF$ for $\seq$.
  For $\seq \setminus \seq'$, we claim $\ONL(\seq\setminus \seq') \geq \OFF(\seq\setminus \seq')$, where $\ONL(\cdot)$ and $\OFF(\cdot)$ denotes the cost of the algorithm $\ONL$ and an optimal offline algorithm, respectively.
  We analyze $\ONL$ within segments $(1^T 0^T)$ of~$\sigma'$, and we refer to each segment as a \emph{phase}.
  As no strategy in $\ONL$'s support is resisting, and requests from $\seq'$ arrive after $\tau_{det}$, $\ONL$'s behavior on $\seq'$ is determined: it must output $b$ when faced with a $b$-uniform sequence, for $b \in \{0,1\}$.
  Hence, in each phase, $\ONL$ incurs at least the cost $2\costmig$ for changing its output twice.
  On the other hand, $\OFF$ in each phase incurs a cost of at least $T$ --- recall that $T \leq \costmig$, thus either it migrates during the phase and pays $\costmig \geq T$ already, or does not migrate and pays either for all $0$-requests or for all $1$-requests in the phase.
  Summing up the above observations and assuming $\seq'$ consists of~$2T\cdot L'$ requests, we obtain
  \begin{equation*}
    \frac{\ONL(\seq)}{\OFF(\seq)} =
    \frac{\ONL(\seq \setminus \seq')+\ONL(\seq')}{\OFF(\seq\setminus\seq')+\OFF(\seq')} \geq
    \frac{\OFF(\seq\setminus\seq')+2\costmig\cdot L'}{\OFF(\seq\setminus\seq')+T\cdot L'}\enspace.
  \end{equation*}
  By choosing a~long enough sequence $\seq$ (and consequently a large enough $L'$), the competitive ratio can be arbitrarily close to $2\costmig/T$.
\end{proof}
Note that the result presented in this section implies the lower bound for online file migration on general networks (not necessarily consisting of two vertices).

\subparagraph*{A Lower Bound for Small Migration Cost for the Classic Online Model.}
  We investigate the case $\costmig < 1$ in the classic online setting (it is usually assumed that $\costmig \geq 1$). This complements the results obtained in the synthesis of algorithms for file migration from Section~\ref{sec:synthesis}.

\begin{theorem}	\label{th:LB_smallAlpha}
  Consider any deterministic online algorithm \A for online file migration with file size $\costmig$.
  If \A is $c$-competitive, then
	$c \geq 1+1 / \costmig$ for $\costmig \in (0, 1/2]$, and
	$c \geq \min\{ 2+2\costmig, 1+3 / (2\costmig) \}$
	for $\costmig \in (1/2,1)$.
\end{theorem}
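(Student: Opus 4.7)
The plan is to adapt the classical \emph{cruel-adversary} lower bound technique, combined with the averaging-over-multiple-offline-algorithms trick of Karlin et al. Given any deterministic online algorithm $A$ with pre-request configurations $y_1, y_2, \ldots$, I would consider the adversary that always requests the opposite node, $x_i = 1 - y_i$. Under this adversary every request incurs a mismatch, so $A$ pays exactly $N + \alpha M$ on an input of length $N$, where $M$ is the number of migrations. To bound $\OPT$ on the resulting cruel sequence, I would introduce three offline benchmarks: the two ``stay'' algorithms $B_0, B_1$, which satisfy $B_0 + B_1 = N$ (hence $\min(B_0,B_1) \le N/2$), and the ``follow'' algorithm $B_{\mathrm{fol}}$ that migrates whenever $A$ migrates, giving $B_{\mathrm{fol}} \le \alpha(M+1)$. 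Combining these yields $\OPT \le \min(N/2, \alpha M) + O(\alpha)$.

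For $\alpha \in (0, 1/2]$: since $M \le N$ and $\alpha \le 1/2$, we always have $\alpha M \le N/2$, so the ``follow'' benchmark is the tighter one and the competitiveness inequality yields
\[
c \;\ge\; \frac{N + \alpha M}{\alpha M + O(\alpha)} \;\ge\; 1 + \frac{N}{\alpha M} \;\ge\; 1 + \frac{1}{\alpha},
\]
with the extremal case attained when $A$ migrates on every step ($M = N$).

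For $\alpha \in (1/2, 1)$: the argument would split by $A$'s migration frequency. In the \emph{aggressive} regime, where $A$ chooses $M$ close to $N$ so that the induced cruel input is essentially alternating, the ``stay'' benchmark is tight and $c \ge (1+\alpha)N/(N/2) = 2 + 2\alpha$. In the \emph{conservative} regime, where $M \le 2N/3$ (so $\alpha M \le N/2$ still holds in this range of $\alpha$), the ``follow'' benchmark remains tight and $c \ge 1 + N/(\alpha M) \ge 1 + 3/(2\alpha)$. Combining the two regimes gives $c \ge \min(2 + 2\alpha,\, 1 + 3/(2\alpha))$.

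The hardest part is the case analysis for the intermediate range $\alpha \in (1/2, 1)$: one must verify that the threshold $M \approx 2N/3$ correctly separates the aggressive and conservative regimes, that at least one of the two benchmark-based bounds applies for every possible choice of $M$, and that no intermediate value of $M$ simultaneously evades both lower bounds. A secondary technical point is handling the additive $O(\alpha)$ slack coming from the initial configuration and the boundary migration in $B_{\mathrm{fol}}$, which gets absorbed into the additive constant $d$ in the competitiveness definition as $N \to \infty$.
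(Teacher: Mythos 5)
Your argument for $\alpha \le 1/2$ is sound and essentially matches the paper's proof: both use the same adversarial input (you via the cruel rule $x_i = 1-y_{i-1}$, the paper by explicitly waiting for $\A$ to migrate), and the ``follow'' benchmark $B_{\mathrm{fol}} \le \alpha(M+1)$ corresponds exactly to the paper's observation that an offline algorithm can migrate twice per phase, paying at most $2\alpha$ per phase. In that regime the extremal $M=N$ gives $1+1/\alpha$ as you claim.

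For $\alpha \in (1/2,1)$ there is a genuine gap. With only the two global benchmarks $\min(B_0,B_1)\le N/2$ and $B_{\mathrm{fol}}\le\alpha(M+1)$, the adversary's lower bound is $(N+\alpha M)/\min\{N/2,\alpha M\}$, and the algorithm can pick its migration rate so that $M/N \approx 1/(2\alpha) \in (1/2,1)$ while keeping $N_0 \approx N_1 \approx N/2$; this balances both benchmarks at $\approx N/2$ and the resulting ratio bottoms out at exactly $3$, which is strictly below $\min\{2+2\alpha, 1+3/(2\alpha)\}$ for every $\alpha \in (1/2,3/4)$. Your threshold $M\le 2N/3$ does not rescue this: your parenthetical ``so $\alpha M\le N/2$ still holds'' already fails for $\alpha>3/4$, and the aggressive-regime claim $c\ge 2+2\alpha$ from the stay benchmark holds only at $M=N$ exactly, so any $M$ just above $2N/3$ evades both of your estimates, precisely as you worry might happen in your last paragraph. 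The paper instead decomposes the cruel input into \emph{phases}, each a maximal run of $1$-requests of length $x\ge 1$ followed by a maximal run of $0$-requests of length $y\ge 1$, and upper-bounds $\OPT$'s cost per phase conditioned on $\OPT$'s configuration when entering the phase and on whether $x$ or $y$ equals $1$. The key observation you are missing: if one run has length exactly $1$, $\OPT$ can serve that single request remotely for cost $1$ rather than paying $2\alpha$ to migrate there and back, so $\OPT\le 1$ in that phase while $\ALG$ pays $\ge 2+2\alpha$; when both runs have length $\ge 2$, $\ALG$ pays $\ge 3+2\alpha$ while $\OPT\le 2\alpha$, giving $1+3/(2\alpha)$. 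Neither the global ``stay'' nor the global ``follow'' benchmark captures this phase-structure-dependent bound on $\OPT$, and it is exactly this refinement that lifts the lower bound above $3$.
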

\begin{proof}
	Consider an input sequence $\sigma_L$ for any $L\in\mathbb{N}$,
	constructed in the following way.
	We start by issuing $1$-requests until \A migrates the file to node~1.
	Then, we proceed by issuing $0$-requests until \A migrates the file to the node~0.
  We repeat these steps $L$ times.
  Note that \A must eventually perform a~migration, otherwise it is not competitive (an optimal offline algorithm pays at most $2\costmig \cdot L$ for $\sigma_L$).
  In the remainder of the proof, we assume that \A eventually performs a~migration, and consequently $\sigma_L$ is finite.

	We partition $\sigma_L$ into \emph{phases}
	$P_1, \dots, P_L$
	 in the following way.
	The first phase begins with the first request and
	each phase ends when \ALG migrates the file to~0.
  We analyze the ratio of $\A$ to $\OPT$ on each phase separately.
  For any $i \leq L$,
  consider the $i$th phase $P := P_i$.
  Let $x$ be the number of 1-requests and
	$y$ be the number of 0-requests in $P$.
	Then $x,y \geq 1$ and $x+y \geq 2$.
	Recall that \A first serves a~request and then decides
  whether to migrate the file or not.
  Consequently,
  it incurs the cost $2$ in each phase for serving requests remotely, and performs two migrations, and
  its total cost is $x+y+2\costmig \geq 2+2\costmig$.

  Let $\OPT$ be any optimal offline solution.
	Note that \OPT never pays more than $2\costmig$ in any phase:
	it can always migrate the file to~1 prior to
	serving all 1-requests for free,
	and then to~0 prior to serving all 0-requests for free.
	Thus, for any $\costmig > 0$, we have
	\[
	\frac{\ALG(\sigma_L)}{\OPT(\sigma_L)} \geq
	\min_i \frac{\ALG(P_i)}{\OPT(P_i)} \geq
	(2+2\costmig) / 2\costmig = 1 / \costmig + 1.
	\]

	Next, we provide a~stronger bound when $1/2 < \costmig < 1$,
	by distinguishing two cases.
	\begin{description}
	\item[Case 1.] \OPT has the file at the node~0 when it enters the phase.
	If $x=1$ then \OPT does not benefit by migrating the file,
	as otherwise, it would incur for 0-requests in addition to $\costmig$.
	Therefore, it pays $1$ for serving the (single) 1-request remotely,
	and the ratio is $(2+2\costmig) / 1$.
	Else, $x \geq 2 > 2\costmig$, and
	\ALG pays $x+y + 2\costmig \geq 3 + 2\costmig$.
	Since \OPT never pays more than $2\costmig$ for any phase,
	we have $\ALG(P) / \OPT(P) \geq$
	$(3+2\costmig) / 2\costmig = 3 / 2\costmig + 1$.

	\item[Case 2.] \OPT has the file at the node~1 when it enters the phase.
	\OPT serves all 1-requests in the phase for free.
	If~$y=1$, then either \OPT serves the 0-request remotely without migrating the file, paying~$1$,
	or it migrates the file and pays $\costmig < 1$.
	In either case, it pays at most 1 and
	 $\ALG(P) / \OPT(P) \geq (2+2\costmig) / 1$.
	Else, $y \geq 2$, and
	\OPT migrates the file to the node~0 before serving the 0-requests;
	as otherwise it would incur $y \geq 2 > 2\costmig$,
	more than migrating the file twice.
	Since $x+y \geq 3$, we have
	$\ALG(P) / \OPT(P) \geq (3+2\costmig) / \costmig = 3/\costmig+2$.
	\end{description}
\noindent
	Hence, in all cases for $\costmig \in (1/2,1)$,
	we have
	$\ALG(P) / \OPT(P) \geq \min\{ 2+2\costmig, 1+3 / (2\costmig) \}$,
and consequently for all inputs $\sigma_L$ for any $L\in\mathbb{N}$ we have
\[
\frac{\ALG(\sigma_L)}{\OPT(\sigma_L)} \geq
\min_i \frac{\ALG(P_i)}{\OPT(P_i)} \geq
\min\{ 2+2\costmig, 1+3 / (2\costmig) \}
.\]
By combining the results for $\costmig \in (0, 1/2]$ and $\costmig \in (1/2, 1)$, we conclude the lemma.
\end{proof}

A lower bound of 3 is presented in \cite{Black1989}
for $\costmig \geq 1$,
and we note that it holds also for $\costmig<1$.
We summarize all known lower bounds in the following corollary.

\begin{corollary} \label{cor:LB_OFM_det}
	No deterministic classic online algorithm for online file migration can achieve a~competitive ratio less than
	$\max\{ 3, 1+1 / \costmig \}$, for $\costmig > 0$,
	or less than
	$\min\{ 2+2\costmig, 1+3 / (2\costmig) \}$
	for $\costmig \in (0.5,1)$.
\end{corollary}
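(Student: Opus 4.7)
The plan is to assemble the corollary from two already-available ingredients: Theorem~\ref{th:LB_smallAlpha}, proved just above, and the classical $3$-lower-bound for online file migration due to Black~\cite{Black1989}. Theorem~\ref{th:LB_smallAlpha} directly supplies the bound $c \geq 1+1/\alpha$ for $\alpha \in (0,1/2]$ as well as the bound $c \geq \min\{2+2\alpha, 1+3/(2\alpha)\}$ for $\alpha \in (1/2,1)$, so the only remaining ingredient is the universal lower bound $c \geq 3$.

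For $\alpha \geq 1$ the bound $c \geq 3$ is exactly Black's result~\cite{Black1989}, proved via the standard ``averaging over three offline algorithms'' technique of Karlin et al.~\cite{Karlin1988}: the adversary always requests from the node opposite to the online algorithm's current file location, then one exhibits three offline strategies whose cyclically averaged cost is at most a third of the online cost. I would verify that the same construction carries through verbatim for $\alpha < 1$, as it only relies on the adversary being able to force one mismatch per step and on the three offline strategies being able to rotate so that each configuration is ``free'' for one of them; none of these steps uses the inequality $\alpha \geq 1$. This $\alpha < 1$ check is the only piece not already proved in the paper, and it is the part I expect to demand the most care, though it remains essentially bookkeeping rather than a new idea.

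With $c \geq 3$ established for every $\alpha > 0$ and Theorem~\ref{th:LB_smallAlpha} in hand, the combination is immediate. For all $\alpha > 0$ we have $c \geq 3$, and for $\alpha \in (0,1/2]$ also $c \geq 1+1/\alpha$ (for $\alpha > 1/2$ the latter is already dominated by $3$, so taking the maximum imposes no new constraint); this yields $c \geq \max\{3, 1+1/\alpha\}$ on the whole range. For $\alpha \in (1/2,1)$, Theorem~\ref{th:LB_smallAlpha} additionally gives $c \geq \min\{2+2\alpha, 1+3/(2\alpha)\}$, which is the second clause of the corollary. Collecting these two inequalities yields precisely the statement.
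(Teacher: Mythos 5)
Your proposal matches the paper's reasoning exactly: the corollary is obtained by combining Theorem~\ref{th:LB_smallAlpha} (for the $1+1/\alpha$ and $\min\{2+2\alpha, 1+3/(2\alpha)\}$ clauses) with Black's $c \geq 3$ lower bound, observing that the latter extends to $\alpha < 1$. The paper likewise treats the extension of Black's bound to $\alpha < 1$ as a remark rather than proving it in detail, so your level of rigor is the same as the paper's.
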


\subsection{A Deterministic Algorithm for File Migration}
\label{ssec:case-study-det}
\label{ssec:file-migration-det}

We introduce a~constant competitive time-local algorithm,
the \emph{Window~Majority~Algorithm} (\ALG for short),
for the online file migration problem restricted to two nodes,
identified by 0 and 1.

The algorithm takes the last $T$ requests as input,
and it outputs a~value in $\{0,1\}$, the (new) location of the file.
For each request 0 or 1 (the node requesting the file),
\ALG pays a~unit cost if its last output does not match the request
(i.e., if the file is not located at the node).
In such cases, we say \ALG incurs a~\emph{mismatch}.
After serving the request,
\ALG may choose to migrate the file to the other node (by switching the output) at the cost $\costmig$, and we say \ALG \emph{flips} its output.

The algorithm scans the visible horizon looking for a~distinguished subsequence
of requests,
called a~\emph{relevant window}.
It decides its output based on 1) the existence of any relevant window
and 2) an invariable property of the relevant window (if any exists).
After a~window enters the visible horizon,
\ALG maintains its latest output as long as
(i) the window is contained in the visible horizon
(while \emph{sliding}), and
(ii) it is not succeeded by a~more recent relevant window.
\ALG may flip its output once (i) or (ii) is no longer the case.
Intuitively,
a~relevant window serves as a~short-living memory,
enabling \ALG to maintain the same output for as long as the visible horizon contains the window.

For $T \geq 6$, $\lambda = \min \{\lceil T/6 \rceil , \costmig \} \geq 1$ and $b \in \{0,1\}$,
a~\emph{$b$-window} is a~subsequence of length~$3\lambda$,
 in which the number of $b$-requests is at least twice the number of $\bar{b}$-requests, $\bar{b} = 1-b$.
\ALG outputs 1 only if the most recent $b$-window
in the visible horizon is a~1-window.
Hence,
it outputs 1 as long as the visible horizon contains a~1-window
 that is not succeeded by a~(more recent) 0-window.
The 1-window slides further to the past as new requests arrive
until it is no longer contained in the visible horizon.
At this moment,
\ALG either flips back to 0 (as the default output), or it maintains the output~1
because of a~more recent 1-window in the visible horizon.

The algorithm \ALG takes the visible horizon~$T$ (i.e., the past~$T$ requests) as a parameter and
outputs~0 or~1 according to the following rules.
\begin{description}
\item[Rule~1.]
	Output $b \in \{0,1\}$ if the most recent window in the visible horizon is a~$b$-window.
\item[Rule~2.]
	Output 0 if the visible horizon contains no $b$-window for $b \in \{ 0, 1\}$.
\end{description}

\noindent
Note that \ALG flips to~0 either because the visible horizon contains a~$0$-window (Rule~1) or there is no $b$-window in the visible horizon (Rule~2).

\subsubsection{Analysis of the Deterministic Algorithm for File Migration}
We show the following theorem about \ALG:
\begin{theorem}
  \label{cor:6-competitive}
  \label{thm:6-competitive}
	The time-local algorithm \ALG is $c$-competitive, where  $c=6$ for $T \geq 6\costmig$,
	$c = 4+\frac{12\costmig}{T}$  for $6 \leq T \leq 6\costmig$, and
	$c = 4+ 2\costmig$ for $1 \leq T \leq 6$.
\end{theorem}
Our analysis follows a sliding window method, and classifies parts of the input adequately to (1) the changing majority and (2) a lower bound for offline cost.
We leverage the fact that
\ALG neither flips too frequently,
incurring excessive reconfiguration cost,
nor too conservatively,
incurring excessive mismatches.
We focus our attention on individual subsequences between two consecutive flips to~1.
We show that any of these subsequences contains sufficiently many
0 and 1-request, so that
an optimal offline algorithm incurs a~cost within  a~factor $O(\costmig/T)$ of \ALG's cost.

We begin with auxiliary definitions and notations
that we use in our analysis.
A \emph{(sub)segment} of $\seq$ between requests $\sigma_i$ and $\sigma_j, j > i$,
is a~contiguous subsequence of $\seq$ denoted by $\seq(i,j]$.
We denote the concatenation of any two consecutive
segments $\S_1$ and $\S_2$ by $\S_1\S_2$.
We denote the $b$-window that is contained in the visible horizon 
starting from the time~$\tau$ by $\W_\tau := \seq(\tau- 3\lambda, \tau]$.
For $b \in \{0,1\}$,
we denote the number of $b$-requests in a~segment~$\S$ by $n_{b} (\S)$.
We say that \ALG flips \emph{at (the time)~$\tau$} if \ALG flips immediately after the request~$\sigma_\tau$.

\textbf{Cost Notations.}
Assume \ALG flips to~$b$ at~$\tau$.
If the flip occurs by Rule~1 then
	$n_b(\W_{\tau}) = 2\lambda$ and~$n_{\bar{b}}(\W_{\tau}) = \lambda$.
Else, it occurs by Rule~2 and $n_b(\W_{\tau}), n_{\bar{b}}(\W_{\tau}) \leq 2\lambda-1$,
as otherwise the flip at~$\tau$ would occurs by Rule~1.
We compare the cost of our algorithm to the cost of 
an optimal offline algorithm denoted by $\OPT$.
The total cost incurred by \ALG and \OPT
while serving a~segment $\S$ is denoted by
$\ALG(\S)$ and $\OPT(\S)$ respectively.
We denote the cost of mismatches to \ALG for a~segment $\S$
by $\mis(\S)$.
Therefore,
$\ALG(\W_\tau) = \mis(\W_\tau) + \costmig \leq 2\lambda+\costmig$.
We present additional properties of \ALG that we use for the analysis of its competitive ratio.

\begin{lemma}	\label{lemma:BlockSegment}
	If \ALG does not flip in a~segment~$\S$ where $|\S| = 3\lambda$ then
	\[
	\OPT(\S) \geq \min \{ n_{0}(\S), n_{1}(\S), \costmig \} \geq \mis(\S) / 2
	.\]
\end{lemma}
\begin{proof}
	Assume \ALG outputs $b \in \{0,1\}$ in $\S$.
	Then it incurs the cost of mismatches to~$\bar{b}$-requests, that is,
	$\mis(\S) = n_{\bar{b}}(\S)$.
	Moreover, we have $n_{\bar{b}}(\S) \leq 2\lambda-1$,
	as otherwise \ALG would flip to~$\bar{b}$ in this segment.
	Therefore,
	$n_b(\S) = 3\lambda-n_{\bar{b}}(\S) \geq \lambda+1 > n_{\bar{b}}(\S)/2 = \mis(\S) / 2$.
	
	\OPT pays mismatches to~$b$-requests or to~$\bar{b}$-requests,
	or it performs a~flip and possibly serves some of them for free.
	Then regardless of \OPT's actions in $\S$, it incurs
	\[
	\OPT(\S) \geq \min \{ n_b(\S), n_{\bar{b}}(\S), \costmig \}
	\geq \min \{ \mis(\S)/2, \mis(\S), \costmig \} =
	\min \{ \mis(\S)/2, \costmig \}
	,\]
	and the claim follows since $\mis(\S) \leq 2\lambda-1$ and thus
	$\mis(\S)/2 < \lambda \leq \costmig$.
\end{proof}

Assume \ALG flips to~$b$ at the end of a~segment~$\S$ where $|\S|=3\lambda$.
Then $\S$ contains at most~$2\lambda$ many $b$-requests,
 as otherwise the flip to~$b$ would occur earlier.
Moreover, $\S$ contains at most $2\lambda-1$ many $\bar{b}$-requests,
otherwise \ALG would output $\bar{b}$ at the end of $\S$ (by Rule~1),
contradicting our assumption.
Therefore,
$\S$ contains at least $\lambda$ many $b$- and $\bar{b}$-requests,
that is,
$n_{b}(\S), n_{\bar{b}}(\S) \geq \lambda$.
Using Lemma~\ref{lemma:BlockSegment},
we conclude this observation as follow.
\begin{corollary}	\label{cor:OPT(W)}
	If \ALG flips to~$b$ (by either rules) at the end of a~segment~$\S$, $|\S|=3\lambda$,
	then $\OPT(\S) \geq \lambda.$
\end{corollary}

\subparagraph*{Phase Analysis.}
We subdivide the input sequence into consecutive segments such that \ALG flips only at the end of each segment.
Specifically, we partition the input as
\begin{equation}	\label{eq:partitioning}
	\seq = \R_\first \L_1 \R_1 \dots \L_m \R_m \S_\last,
\end{equation}
where each segment $\L_*$ ends with a~flip to~0 and each segment $\R_*$ ends with a~flip to~1, 
and  the segment $\S_\last$ is the remainder after the last flip to~1 (at the end of $\R_m$).
Fix any segment
$\F \in \{\R_\first\} \cup \{\L_i,\R_i\}_i$.
Let $b \in \{0,1\}$ be the value to which \ALG flips at the end of $\F$.
That is, \ALG outputs $\bar{b}$ for the entire segment and 
outputs~$b$ immediately after the last request.

\begin{lemma}	\label{lemma:Rule2|P_i|}
 If the flip at the end of $\F$
  occurs by Rule~2 then $|\F| > 3\lambda$.
\end{lemma}
\begin{proof}
	By \ALG's definition, the flip by Rule~2 is a~flip to~0.
	Let~$\tau$ denote the time when this flip occurs.
	Therefore
	\ALG flips to~1 the end of the segment preceding $\F$ in~(\ref{eq:partitioning}) 
	at a~time denoted by~$\tau' < \tau$, that is, immediately after the 1-request $\sigma_{\tau'}$.
	By definition, $T \geq 6\lambda$, and therefore
	the segment $\W_{\tau'}$ is contained in the visible horizon
	after each request $\sigma_{\tau'},\sigma_{\tau'+1}, \dots, \sigma_{\tau' + 3\lambda}$.
	Thus,
	\ALG outputs 1 at every time step
	$\tau', \dots ,\tau'+3\lambda$.
	Hence,
	the flip to~0 by Rule~2 does not occur earlier than $\tau'+3\lambda+1$,
	and therefore $|\F| \geq 3\lambda+1$.
\end{proof}

In the next two lemmas, we derive more fine-grained lower bounds for the cost incurred by~\OPT
when the inter-flip segment~$\F$ is sufficiently long, that is, when $|\F| \geq 3\lambda$.
Assume \ALG flips to~$b \in \{0,1\}$ at the end of this segment.
Consider the partitioning $\F = \U \V \W$,
where $|\U|$ is a~multiple of~$3\lambda$,
$|\V|<3\lambda$ and $|\W|=3\lambda$.
Note that since \ALG does not flip in $\V$, it pays
$\ALG(\V) = \mis(\V) \leq 2\lambda-1$.
\ALG flips at the end of $\W$ and by Corollary~\ref{cor:OPT(W)}, it pays
$\ALG(\W)=\mis(\W)+\costmig \leq 2\lambda+\costmig$.
Using this observation, we conclude ALG's mismatch cost in the following statement.
\begin{observation} \label{obs:ALG(VW)}
	If $|F| \geq 3\lambda$ then
	$\mis(\V\W) = \mis(\V) + \mis(\W) \leq 2\lambda-1+2\lambda < 4\lambda$.
\end{observation}

We provide competitive ratios for each subsegment of $\F$ independently.
We subdivide the segment $\U$ into subsegments of length $3\lambda$.
By applying Lemma~\ref{lemma:BlockSegment} to each one separately,
we conclude the ratio for $\U$ in the following statement.

\begin{observation}	\label{obs:OPT(U)}
For the segment $\U$, it holds that $\OPT(\U) \geq \ALG(\U)/2$.
\end{observation}

The next lemma lower-bounds costs to \OPT
for a~segment in which \OPT does not flip and
\ALG flips at the end of the segment.

\begin{lemma} \label{OPTNoFlip}
	Assume \ALG serves $\F$ while outputting $\bar{b}$ and  flips to~$b$ at the end of this segment.
	If \OPT does not flip in $\V\W$ then
	one of the two cases holds:
	\begin{enumerate}[i)]
		\item	\label{OPTState=notb}
		\OPT serves $\V\W$ in state $\bar{b}$ and
		$\OPT(\V\W) = \mis(\V\W) \geq \mis(\V)+\lambda$
		\item	\label{OPTState=b}
		\OPT serves $\V\W$ in state $b$ and
		$\OPT(\V\W) \geq \lambda$.
		
	\end{enumerate}
	
\end{lemma}
\begin{proof}
	Regardless of \OPT's state,
	for $\W$, it incurs
	$\OPT(\W) \geq  \lambda$ by Corollary~\ref{cor:OPT(W)}.	
	If \OPT serves the entire $\V\W$ in state $\bar{b}$ then
	both algorithms \OPT and \ALG pay mismatches to~$\bar{b}$'s
	in this segment and $\OPT(\V\W) = \mis(\V\W)$,
	which concludes Lemma~\myref{OPTNoFlip}{OPTState=notb}.
	Otherwise,
	\OPT serves the entire $\V\W$ in state $b$ and
	possibly pays no cost for $\V$ and at least $\lambda$ for $\W$,
	which concludes Lemma~\myref{OPTNoFlip}{OPTState=b}.
\end{proof}

\noindent
The next lemma lower-bounds \OPT's cost for $\V\W$ when
\OPT does flip in this segment.

\begin{lemma} \label{OPTFlips}
	Assume \ALG flips to~$b$ at the end of~$\F$.	
	If \OPT flips in $\V\W$
	then one of the two cases applies:
	\begin{enumerate}[i)]
		\item \label{flipTob}
		\OPT flips to~$b$ in $\V\W$ and
		$\OPT(\V\W) \geq \min\{\mis(\V), \lambda \} + \costmig$,
		
		\item \label{flipToNotb}
		\OPT flips to~$\bar{b}$ in $\V\W$ and
		$\OPT(\V\W) \geq \max\{\mis(\V\W)/2-\lambda, 0\} + \costmig$.
		
	\end{enumerate}
	
\end{lemma}
\begin{proof}
	If \OPT flips more than once in $\V\W$ then
	$\OPT(\V\W) \geq 2\costmig$.
	Since $\lambda \leq \costmig$, we have
	$\OPT(\V\W) \geq 2\costmig \geq \min\{\mis(\V),\lambda \} + \costmig$
	concluding Lemma~\myref{OPTFlips}{flipTob}.
	Since
	$\mis(\V\W) < 4\lambda$, we have
	$\mis(\V\W)/2 -\lambda < \lambda \leq \costmig$.
	Thus, if \OPT flips more than once in $\V\W$ then
	$\OPT(\V\W) \geq 2\costmig \geq \max\{\mis(\V\W)-2\lambda, 0\}  + \costmig$ and Lemma~\myref{OPTFlips}{flipToNotb} holds.
	Hence, in the remainder, we assume \OPT flips only once in $\V\W$.

	\begin{itemize}
	\item
		\textbf{\OPT flips to $\bm{b}$ in $\V\W$.}
		If \OPT flips in $\V$ then it does not flip in $\W$ and
		by Corollary~\ref{cor:OPT(W)},
		$\OPT(\W) \geq \lambda$.
		Hence,
		$\OPT(\V\W) \geq \costmig+\lambda
		\geq \costmig + \min\{\mis(\V), \lambda \}$.
		Otherwise,
		\OPT serves~$\V$ in the state $\bar{b}=s$ and it flips to~$b$ in $\W$ paying $\costmig$.
		Then $\OPT(\V) = n_b(\V) =\mis(\V)$, and
		Lemma~\myref{OPTFlips}{flipTob} follows from
		$\OPT(\V\W) \geq \mis(\V) + \costmig
		\geq \min\{\mis(\V), \lambda \} + \costmig
		.$
		
		\item
		\textbf{\OPT flips to $\bm{\bar{b}}$ in \bm{$\V\W$}.}
		If \OPT flips in $\V$ then it serves $\W$ without flipping and by Corollary~\ref{cor:OPT(W)} it pays at least $\lambda$ for $\W$.
		Using $\mis(\V\W) < 4\lambda$ (Observation~\ref{obs:ALG(VW)}), we obtain
		\[
		\OPT(\V\W) \geq \costmig + \lambda
		\geq \costmig + \max\{\mis(\V\W)/2 - \lambda, 0\}
		.\]
		Otherwise,
		\OPT serves the entire $\V$ in state~$b$ (possibly for free) and flips to~$\bar{b}$ in $\W$.
		We lower-bound the cost of mismatches in $\W$ after \OPT flips to~$\bar{b}$ as follows.
		If $n_b(\V) \geq 1$
		then there must exist at least $\lambda+1$ many $\bar{b}$-requests
		between the last $b$-request in $\V$ and \ALG's flip to~$b$ at~$\tau$.
		Assume this is not the case and $n_{\bar{b}}(\W) \leq \lambda$.
		Therefore $n_b(\W) = 3\lambda - n_{\bar{b}}(\W) \geq 2\lambda$ and
		$ n_b(\V\W) \geq 2\lambda+1$.
		This implies that \ALG flips to~$b$ earlier than $\tau$ in $\W$,
		contradicting our assumption.
		Therefore, at least $\lambda+1$ many $\bar{b}$'s occur in $\V\W$
		between the last $b$-request in $\V$ and \ALG's flip to~$b$.
		
		Next, we lower-bound the number of mismatches incurred by \OPT (in $\W_{\tau}$) after it flips to~$\bar{b}$.
		Let $\sigma_p=1$ be the $(\lambda+1)$th $\bar{b}$-request in $\V\W$
		(which exists as shown).
		Either \OPT flips to $\bar{b}$ after $\sigma_p$,
		that is, after paying at least $\lambda+1$ mismatches to the $\lambda+1$ many $\bar{b}$-requests in $\V\W$,
		or it flips to $\bar{b}$ before serving $\sigma_p$.
		In the latter case,
		\OPT pays mismatches to the remaining $b$-requests
		(occurring after $\sigma_p$) in $\W$.
		Let $x$ be the number of $b$-requests in $\V\W$
		after $\sigma_p$.
		Then the number of $b$-requests in $\V\W$
		and before $\sigma_p$
		is $n_b(\V\W) - x < 2\lambda$,
		otherwise \ALG would flip to~$b$ earlier than $p < \tau$,
		contradicting our assumption,
		which implies
		$x > n_b(\V\W) - 2\lambda = \mis(\V\W) - 2\lambda
		\geq \max\{\mis(\V\W)/2 - \lambda, 0\}$.
  \end{itemize}
	Therefore in any case,
	\OPT pays at least $\mis(\V\W)/2 - \lambda$
	mismatches to~$b$-requests after it flips to~$\bar{b}$,
	which concludes Lemma~\myref{OPTFlips}{flipToNotb}.
\end{proof}

Recall that when the flip to~$b$ at~$\tau$ is triggered by Rule~1,
the window $\W_\tau$ consists of~$2\lambda$ many $b$'s and $\lambda$ many $\bar{b}$'s.
If $|\F| < 3\lambda$, i.e., it is \emph{short}, then $\F$ may contain only a~subset of
these $b$-requests.
That is, $n_b(\F) \leq n_b(W_\tau) = 2\lambda$.
The next lemma states that this subset cannot be too small,
which later is used to deduce a~significant cost for $\OPT$.
\begin{lemma} \label{lemma:atLeastLambda}
	Assume the flip (to~$b$) at the end of the $\F$ occurs by Rule~1.
	If also the preceding flip (to~$\bar{b}$) occurs by Rule~1 then
	$n_b(\F) \geq \lambda$.
\end{lemma}
\begin{proof}
	Let $\tau$ and $\tau' < \tau$ denote times (i.e., request indices), respectively,
	at the end of $\F$ and at the end of the segment preceding $\F$ in~(\ref{eq:partitioning}).	 
	Assume for contradiction that $x:=n_b(\F) < \lambda$.
	Since the flip at~$\tau$ occurs by Rule~1,
	$\W_{\tau}$ must contain exactly $2\lambda$ many $b$-requests.
	Then the remaining $2\lambda - x > \lambda$  of these $b$-requests
	must be in~$\W_{\tau'}$, that is,
	$n_{b}(\W_{\tau'}) > \lambda$.
	Hence, the number of $b$-requests in $\W_{\tau}$
	is $3\lambda - n_b(\W_{\tau'}) < 2\lambda$,
	contradicting our assumption that the flip at~$\tau'$ occurs by Rule~1.
\end{proof}
%
\begin{comment}
\subparagraph{Charging Scheme.}
We denote the set of times (i.e.~request indices)
at which \ALG flips its output
by $F:= \{t_\first, l_1,r_1, \dots, l_m,r_m,  t_\last\}$.
More specifically,
\ALG starts serving requests by outputting 0 and
flips for the first time at $t_\first$, which is a~flip to~1.
Each pair $(l_i,r_i)$ denotes the $i$th pair of times
after $t_\first$,
s.t.~\ALG flips to~0  at $l_i$
and to~1 at $r_i$.
Let~$m$ denote the number of these pairs.
%For the case $m=0$, we let $r_0 := t_\first$.
After the last pair,
\ALG may perform a~last flip to~0 at~$t_\last$.
We consider the concatenation of two consecutive inter-flip segments
$\L_i \R_i$, where
$\L_i = \seq(r_{i-1},l_i]$ is the \emph{left} part and
$\R_i = \seq(l_i,r_i]$ is the \emph{right} part.
We refer to the pair of segments as a~\emph{phase}
and denote it by~$\P_i$.
\end{comment}

We refer to the segment $\P_i = \L_i\R_i$ as a~\emph{phase}.
We analyze the total cost to \OPT and the competitive ratio for each phase independently.
Theorem~\ref{thm:ALG} aggregates these individual ratios into one competitive ratio for~$\seq$.
 
Fix any phase $\P_i$.
The last flip prior to $\L_i$ is a~flip to~1 which always occurs by Rule~1.
Then Lemma~\ref{lemma:atLeastLambda} implies $n_0(\L) \geq \lambda$.
Next, we conclude the ratio within a~phase.

\begin{lemma} \label{lemma:phaseRatio}
  For any phase $\P_i := \L_i\R_i$, we have
  $\ALG(\P_i) / \OPT(\P_i) \leq 4 + \frac{2\costmig}{\lambda}$.
\end{lemma}
\begin{proof}
Let~$l$ denote the time \ALG flips to~0 at the end of $\L :=\L_i$ and let $r$ denote the time it flips to~1 at the end of $\R:=\R_i$.
By these definitions,
$\ALG(\P_i) = \ALG(\L) + \ALG(\R)$ and
$\OPT(\P_i) = \OPT(\L) + \OPT(\R)$.
If $\L$ is long, that is $|\L| \geq 3\lambda$,
we partition it as $\L = \U_l \V_l \W_l$ where
$|\U_l|$ is a~multiple of $3\lambda$, $|\V_l| < 3\lambda$,
and $|\W_l| = 3\lambda$.
We partition $\R$ as $\R = \U_r \V_r \W_r$  in a~similar way whenever $|\R| \geq 3\lambda$.

We bound the costs under four major cases of $|\L|$ and $|\R|$,
as depicted in Figure~\ref{fig:phase}.

\begin{itemize}
  \item 
\textbf{Both parts are short.}
As already shown, it always holds that $n_0(\L) \geq \lambda$.
Since $|\L| < 3\lambda$,
Lemma~\ref{lemma:Rule2|P_i|} implies that the flip at~$l$ occurs by Rule~1.
Then,
Lemma~\ref{lemma:atLeastLambda} guarantees
$n_1(\R) \geq \lambda$.
Hence,
\OPT (regardless of its state) either does not flip and incurs at least $\lambda$ mismatches in this phase or it flips (paying $\costmig$ and possibly less mismatches).
In any case,
$\OPT(\P_i) \geq \min\{\lambda,\costmig\} = \lambda$.
Since $\mis(\W_l), \mis(\W_r) \leq 2\lambda + \costmig$,
we have
\begin{align}
  \frac{\ALG(\P_i)}{\OPT(\P_i)} =
  \frac{\ALG(\L) + \ALG(\R)}{\OPT(\P_i)}
  \leq \frac{\mis(\W_l) + \mis(\W_r) + 2\costmig}{\OPT(\P_i)}
  \leq \frac{4\lambda + 2\costmig}{\lambda}
  \label{UB:shortshort}
  .
\end{align}

\begin{figure}[t]
	\centering
	\begin{minipage}{.73\textwidth}
		\includegraphics[scale=1]{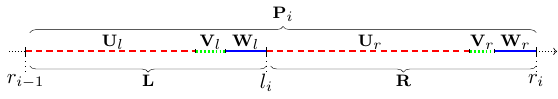}
		\subcaption{} \label{fig:longlong}
	\end{minipage}%
	\begin{minipage}{.27\textwidth}
	\includegraphics[scale=1]{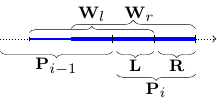}
	\subcaption{} \label{fig:shortshort}
\end{minipage}

	\begin{minipage}{.5\textwidth}
		\includegraphics[scale=1]{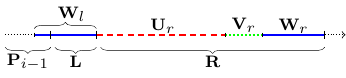}
		\subcaption{} \label{fig:shortLong}
	\end{minipage}%
	\begin{minipage}{.5\textwidth}
		\includegraphics[scale=1]{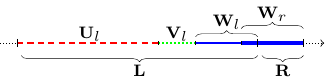}
		\subcaption{} \label{fig:longShort}
	\end{minipage}

	\caption{
		\ALG flips to 0 at the end of segment $\L$ and flips to 1 at the end of segment $\R$.
		In \ref{fig:longlong},
		both parts $\L$ and $\R$ are long, implying that both windows~$\W_l$ and~$\W_r$ are (fully) contained in their respective parts and therefore in the phase~$\P_i=\L\R$.
		In~\ref{fig:shortshort}, both windows are short, 
		implying that possibly both windows are only partially contained in~$\P_i$, hence overlapping the preceding phase~$\P_{i-1}$.
		In \ref{fig:shortLong},
		the left part is short, and only the left window may overlap the preceding phase.
		In~\ref{fig:longShort}, the left part is long, implying both windows are (fully) contained in~$\P_i$.
	}
	\label{fig:phase}
\end{figure}

\item \textbf{Both parts are long.}
See Figure~\ref{fig:longlong} for an illustration.
Regardless of \OPT's actions in~$\L$,
we have
$\OPT(\U_l) \geq \mis(\U_l)/2$ from Observation~\ref{obs:OPT(U)}, and
$\OPT(\W_t) \geq \lambda$ from Corollary~\ref{cor:OPT(W)}.
Therefore,
$\OPT(\L) = \OPT(\U_l) + \OPT(\V_l\W_l)
  \geq \mis(\U_l) / 2 + \lambda$.
Using a~similar inequality for $\R$,
we obtain
\[
\OPT(\P_i) = \OPT(\L)+\OPT(\R) \geq \mis(\U_l) / 2 + \mis(\U_r) / 2+ 2\lambda
.\]
By (Observation~\ref{obs:ALG(VW)}),
\ALG's mismatch cost in the remainder of $\L$ is
	$\mis(\V_l\W_l)  < 4\lambda$,
 and similarly $\mis(\V_r\W_r)  < 4\lambda$.
Then,
\begin{align*}
	\ALG(\P_i) &= \mis(\U_l) + \mis(\V_l\W_l) + \costmig
  + \mis(\U_r)  + \mis(\V_r\W_r)  + \costmig	\\
  &\leq \mis(\U_l) + \mis(\U_r)  + 8\lambda + 2\costmig
  ,~\text{and}
\end{align*}
\begin{align}
  \frac{\ALG(\P_i)}{\OPT(\P_i)}
  \leq
  \frac
  {\mis(\U_l)  + \mis(\U_r)  + 8\lambda + 2\costmig}
  {\mis(\U_l)/2 + \mis(\U_r)/2 + 2\lambda}
  \leq \frac{8\lambda + 2\costmig}{2\lambda}
  =	4+ \frac{\costmig}{\lambda}
  \label{UB:longlong}
  .
\end{align}

\item
\textbf{Only the right part is long.}
Then, $\L$ is a~subsegment of $\W_l$
(see Figure~\ref{fig:shortLong}) and
therefore $\ALG(\L) \leq \ALG(\W) \leq 2\lambda + \costmig$.
Thus, we have,
\begin{align*}
	\ALG(\P_i) &= \ALG(\L) + \ALG(\U\V\W) \leq 
	(2\lambda+\costmig)+ \mis(\U_r) + \mis(\V_r) + 2\lambda + \costmig \\
	&\leq \mis(\U_r) + \mis(\V_r) + 4\lambda + 2\costmig
	.
\end{align*}
We distinguish several cases for its cost in~$\R$.

\begin{itemize}
\item
\textbf{Case~1.1.}
\OPT enters $\V_r\W_r$ in state 0 and serves it in state 0.
In this case,
Lemma~\myref{OPTNoFlip}{OPTState=notb} applies
which together with Observation~\ref{obs:OPT(U)} yields
\[
\OPT(\R) = \OPT(\U_r) + \mis(\V_r) + \mis(\W_r)
 \geq \mis(\U_r)/2 + \mis(\V_r) + \lambda
  ,\]
and thereby
\begin{align}
  \frac{\ALG(\P_i)}{\OPT(\P_i)}
  \leq	\frac
  { \mis(\U_r) + \mis(\V_r) + 4\lambda + 2\costmig}
  { \mis(\U_r)/2 + \mis(\V_r) + \lambda}
  \leq
  \frac{4\lambda + 2\costmig }{\lambda}
  =  4 + \frac{2\costmig}{\lambda}
  \label{UB:shortlongIn0}
  .
\end{align}

\item
\textbf{Case~1.2.}
\OPT enters $\V_r\W_r$ in state 0 and flips to~1 in $\V_r\W_r$.
In this case,
Lemma~\myref{OPTFlips}{flipTob} applies which
together with Observation~\ref{obs:OPT(U)} yields
\[
\OPT(R) = \OPT(\U_r) + \OPT(\V_r\W_r)
  \geq \mis(\U_2)/2 + \min\{ \mis(\V_r), \lambda\} + \costmig
  .\]
By distinguishing the two cases
$\mis(\V_r) < \lambda$ and $\mis(\V_r) \geq \lambda$,
we obtain
\begin{equation}
\label{UB:shortlong_fliptTo1}
\begin{split}
  \frac{\ALG(\P_i)}{\OPT(\P_i)} &\leq
  \frac
  { \mis(\U_r) + \mis(\V_r) + 4\lambda + 2\costmig}
  { \mis(\U_r)/2 + \min\{ \mis(\V_r),  \lambda\} + \costmig} 
  \leq \max\{
  \frac{4\lambda + 2\costmig }{\costmig},
  \frac{6\lambda + 2\costmig }{\lambda + \costmig}
  \} \\
  &\leq  4 + \frac{2\costmig}{\lambda}
  .
\end{split}
\end{equation}

\item
\textbf{Case~1.3.}
\OPT enters $\V_r\W_r$ in state 1.
Recall that $n_0(\L) \geq \lambda$.
Either \OPT serves the entire $\L$ in state 1 and
$\OPT(\L) = n_0(\L) \geq \lambda$,
or it flips in $\L$ at cost $\costmig \geq \lambda$.

If \OPT serves $\V_r\W_r$ in state 1 then by
Corollary~\ref{cor:OPT(W)},
$\OPT(\V_r\W_r) \geq \OPT(\W_r) \geq \lambda$.
Else,
it flips to~0 and $\OPT(\V_r\W_r) \geq \costmig \geq \lambda$.
Therefore in any case of \OPT's action, we have
$\OPT(\V_r\W_r) \geq \lambda$,
and by applying Observation~\ref{obs:OPT(U)} to~$\U_r$,
we obtain
\[
\OPT(\P_i) \geq \OPT(\L) + \OPT(\U_r) + \OPT(\V_r\W_r)
  \geq \lambda + \mis(\U_r)/2 + \lambda
  .\]
Since $\ALG(\V_r\W_r) < 4\lambda+\costmig$, we have
\begin{align}
  \frac{\ALG(\P_i)} {\OPT(\P_i)}
  \leq
  \frac
  {\mis(\U_r)+ 6\lambda + 2\costmig}
  {\mis(\U_r)/2 + 2\lambda}
  \leq
  \frac{6\lambda + 2\costmig }{2\lambda}
  = 3 + \frac{\costmig}{\lambda}
  \label{UB:shortlong_entersVWIn1}
  .
\end{align}
\end{itemize}
\item
\textbf{Only the left part is long.}
See Figure~\ref{fig:longShort} for an illustration.
We distinguish cases of \OPT's state when in enters $\V_l\W_l$.
Note that \ALG's flip to~0 at $l$ possibly occurs by Rule~2.

\begin{itemize}
\item
\textbf{Case~2.1.}
\OPT enters $\V_l\W_l$ in state 1 and serves it in state 1.
This case is symmetric to Case 1.1 and
the upper bound (\ref{UB:shortlongIn0}) holds analogously,
after swapping the usage of $\R$ and $\L$, as well as $0$'s and $1$'s.

\item
\textbf{Case~2.2.}
\OPT enters $\V_r\W_r$ in state 1 and flips to~0 later in this segment.
This case is symmetric to Case 1.2 and
the upper bound (\ref{UB:shortlong_fliptTo1}) holds analogously,
after swapping the usage of $\R$ and $\L$, as well as $0$'s and $1$'s.

\item
\textbf{Case~2.3.}
\OPT enters $\V_l\W_l$ in state 0.
Hence, serving 1-requests is costly for \OPT unless it flips.
First, we lower bound the number of these 1-requests.
Recall that if the flip to~0 is by Rule~2 then
possibly~$n_1(\R) < \lambda$ and \OPT may incur an insignificant cost for~$\R$.
However, \OPT incurs a~significant cost for~$\W_l\R$ as we show next.

Since $|\L| \geq 3\lambda$,
the window $\W_r$ is contained in the segment $\W_l\R$.
Since the flip at the end of $\R$ is by Rule~1,
$n_1(\W_l\R) \geq n_1(\W_r) = 2\lambda$.
Either \OPT serves the entire segment $\V_l\W_l\R$ in state 0 and incurs
$\OPT(\V_l\W_l\R) \geq n_1(\W_l\R)  \geq 2\lambda $ mismatches,
or it flips to~1 in this segment.
If \OPT flips to~1 in $\R$ then from Corollary~\ref{cor:OPT(W)} and (Observation~\ref{obs:ALG(VW)}), 
we obtain
\[
\OPT(\V_l\W_l\R) \geq \OPT(\W_l) +  \OPT(\R) \geq \lambda + \costmig  \geq 2\lambda
	\geq \max\{\mis(\V_l\W_l)/2-\lambda, 0\} + \lambda
.\]
Else, \OPT flips to~1 in $\V_l\W_l$,
and by applying Lemma~\myref{OPTFlips}{flipToNotb} to the segment $\L$, we obtain
\begin{equation}	\label{eq:OPT(V_lW_l)}
	\OPT(\V_l\W_l) \geq \max\{\mis(\V_l\W_l)/2-\lambda, 0\} + \lambda
	.
\end{equation}
Thus, (\ref{eq:OPT(V_lW_l)}) holds in any case where \OPT enters $\V_l\W_l$ in state~0.
By applying Observation~\ref{obs:OPT(U)} to $\U_l$,
we obtain
\begin{equation} \label{eq:entersVWIn0}
\OPT(\P_i) \geq \OPT(\U_l) + \OPT(\V_l\W_l\R) \geq \mis(\U_l)/2
  + \max\{\mis(\V_l\W_l)/2- \lambda, 0\} + \lambda.
\end{equation}
If $\mis(\V_l\W_l) < 2\lambda$
	then (\ref{eq:entersVWIn0}) reduces to
$\OPT(\P_i) \geq \mis(\U_r)/2  + \lambda$.
Using
\[\ALG(\P_i)
  \leq \mis(\U_l)+\mis(\V_l\W_l)+ \costmig + (2\lambda + \costmig)
  \leq\mis(\U_l) + 4\lambda + 2\costmig,
  ~\text{and}
  \]
\begin{align}
  \frac{\ALG(\P_i)} {\OPT(\P_i)}
  \leq
  \frac
  {\mis(\U_r)+ 4\lambda + 2\costmig}
  {\mis(\U_r)/2 + \lambda}
  \leq
  \frac{4\lambda + 2\costmig }{\lambda}
  = 4 + \frac{2\costmig}{\lambda}
  \label{UB:longshort_entersVWIn0_A}
  .
\end{align}
Else,
$\mis(\V_l\W_l) \geq 2\lambda$ holds and (\ref{eq:entersVWIn0}) reduces to
$\OPT(\V_l\W_l) \geq (\mis(\V_l\W_l)/2 - \lambda)+\lambda$.
Let~$z := \mis(\V_l\W_l)/2- \lambda$.
Then,
$ \mis(\V_l\W_ l) = 2z+2\lambda$, and
\[
\ALG(\P_i)
  \leq \mis(\U_l)+\mis(\V_l\W_l)+ 2\lambda + 2\costmig
  =  \mis(\U_l) + (2z+2\lambda) + 2\lambda + 2\costmig
  ,~\text{and}
\]
\begin{align}
  \frac{\ALG(\P_i)} {\OPT(\P_i)}
  \leq
  \frac
  {\mis(\U_r)+ 2z + 4\lambda + 2\costmig}
  {\mis(\U_r)/2 + z + \lambda}
  \leq
  \frac{4\lambda + 2\costmig }{\lambda}
  = 4 + \frac{2\costmig}{\lambda}
  \label{UB:longshort_entersVWIn0_B}
  .
\end{align}
\end{itemize}
\end{itemize}
From all upper bounds (\ref{UB:shortshort})--(\ref{UB:longshort_entersVWIn0_B}),
we conclude
$\ALG(\P_i) / \OPT(\P_i)
  \leq 4 + \frac{2\costmig}{\lambda}$.
\end{proof}

\begin{theorem} \label{thm:ALG}
  For any input sequence $\seq$, any $T \geq 6$ and $1 \leq  \lambda \leq \costmig$,
  we have
  \[{
  	\ALG(\seq) \leq
        \Bigl(4+ \frac{2\costmig}{\lambda}\Bigr) \OPT(\seq) + 6\costmig}
    .\]
\end{theorem}
\begin{proof}
Assume \ALG performs at least one flip in $\seq$.
Consider the partitioning~(\ref{eq:partitioning}).
From  Lemma~\ref{lemma:phaseRatio},
we have
$\ALG(\L_i\R_i) \leq (4+2\costmig / \lambda) \OPT(\L_i\R_i)$.

In the remainder,
we upper bound the ratio separately for $\R_\first$ and $\S_\last$.
Recall that \ALG starts serving $\seq$ by outputting~0
until it flips to~1  at for the first time at the end of~$\R_\first$.
We distinguish two cases for this segment.

\begin{itemize}
\item
\textbf{$\R_\first$  is short.}
In this case, $\ALG(\R_\first) \leq 2\lambda+\costmig$.
\OPT begins in state 0 and
either  pays~$2\lambda$ mismatches to $1$-requests in this segment,
or it flips to~1 paying~$\costmig \geq \lambda$.
Therefore in any case and by distinguishing $\costmig \leq 2\lambda$ and $\costmig > 2\lambda$,
 we have
\begin{align}	\label{UB:P_firstShort}
  \frac{\ALG(\R_\first)}{\OPT(\R_\first)}
  \leq
  \frac{\mis(\W_{t_\first})+\costmig}{ \min \{2\lambda ,\costmig \}}
  \leq
  \frac{2\lambda+\costmig}{ \min \{2\lambda ,\costmig \}}
  \leq \max\Bigl\{ 4, \frac{\costmig}{\lambda} \Bigr\}
  .
\end{align}

\item
\textbf{$\R_\first$  is long.}
Consider the partitioning
$\R_\first = \U \V \W$,
where $|\U|$ is a~multiple of $3\lambda$,
$|\V| < 3\lambda$ and $|\W|=3\lambda$.
If \OPT serves the entire $\V\W$ in one state (either 0 or 1) then
$\OPT(\V\W) \geq \OPT(\W) \geq \lambda$
(Corollary~\ref{cor:OPT(W)}).
Otherwise, \OPT flips in $\V\W$ and $\OPT(\V\W) \geq \costmig \geq \lambda$.
By applying Observation~\ref{obs:OPT(U)} to $\U$ and using (Observation~\ref{obs:ALG(VW)}),
we obtain
\begin{align}
  \frac{\ALG(\R_\first)}{\OPT(\R_\first)}
  \leq
  \frac{
    \mis(\U) + \mis(\V) + 4\lambda + \costmig
  }{
    \mis(\U) /2 + \lambda
  }
  \leq
  \frac{4\lambda+\costmig}{\lambda} = 4 + \frac{\costmig}{\lambda}
  .
  \label{UB:P_firstLong_noflip}
\end{align}
\end{itemize}
Finally,
we bound costs for the ending segment $\S_\last$ which starts after the last flip to~1 and ends with the last request in~$\seq$.
If $|\S_\last| < 3\lambda$ then
\ALG may perform a~last flip to~0 in this segment and pay up to $3\lambda$ mismatches.
That is,
$\ALG(\S_\last) \leq 3\lambda + \costmig \leq 4\costmig$.
Else,~$|\S_\last | \geq 3\lambda$ and we partition it as
	$\S_\last = \U' \V'$, where~$|\U'|$ is a~multiple of $3\lambda$ and $|\V'|<3\lambda$.
\ALG possibly flips to~0 one last time in a~subsegment $\S'$ of $\S_\last$.
Since $\mis(\V') < 2\lambda \leq 2\costmig$,
and by Observation~\ref{obs:OPT(U)}, we obtain
\begin{align}	\label{UB:P_last}
\ALG(\A_\last) = \ALG(\U' \setminus \S') + + \mis(\S') + \mis(\V') + \costmig
\leq 2\OPT(\U') +  6\costmig
.
\end{align}

From (\ref{UB:P_firstShort}), (\ref{UB:P_firstLong_noflip}), (\ref{UB:P_last}),
and by applying Lemma~\ref{lemma:phaseRatio} to each phase $\P_i$,
we conclude
$\ALG(\seq) \leq (4 + \frac{2\costmig}{\lambda})\OPT(\seq) + 6\costmig$,
where the additive is a~consequence of (\ref{UB:P_last}).
\end{proof}

\noindent
To obtain the proof of Theorem~\ref{thm:6-competitive}, we apply Theorem~\ref{thm:ALG} to all phases of the input sequence.

The challenges in designing time-local algorithms arise from two sources: (1) the algorithm can make decisions only based on the most recent input history, and (2) the algorithm is unaware of its current configuration.
Note that the latter challenge is not present in \emph{memoryless} online algorithms~\cite{Chrobak1991}.
To tackle these challenges, we highlight a useful technique of \emph{tracking} distinguished subsequences of the input as they recede in the visible horizon further toward the past.
Implementing consistent tracking is straightforward in the clocked setting;
we study an example where tracking requests issued at certain points in time is sufficient.

Tracking is significantly more challenging to implement in the non-clocked setting.
Without knowing the temporal position of requests,
requests originating from the same node are indistinguishable.
We overcome this limitation by tracking distinguishable subsequences of requests instead
of single requests.

To conclude, we summarize the analytical and synthetical results for online file migration in Figure~\ref{fig:result-overview}.

\begin{figure}[t]
	\centering
	\includegraphics[width=0.8\textwidth]{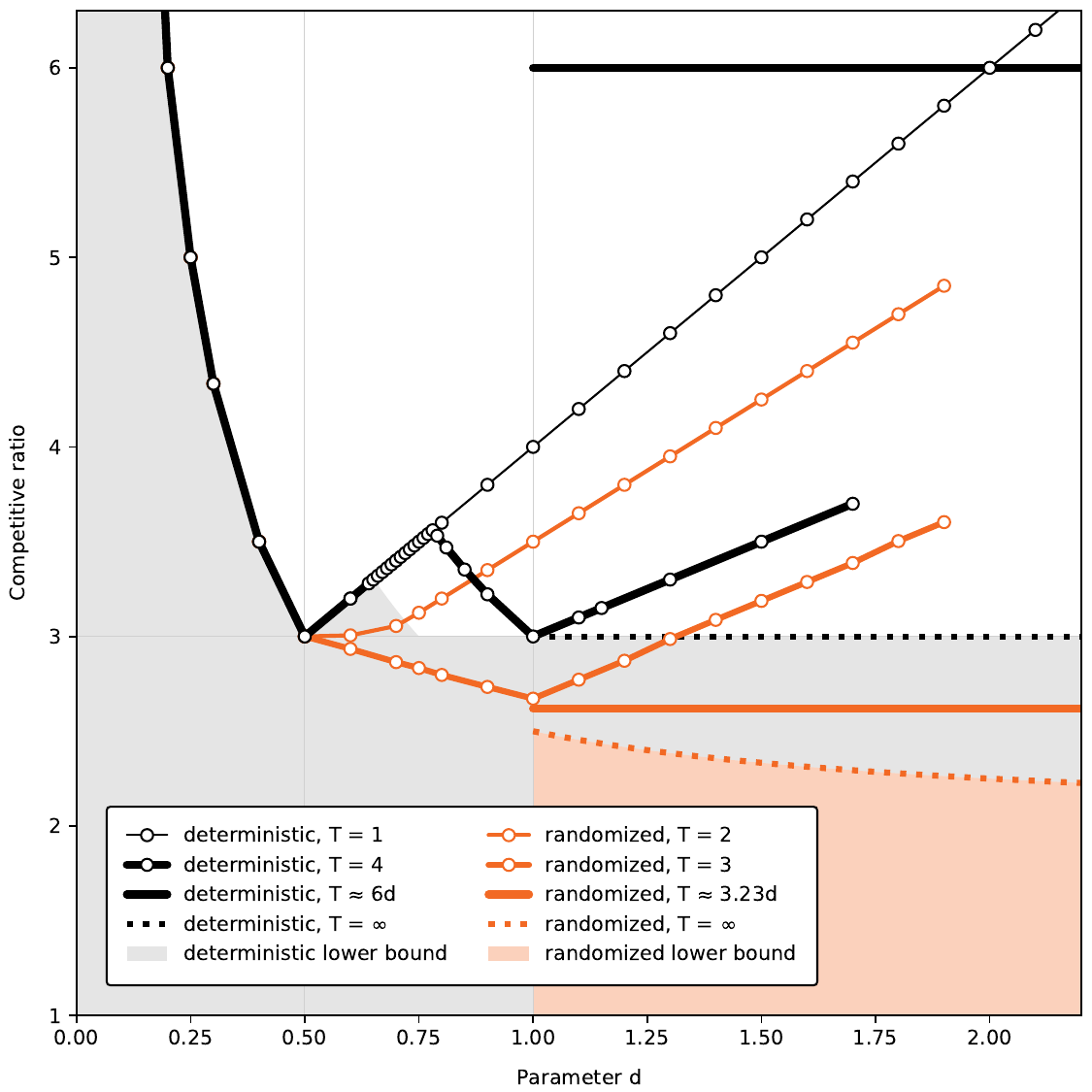}
	\caption{Upper and lower bounds for the online file migration problem. The visualization includes the upper bounds from synthesis (Table~\ref{table:synthesis-results}) for small values of $T$, as well as the upper and lower bounds from the analytical case study (Section~\ref{sec:file-migration}).}
	\label{fig:result-overview}
\end{figure}

\section{Conclusions}
\label{sec:conclusions}

In this work, we initiated the systematic study of time-local online algorithms.
The power of these algorithms comes from the imposed restrictions their decision-making horizon, enabling e.g. automated synthesis, fault recovery, and simplicity of behavior.
Despite their fundamental limitations, we saw that time-local algorithms can solve non-trivial online problems competitively.
We derived the synthesis method for time-local online algorithms, and applied it to the file migration problem, for which problem we followed with analytical and empirical studies.
Furthermore, we defined more general time-local algorithms that include distributed algorithms on paths, and discussed transferability of results between online and distributed algorithms. 

\newpage
\bibliographystyle{plainurl}
\bibliography{time-local}

\vfill

\pagebreak
\appendix

\end{document}